\newtheorem{theorem}{Theorem}
\newtheorem{lemma}{Lemma}
\newtheorem{assump}{Assumption}
\newtheorem{definition}{Definition}
\newcommand{\ones}{{{\mathbbm{1}}}}
\newcommand{\ind}[1]{\ones_{\{#1\}}}
\newcommand{\bE}{\mathbb{E}}
\newcommand{\cS}{\mathcal{S}}
\newcommand{\cD}{\mathcal{D}}
\newcommand{\GQ}{\mathrm{GQ}}
\newcommand{\LR}{\mathrm{LR}}
\newcommand{\off}{\mathrm{off}}
\newcommand{\h}[1]{\widehat{#1}}
\newcommand{\til}[1]{\widetilde{#1}}
\newcommand{\RNum}[1]{\uppercase\expandafter{\romannumeral #1\relax}}
\newcommand{\vertiii}[1]{{\left\vert\kern-0.25ex\left\vert\kern-0.25ex\left\vert #1 
		\right\vert\kern-0.25ex\right\vert\kern-0.25ex\right\vert}}
\DeclareMathOperator*{\argmax}{arg\,max}
\DeclareMathOperator*{\argmin}{arg\,min}
\newcommand\BibTeX{{\rmfamily B\kern-.05em \textsc{i\kern-.025em b}\kern-.08em
T\kern-.1667em\lower.7ex\hbox{E}\kern-.125emX}}
\begin{document}

    \title{Nonparanormal Graph Quilting with Applications to Calcium Imaging}
    \author{Andersen Chang$^{*1}$ \and Lili Zheng$^{*2}$ \and Gautam Dasarthy$^3$ \and Genevera I. Allen$^{2,4,5,6}$}

    \date{\small%
        $^1$Department of Neuroscience, Baylor College of Medicine\\%
        $^2$Department of Electrical and Computer Engineering, Rice University\\%
        $^3$School of Electrical, Computer and Energy Engineering, Arizona State University\\%
        $^4$Department of Computer Science, Rice University\\%
        $^5$Department of Statistics, Rice University\\%
        $^6$Jan and Dan Duncan Neurological Research Institute, Texas Children’s Hospital
    }

  \maketitle
  
\footnotetext[1]{Co-first authors on this manuscript.}

\begin{abstract}
Probabilistic graphical models have become an important unsupervised learning tool for detecting network structures for a variety of problems, including the estimation of functional neuronal connectivity from two-photon calcium imaging data. However, in the context of calcium imaging, technological limitations only allow for partially overlapping layers of neurons in a brain region of interest to be jointly recorded. In this case, graph estimation for the full data requires inference for edge selection when many pairs of neurons have no simultaneous observations. This leads to the Graph Quilting problem, which seeks to estimate a graph in the presence of block-missingness in the empirical covariance matrix. Solutions for the Graph Quilting problem have previously been studied for Gaussian graphical models; however, neural activity data from calcium imaging are often non-Gaussian, thereby requiring a more flexible modeling approach. Thus, in our work, we study two approaches for nonparanormal Graph Quilting based on the Gaussian copula graphical model, namely a maximum likelihood procedure and a low-rank based framework. We provide theoretical guarantees on edge recovery for the former approach under similar conditions to those previously developed for the Gaussian setting, and we investigate the empirical performance of both methods using simulations as well as real data calcium imaging data. Our approaches yield more scientifically meaningful functional connectivity estimates compared to existing Gaussian graph quilting methods for this calcium imaging data set.
\end{abstract}

\begin{keywords}
    Graphical models; Graph quilting; Nonparanormal graphical models; Rank-based correlation; Functional connectivity; Covariance completion
\end{keywords}

\maketitle

\section{Introduction} \label{sec:intro}

Probabilistic graphical models are a popular unsupervised learning technique for inference and sparse edge selection in network estimation, and are an important tool for understanding dependency structures in high-dimensional data. Graphical modeling approaches have been employed for data analysis in a wide variety of areas, including neuroscience \citep{yatsenko2015, carrillo2021, Subramaniyan2018}, genomics \citep{allenliu2013, Hartemink2000}, and sensor networks \citep{Chen2006, Dasarathy2016}. One particular research problem where graphical models are applied is in the study of functional neuronal connectivity, defined as statistical relationships between the activities of neurons in the brain, from two-photon calcium imaging data \citep{horwitz2003, fingelkurts2005}. Functional neuronal connectivity is of particular interest in the realm of neuroscience as a mechanism for describing how neuronal circuits in the brain are organized and for finding patterns in neuronal activity that underlies how information is passed between different regions of the brain \citep{feldt2011}; it may also serve well as a proxy for deriving the structural synaptic connectivity between individual neurons in the brain, as well as provide insight into the relationship between the two \citep{deco2014}. 

Due to technological limitations, in many calcium imaging experiments, the full set of neurons in a brain region of interest are never simultaneously observed. Instead, scans of functional activity are recorded in sequential, partially overlapping layers, each containing only a subset of the population of neurons \citep{greinberger2012, berens2017, pnevmatikakis2016}. This data collection scheme leads to block-missingness in the ensuing computation of the empirical covariance matrix for the functional recording data on all neurons, as the joint activity of many pairs are never observed; we demonstrate this visually in Figures \ref{fig:gqdat} and \ref{fig:gqsig}. Therefore, in order to estimate a graphical model for functional neuronal connectivity for the full set of observed neurons, the edge structure in the missing portion must be inferred using the information from the existing contemporaneous joint observations. The estimation of a graph in the presence of block missing entries in the covariance matrix is known as the Graph Quilting problem \citep{vinci2019graph}. 

Previously, several approaches have been developed in order to address the Graph Quilting problem. \citep{vinci2019graph} originally proposed the Maximum Determinant (MAD$_{\GQ}$) algorithm, which first finds an $\ell_1$-regularized maximum likelihood estimate of the graph on the observed portion of the covariance matrix with the constraint that no edges are affiliated with unobserved elements, then applies thresholding and Schur complements on the result in order to identify graph edges and a minimal superset of edges. Later, \citep{chang2022low} introduced the Low Rank Graph Quilting (LRGQ) approach, which utilizes a two-step procedure of covariance imputation under the assumption of the existence of a low-rank representation of the underlying covariance matrix, followed by the application of the graphical Lasso. Both of the aforementioned Graph Quilting procedures have shown positive results for graph imputation and edge recovery in the presence of block-wise missingness in the entries of the full covariance matrices. However, their scope is currently restricted to the case where the full underlying data follow a multivariate Gaussian distribution. In particular, as the sample covariance is the sufficient statistic for Gaussian graphical models, these methods and their theory are both developed with a focus on accurately estimating or imputing the covariance in the Gaussian setting. The empirical studies in these papers for validating their methods also focused only on data fit to standard Gaussian graphical models. Another related work, \citep{massa2013effectiveness}, also proposes and empirically studies an approach of combining multiple graphical models of subsets of variables to a full graph, which are in similar spirit to the MAD$_{\GQ}$ algorithm in \citep{vinci2019graph}, with a focus on the Gaussian setting.
\begin{figure}[t]
\begin{center}
    \begin{subfigure}[t]{0.25\linewidth}
    \centering
        \includegraphics[width=\linewidth]{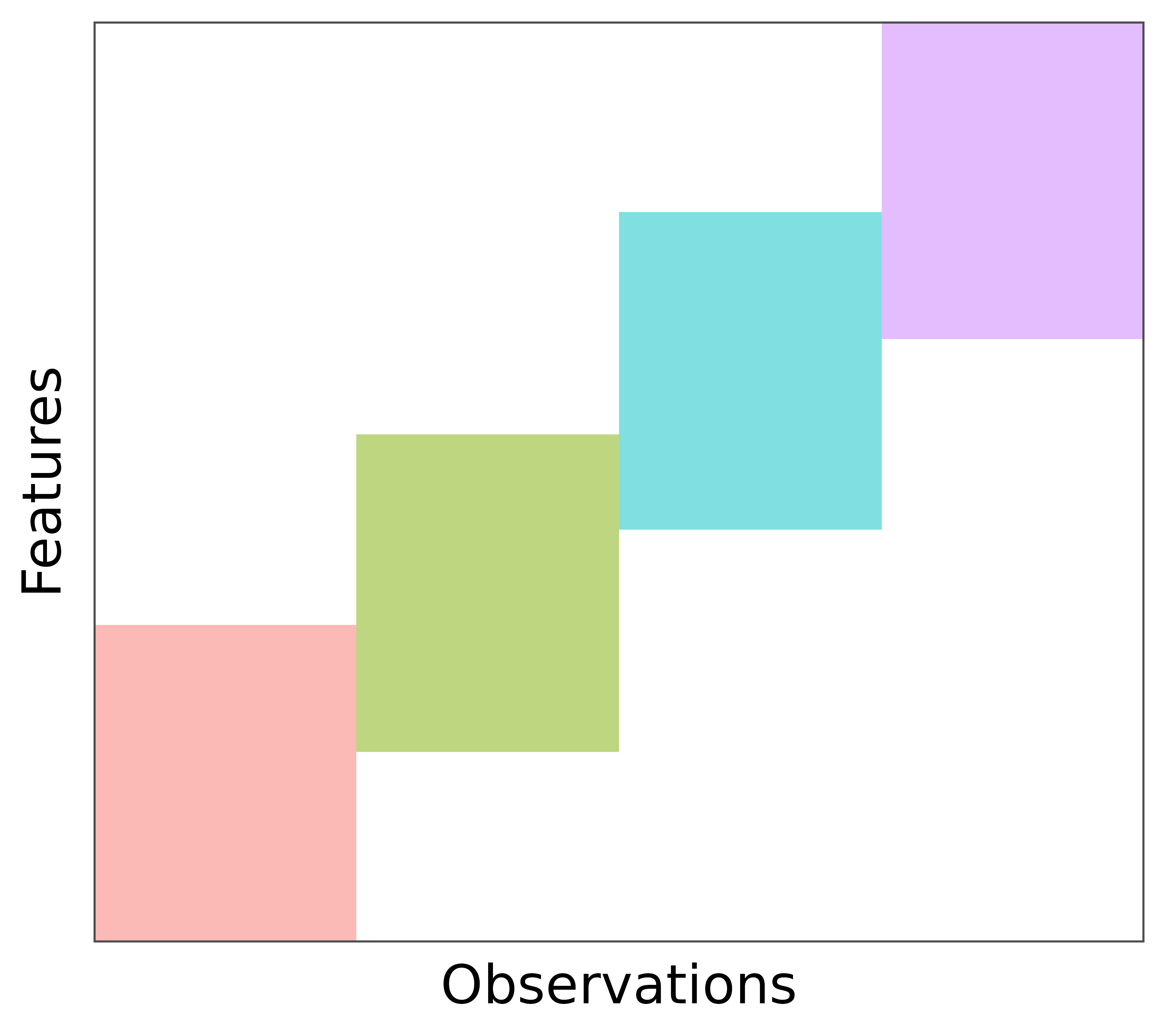}
        \caption{}
    \label{fig:gqdat}
    \end{subfigure}
    \hspace{0.5in}
    \begin{subfigure}[t]{0.25\linewidth}
    \centering
        \includegraphics[width=\linewidth]{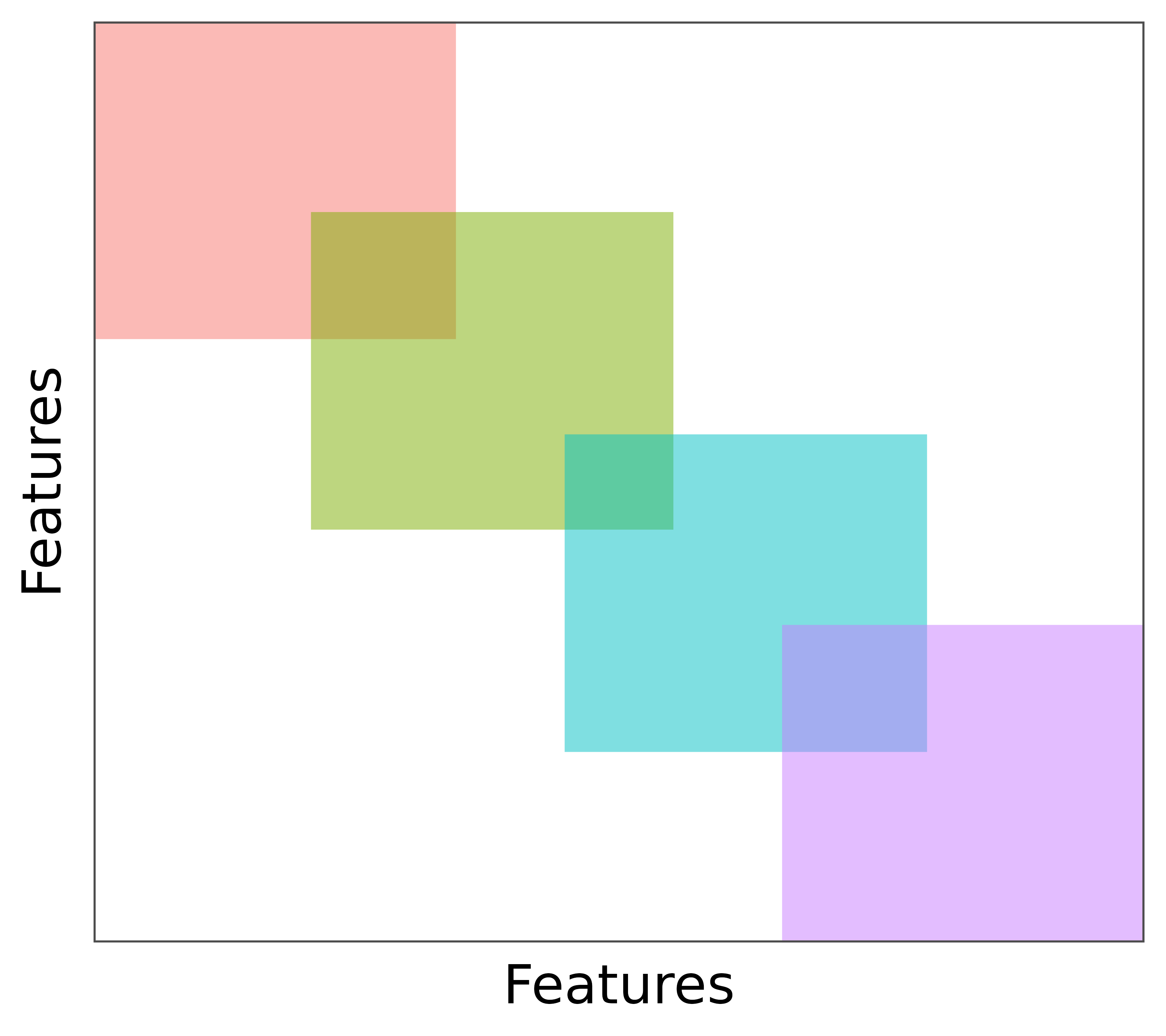}
        \caption{}
    \label{fig:gqsig}
    \end{subfigure} \\
    \begin{subfigure}[t]{0.42\linewidth}
    \centering
        \includegraphics[width=\linewidth]{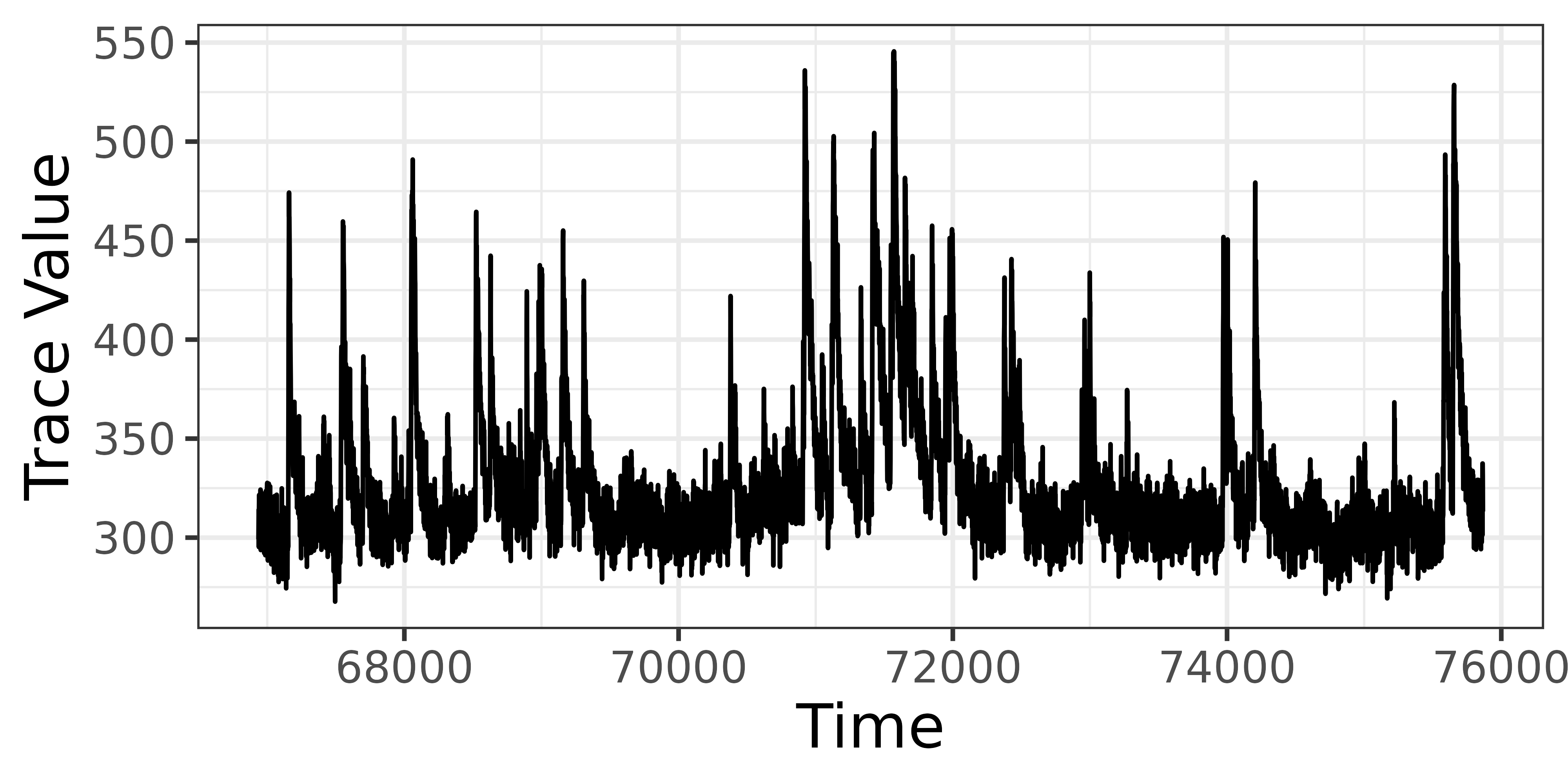}
        \caption{}
    \label{fig:flt}
    \end{subfigure}
    \begin{subfigure}[t]{0.42\linewidth}
    \centering
        \includegraphics[width=\linewidth]{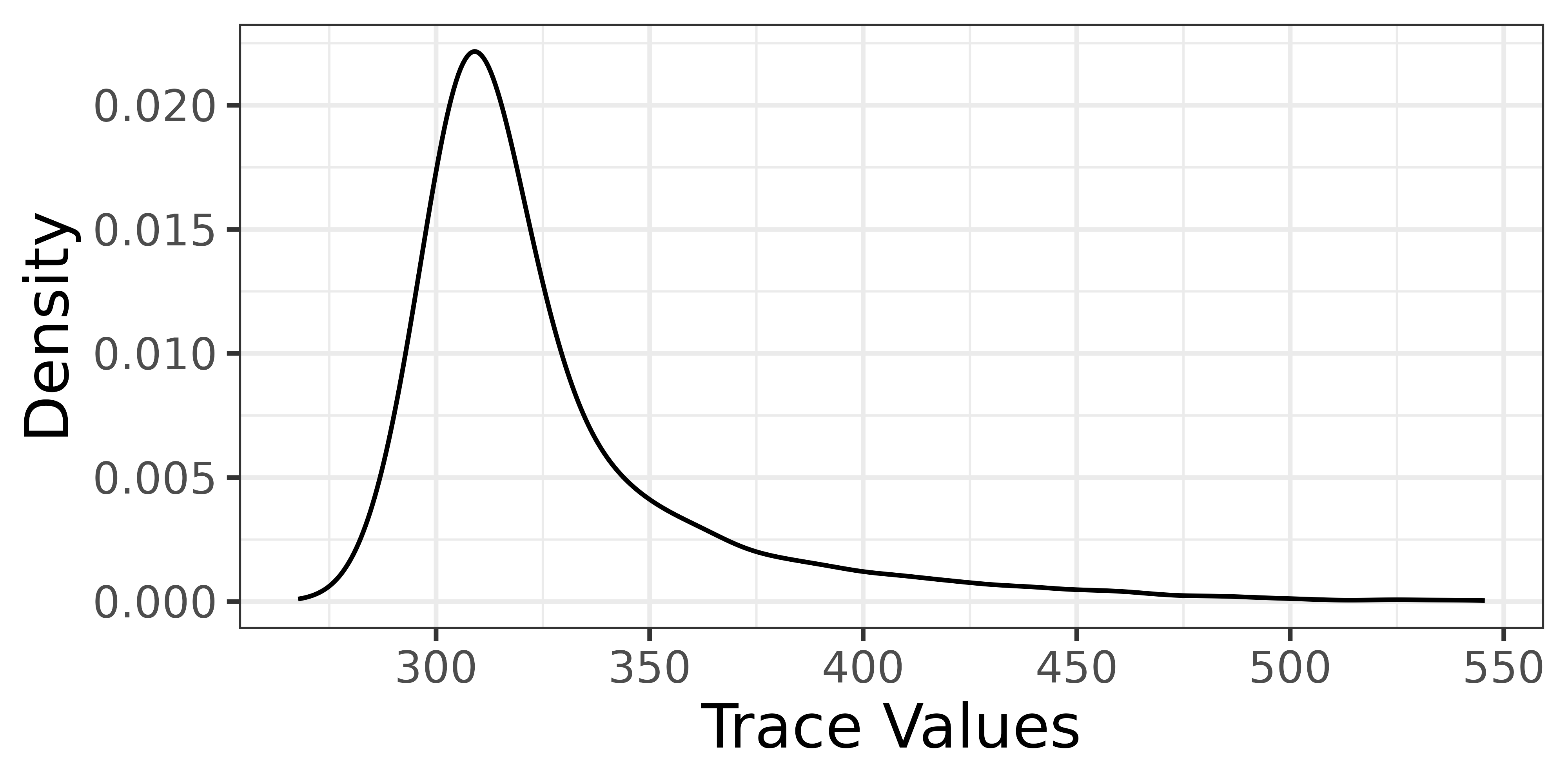}
        \caption{}
    \label{fig:fld}
    \end{subfigure}
\end{center}
\caption{\textbf{(a)}: An example of a typical schema which requires Graph Quilting. Here, we have four subsets of features in a full feature set, observed across different sessions; each rectangle represents the features and observations in a particular block. \textbf{(b)}: The corresponding incomplete empirical covariance matrix for the same four patches of nodes from (a). The parts of the covariance matrix not covered by any block are never jointly observed; graph edges in this part of the covariance must be inferred from existing entries. \textbf{(c)}: Example of raw fluorescence trace for one neuron recorded in a calcium imaging experiment, showing neuronal activity over time. \textbf{(d)}: Resulting empirical density of distribution of trace values.}
    \label{fig:fl}
\end{figure}

Functional activity data from calcium imaging, on the other hand, tends to be highly non-Gaussian. In particular, as shown in Figures \ref{fig:flt} and \ref{fig:fld}, the distribution of the fluorescence traces, which represent the firing activity of each individual neuron from calcium imaging data, is heavily right skewed with extreme positive outliers. To address this problem in the realm of graphical models, approaches which assume that the data follow a parametric distribution other than the Gaussian have been developed. For example, many have explored elliptical distributions \citep{vogel2011elliptical,finegold2011robust} and general exponential family distributions \citep{yang2015graphical,yang2018semiparametric} when the Gaussian assumption may not be appropriate. For the specific case of estimating functional neuronal connectivity, the most common alternative to a Gaussian-based method is the Poisson graphical model \citep{yang2013poisson}, which assumes that the number of spikes of each neuron across time bins follows a Poisson distribution \citep{xue2014pgm, vinci2018pgm}. The Poisson distribution is also the basis for a variety of other approaches for estimating functional connectivity outside of the graphical modeling paradigm, such as the linear-nonlinear model \citep{pillow2008, stevenson2008} and time series modeling with inter-spike intervals \citep{masud2011isi}. More recently, \citep{chang2021sgm} proposed a different class of graphical models specifically for functional neuronal connectivity based on the Subbotin distribution in order to capture conditional dependencies between extreme values in the activity traces, which directly represent neuronal activity. Nonlinear correlation methods, which employ information theory metrics such as mutual information and joint entropy, have also been used to derive functional neuronal connectivity from calcium imaging data \citep{garofolo2009mi,stetter2012}. 

While the aforementioned non-Gaussian functional neuronal connectivity models have shown encouraging results for the calcium imaging application, they may not be naturally suitable for the Graph Quilting problem. In order to leverage existing ideas \citep{vinci2019graph,chang2022low} for solving this problem, we still hope to operate on some pairwise similarity matrix like the empirical covariance calculated from the observations rather than the raw data itself. Therefore, we consider one particular alternative to the Gaussian graphical model that can be applied to the Graph Quilting problem in the non-Gaussian setting and which has been used for the analysis of functional neuronal connectivity from calcium imaging data: the nonparanormal graphical model \citep{liu2009npn,10.1214/10-AOAS397}. This model assumes a Gaussian copula for the joint distribution of all features while allowing arbitrary marginal distributions for each feature. Although some prior methods \citep{liu2009npn} estimate a univariate transform for each feature before graph estimation \citep{huge}, it has also been shown that nonparanormal graphical models can be estimated by applying the graphical Lasso \citep{originalglasso} on a transformed rank-based correlation matrix \citep{liu2012high, harris2013npn, he2017npn, xue2012}; thus, this particular procedure could be utilized in the Graph Quilting setting. 

In this paper, we study two potential Graph Quilting techniques for nonparanormal graphical models for graph estimation with non-Gaussian data under block-missingness in the pairwise observation set. Our methods utilize the MAD$_{\GQ}$ and LRGQ approaches of \citep{vinci2019graph} and \citep{chang2022low}, as applied to rank-based correlation matrices; we call the nonparanormal adaptations MAD$_{\GQ}$-NPN and LRGQ-NPN, respectively. While these previous works have studied the nonparanormal graphical model using rank-based correlation matrices, such as those mentioned above, ours is the first to consider the performance of these methods in the presence of block-missingness in the empirical covariance matrix. In particular, because the missingness pattern is highly structured rather than random and the number of observed entries is relatively sparse, the performance of nonparanormal Graph Quilting will not follow directly from prior results on nonparanormal graphical models. Therefore, we explore below the nonparanormal Graph Quilting approaches both from a theoretical and empirical perspective in order to ascertain whether they are appropriate for the Graph Quilting setting.

The rest of this paper is structured as follows. In Section 2, we describe the MAD$_{\GQ}$-NPN and LRGQ-NPN procedures for nonparanormal Graph Quilting. We then show in Section 3 the conditions under which the MAD$_{\GQ}$-NPN achieves exact graph recovery for observed node pairs and minimal superset recovery for unobserved node pairs. In Section 4, we present simulation studies which compare the performances of the MAD$_{\GQ}$-NPN procedure and one of the LRGQ-NPN algorithms on data from non-Gaussian parametric distributions. Lastly, in Section 5, we investigate the efficacy of the nonparanormal Graph Quilting methods for estimating functional neuronal connectivity networks on real-world calcium imaging data sets in comparison to each other as well as to the Graph Quilting methods with Gaussian assumptions.

\section{Nonparanormal Graph Quilting} \label{sec:meth}

\subsection{Problem Set-up} 
We consider the nonparanormal graphical model \citep{liu2009npn}, where each sample vector $X_i\in \mathbb{R}^p$ follows a nonparanormal distribution $\mathrm{NPN}_p(f,\Sigma)$, formally defined in Definition \ref{def:npn}.
\begin{definition}\label{def:npn}
Let $f=(f_1,\dots,f_p)$ be an ordered list of monotone univariate functions and let $\Sigma\in \mathbb{R}^{p\times p}$ be a positive-definite correlation matrix. Then a random vector $X=(X_1,\dots,X_p)^\top$ follows the nonparanormal distribution $\mathrm{NPN}_p(f,\Sigma)$ if $(f_1(X_1),\dots,f_p(X_p))^\top\sim\mathcal{N}(0,\Sigma)$.
\end{definition}
Define the precision matrix of the latent Gaussian vector $(f_1(X_1),\dots,f_p(X_p))$ as $\Theta = \Sigma^{-1}$; from this, we consider the following the graph structure that encodes the conditional dependence relationship among $X$: $\mathcal{G} = (V,E)$, $V=[p]$,$E=\{(j,k): j,k\in [p], \Theta_{j,k}\neq 0\}$. The primary interest is to recover the unknown graph structure, or equivalently, the non-zero patterns in the off-diagonal elements of $\Theta$. When one observes i.i.d. samples $X_1,\dots,X_n$ from this nonparanormal model, rank-based methods \citep{liu2012high,harris2013npn} have been proposed to learn the graph structure with selection consistency guarantees.

In the Graph Quilting setting, however, we do not have access to all features for each sample. Instead, we observe $K$ partially overlapping blocks $V_1,\dots,V_K$, each of size $p_k = |V_k|<p$. The corresponding observed data matrix is denoted by $X^{(k)}\in \mathbb{R}^{n_k\times p_k}$ where $n_k$ is the sample size for block $k$, and our goal is to learn the graph structure of all $p$ features from $\{X^{(k)}\}_{k=1}^K$. We define the jointly observed feature pairs as $O = \{(i,j): \exists 1\leq k\leq K, i,j\in V_k\}\subset[p]\times[p]$, and let $O^c=[p]\times [p]\backslash O$ denote the feature pairs that has no joint measurement. Thus, we need to infer the graph structure of the missing portion $O^c$ of the corresponding covariance matrix from the known pairwise observations in $O$. 

\subsection{Nonparanormal Graph Quilting Methods}\label{sec:method}
In this section, we extend two prior graph quilting approaches, the MAD$_{\GQ}$ (MAximum Determinant graph-quilting) approach \citep{vinci2019graph} and the LRGQ (Low-rank Graph Quilting) approach \citep{chang2022low}, from the Gaussian graphical model setting to the nonparanormal setting. We first note that both approaches take an estimate of the covariance/correlation $\Sigma_{i,j}$ of the Gaussian variable pairs $(i,j)$ in $O\subset [p]\times [p]$ as the input. In the nonparanormal setting, however, we need the covariance of the latent Gaussian variables $f_1(X_1),\dots,f_p(X_p)$, which can be estimated using rank-based correlations. For our work, we consider two rank-based correlations: Spearman's rho and Kendall's tau. For each block $1\leq k\leq K$, let $r^{(k)}_{i,j}$ be the rank of $X^{(k)}_{i,j}$ among $X^{(k)}_{1,j},\dots, X^{(k)}_{n_k,j}$. Also, let $\bar{r}^{(k)}_j = \frac{1}{n_k}\sum_{i=1}^{n_k} r^{(k)}_{i,j}=\frac{n_k+1}{2}$. From these, we compute Spearman's rho $\widehat{\rho}^{(k)}\in \mathbb{R}^{p_k\times p_k}$ and Kendall's tau $\widehat{\tau}^{(k)}\in \mathbb{R}^{p_k\times p_k}$ correlations as follows:
\begin{equation}
\begin{split}
    \widehat{\rho}^{(k)}_{j,l} &= \frac{\sum_{i=1}^{n_k}(r^{(k)}_{i,j} - \bar{r}^{(k)}_{j})(r^{(k)}_{i,l} - \bar{r}^{(k)}_{l})}{\sqrt{\sum_{i=1}^{n_k}(r^{(k)}_{i,j} - \bar{r}^{(k)}_{j})^2\sum_{i=1}^{n_k}(r^{(k)}_{i,l} - \bar{r}^{(k)}_{l})^2}},\\
    \widehat{\tau}^{(k)}_{j,l} &= \frac{2}{n_k(n_k-1)}\sum_{1\leq i<i'\leq n_k}\mathrm{sign}((r^{(k)}_{i,j} - r^{(k)}_{i',j})(r^{(k)}_{i,l} - r^{(k)}_{i',l})).
\end{split}
\end{equation}
To obtain a $p\times p$ correlation matrix, we combine the $K$ rank correlations $\widehat{\rho}^{(k)}$, $\widehat{\tau}^{(k)}$ together. Specifically, for any index $j\in [p]$, if $j\in V_k$, let $j_k$ be its corresponding index in $V_k$. That is, $(V_k)_{j_k} = j$. Then we formally define $\widehat{\rho}, \widehat{\tau}\in \mathbb{R}^{p\times p}$ as follow: for $(j,l)\in O,$ 
\begin{equation}\label{eq:rho_tau}
    \widehat{\rho}_{j,l} = \frac{\sum_{k=1}^K\ind{j,l\in V_k}\widehat{\rho}^{(k)}_{j_k,l_k}}{\sum_{k=1}^K\ind{j,l\in V_k}}, \quad \widehat{\tau}_{j,l} = \frac{\sum_{k=1}^K\ind{j,l\in V_k}\widehat{\tau}^{(k)}_{j_k,l_k}}{\sum_{k=1}^K\ind{j,l\in V_k}};
\end{equation}
otherwise, $\widehat{\rho}_{j,l}=\widehat{\tau}_{j,l}=0$.
As in \citep{liu2012high}, since both Spearman's rho and Kendall's tau rank correlations
are biased for the population correlation, we apply elementwise sine function transformations to $\widehat{\rho}$ and $\widehat{\tau}$: for any $(j,l)\in [p]\times [p]$,
\begin{equation}\label{eq:Sigma_rank}
    \widehat{\Sigma}^{(\rho)}_{j,l} = 2\sin\left(\frac{\pi}{6}\widehat{\rho}_{j,l}\right),\quad \widehat{\Sigma}^{(\tau)}_{j,l} = \sin\left(\frac{\pi}{2}\widehat{\tau}_{j,l}\right).
\end{equation}
Both initial observed correlation matrices take the value of zero in $O^c$, 
and $\widehat{\Sigma}^{(\rho)}_O, \widehat{\Sigma}^{(\tau)}_{O}$ serve as estimates for the population correlation $\Sigma_{O}$. 
In the following, we describe the MAD$_{\GQ}$-NPN and LRGQ-NPN approaches based on $\widehat{\Sigma}^{(\rho)}_O$ or $\widehat{\Sigma}^{(\tau)}_O$. 
\paragraph{MAD$_{\GQ}$-NPN approach} This approach is based on a partially observed likelihood, following the MAD$_{\GQ}$ method in \citep{vinci2019graph}. In a nutshell, the algorithm consists of two steps: (i) first estimating the edge set within $O$ by minimizing an $\ell_1$-regularized log-likelihood loss under the constraint of having no edges in $O^c$, followed by a thresholding step that eliminates the bias effect of this artificial constraint; (ii) then using the Schur complement to estimate a superset of the edges in $O^c$, based on potential graph distortions in $O$ caused by edges in $O^c$. The full procedure is summarized in Algorithm \ref{alg:MADgq}. Similar to the original MAD$_{\GQ}$ method, Algorithm \ref{alg:MADgq} involves three tuning parameters: the $\ell_1$ regularization parameter $\Lambda\in \mathbb{R}^{p\times p}$ in \eqref{eq:MADgqLasso}, and two thresholding parameters $\tau_1$ and $\tau_2$ for obtaining edge sets in $O$ and $O^c$. For $\Lambda$, one can either simply let $\Lambda_{j,l}=\lambda$ for all $1\leq j,\,l\leq p$, or choose $\Lambda_{j,l}=C_0\sqrt{\frac{\log p}{n_{j,l}}}$ where $n_{j,l}$ is the joint sample size of node $j,\,l$ together. The specific value of $\lambda$ or the scaling factor $C_0$ can then be chosen based on the extended Bayesian information criterion \citep{foygel2010extended,gao2012tuning}, which is commonly used in the graphical model literature and computationally efficient. For the first thresholding parameter $\tau_1$ that defines the edge set $\h{E}_O\subset O$, we can use the stability selection approach \citep{liu2010stability} by examining the stability of $\h{E}_O$ when running step 1 and 2 on randomly subsampled data. For the second thresholding parameter $\tau_2$, motivated by our theoretical results presented in Section \ref{sec:theory}, we can let $\tau_2=c\sqrt{\frac{\log p}{n_k}}$ for a small constant $c>0$. Throughout our empirical studies in this paper, we use $c=0.05$ which turns out to work reasonably well in different settings.

Under certain assumptions on the graph structure and edge strength, \citep{vinci2019graph} shows that, in the Gaussian setting, the original MAD$_{\GQ}$ approach with sample covariance as the input is guaranteed to recover the graph structure in $O$ and find a minimum edge superset in $O^c$ (see Definition \ref{def:minimal_superset}). As we will show in Section \ref{sec:theory}, such results can also be extended to our Algorithm \ref{alg:MADgq} in the nonparanormal setting.

\begin{algorithm}[t]
\noindent{\textbf{Input}}: Block indices $\{V_k\}_{k=1}^K$, Data sets $\{X^{(k)}\in \mathbb{R}^{n_k\times p_k}\}_{k=1}^K$, 
thresholding parameters $\tau_1,\,\tau_2>0$ 
\vspace{-1mm}
 \begin{enumerate}[leftmargin=*]
    \item Compute the rank-based correlation matrix $\widehat{\Sigma}_O$ as $\widehat{\Sigma}^{(\tau)}_O$ or $\widehat{\Sigma}^{(\rho)}_O$ in \eqref{eq:Sigma_rank}.
    \item Solve the MAD$_{\text{GQlasso}}$-NPN problem
    \begin{equation}\label{eq:MADgqLasso}
        \widehat{\widetilde{\Theta}}=\argmax_{\Theta\succ 0,\Theta_{O^c}=0}\log\det\Theta-\langle\h{\Sigma}_O,\Theta_O\rangle+\|\Lambda\circ \Theta\|_{1,\text{off}}.
    \end{equation}
    \item Find the edge set in $O$: $\widehat{E}_O = \{(i,j)\in O: i\neq j,|\h{\widetilde{\Theta}}_{i,j}|>\tau_1\}$
    \item For $k=1,\dots,K$, compute the Schur complement
    \begin{equation}\label{eq:MADgqSchur}
        \h{\widetilde{\Theta}}^{(k)} = \h{\widetilde{\Theta}}_{V_k,V_k} - \h{\widetilde{\Theta}}_{V_k,V_k^c}\h{\widetilde{\Theta}}_{V_k^c,V_k^c}^{-1}\h{\widetilde{\Theta}}_{V_k^c,V_k}.
    \end{equation}
    \item Find the node set
    \begin{equation}\label{eq:superset_node}
        \h{W}_{\tau_1,\tau_2}=\{i\in V: \forall k~s.t.~i\in V_k,~\exists j\neq i,~\tau_2<|\h{\widetilde{\Theta}}^{(k)}_{i,j}|<\tau_1\}
    \end{equation}
    \item Obtain a superset of edges in $O^c$: $\h{E}_{O^c} = O^c\cap(\h{W}_{\tau_1,\tau_2}\times \h{W}_{\tau_1,\tau_2})$.
\end{enumerate}
\vspace{-2mm}
 \textbf{Output}: $\h{E} = \h{E}_O \cup \h{E}_{O^c}$
\caption{MAD$_{\GQ}$-NPN Algorithm}
 \label{alg:MADgq}
\end{algorithm}

\paragraph{LRGQ-NPN approach}
Another approach we consider is extending the LRGQ method proposed by \citep{chang2022low}, which makes use of the potential low-rankness in covariance matrices. This approach is motivated by the commonly seen approximate low-rankness in many neuroscience data sets and exhibits promising performance in these applications. The LRGQ approach is a two-step procedure that first completes the covariance estimate from the partially observed sample covariance in $O$ using appropriate low-rank matrix completion methods, then uses the completed covariance matrix as input to the graphical Lasso algorithm to produce a graph estimate. For the imputation step, \citep{chang2022low} introduces three different imputation methods, including a block-wise SVD method (BSVDgq), a nuclear norm minimization method (NNMgq), and a non-convex gradient descent method for low-rank factorization (LRFgq). In the nonparanormal setting, we propose to apply the aforementioned two-step approach on a rank-based correlation matrix instead of the sample covariance; we summarize the full procedure in Algorithm \ref{alg:LRgq}.  Below, throughout our empirical investigation for the LRGQ-NPN approach, we focus on the BSVDgq approach in \citep{chang2022low} for the imputation step. More details on its implementation are included in the Appendix. Algorithm \ref{alg:LRgq} also involves two tuning parameters: the rank $r$ and the regularization parameter $\Lambda$. Similar to \citep{chang2022low} and other prior works on low-rank matrix completion, one can choose the appropriate rank $r$ using the Bayesian information criterion (BIC) \citep{burnham2004multimodel}. While for $\Lambda$, same as the MAD$_{\GQ}$ procedure, we can let each entry $\Lambda_{j,k}$ be the same or scale proportionally to $\sqrt{\frac{\log p}{n_{j,k}}}$. The specific values of $\Lambda$ can then be chosen based on the stability criterion \citep{liu2010stability}.

As has been pointed out in many prior works on graphical models \citep{ravikumar2011high,liu2012high}, the consistency of graphical Lasso hinges on sufficiently accurate covariance estimation in terms of each entry, which implies that the two-step procedure is guaranteed to give a consistent graph selection provided that the imputation step leads to small $\|\h{\Sigma}^{\LR} - \Sigma\|_{\infty}$. The BSVDgq method in particular has been shown in \citep{chang2022low} to achieve sufficient imputation accuracy that eventually leads to graph selection consistency in the Gaussian setting. However, such theoretical results are based on delicate spectral analysis of sample covariance matrices, which is extremely challenging to be extended to the non-linear rank-based correlations. We leave the theoretical investigation for the LRGQ-NPN approach as future work and instead focus on empirical validation here instead.


\begin{algorithm}[t]
\noindent{\textbf{Input}}: Block indices $\{V_k\}_{k=1}^K$, Data sets $\{X^{(k)}\in \mathbb{R}^{n_k\times p_k}\}_{k=1}^K$, 
rank of full covariance matrix $r$. 
\vspace{-1mm}
 \begin{enumerate}[leftmargin=*]
    \item Compute the rank-based correlation matrix $\widehat{\Sigma}_O$ as $\widehat{\Sigma}^{(\tau)}_O$ or $\widehat{\Sigma}^{(\rho)}_O$ in \eqref{eq:Sigma_rank}.
    \item Obtain imputed covariance matrix $\h{\Sigma}^{\LR}$ using low-rank covariance completion methods.
    \item Apply the graphical Lasso to the imputed full covariance matrix $\h{\Sigma}^{\LR}$ in order to obtain the estimated graph 
    $$\widehat{\Theta}^{\LR} = \argmin_{\Theta \succ 0}  \, \log\det\Theta-\langle\h{\Sigma}^{\LR},\Theta\rangle+\|\Lambda\circ \Theta\|_{1,\text{off}}$$
\end{enumerate}
\vspace{-2mm}
 \textbf{Output}: $\h{E} = \{(i,j):i\neq j\in [p],\h{\Theta}^{\LR}_{i,j}\neq 0\}$
\caption{LRGQ-NPN Algorithm}
 \label{alg:LRgq}
\end{algorithm}

\section{Edge Recovery of MAD\texorpdfstring{$_{\GQ}$}{Lg}-NPN} \label{sec:theory}
Although the idea of using rank-based correlations to substitute the sample covariance / Pearson correlation matrix is straightforward, it is not clear if this idea would succeed in the nonparanormal graph quilting setting. In this section, we examine the theoretical properties of the MAD$_{\GQ}$-NPN approach (Algorithm \ref{alg:MADgq}), giving an affirmative answer to this question by showing similar theoretical guarantees as those in \citep{vinci2019graph} established for Gaussian data. Specifically, we will show that under similar assumptions, Algorithm \ref{alg:MADgq} exactly recovers the edge set in $O$ and a minimal superset of edges in $O^c$. Recall the MAD$_{\text{GQlasso}}$-NPN solution and its Schur complements defined in Algorithm \ref{alg:MADgq}, here let's first define their population versions as follows:
\begin{equation}\label{eq:Theta_tilde_def}
    \widetilde{\Theta} = \argmax_{\Theta\succ 0,\Theta_{O^c}=0}\log\det \Theta - \sum_{(i,j)\in O}\Theta_{i,j}\Sigma_{i,j},
\end{equation}
$\widetilde{\Sigma} = \widetilde{\Theta}^{-1}$, and for $k=1,\dots,K$,
\begin{equation}\label{eq:MADgqSchur_population}
    \widetilde{\Theta}^{(k)} = \widetilde{\Theta}_{V_k,V_k} - \widetilde{\Theta}_{V_k,V_k^c}\widetilde{\Theta}_{V_k^c,V_k^c}^{-1}\widetilde{\Theta}_{V_k^c,V_k}.
\end{equation}
\citep{vinci2019graph} has established nice theoretical properties for graph selection if one has access to $\til{\Theta}$ and $\widetilde{\Theta}^{(k)}$. These results form the basis of our theory, and much of our analysis is devoted to showing the proximity of our finite sample solutions $\h{\til{\Theta}}$ in \eqref{eq:MADgqLasso} to its population counterpart $\widetilde{\Theta}$, and $\h{\tilde{\Theta}}^{(k)}$ in \eqref{eq:MADgqSchur} to $\til{\Theta}^{(k)}$. In the following, we will follow the terminology and assumptions developed for the population theory in \citep{vinci2019graph}.

As has been shown in \citep{vinci2019graph}, hard thresholding $\widetilde{\Theta}_O$ can lead to the graph recovery in $O$ under certain assumptions on graph signals. 
While for the edges in $O^c$, the exact edge set $E_{O^c}$ is not identifiable in the graph quilting setting. However, it is possible to recover a superset of $E_{O^c}$ based on the distortion in the Schur complements of $\til{\Theta}_{V_k,V_k}$ for each block $k$, created by the out-of-block edges.
Let the distortion created by the out-of-block edges in the block $k$ be $\delta^{(k)}_{i,j} = \Theta_{i,j} - \til{\Theta}^{(k)}_{i,j}$. Given the observed covariance $\Sigma_O$ and these distortion information, we define the following minimal superset of $E_{O^c}$ as the smallest possible set that include all possible edges which cannot be ruled out without further information. 
\begin{definition}[Minimal Superset  of $E_{O^c}$]\label{def:minimal_superset}
Let 
\begin{equation}
    \mathcal{D}_{\rm off}(\Sigma, O)=\big\{(i,j,k):\delta^{(k)}_{ij}\neq 0, i\neq j\big\}
\end{equation}
be the set of known distortions over the off-diagonal elements of the Schur complements $\tilde\Theta^{(1)},...,\tilde\Theta^{(K)}$, and let 
\begin{equation}\label{eq:admiss}
\mathcal{A}_{\rm off}(\Sigma,O):=\left\{\Sigma'\succ 0: ~\Sigma'_O=\Sigma_O,~ \mathcal{D}_{\rm off}(\Sigma',O)=\mathcal{D}_{\rm off}(\Sigma,O)\right\}
\end{equation}
be the set of all positive definite covariance matrices that agree with the observed $\Sigma_O$ and distortions $\mathcal{D}_{\rm off}(\Sigma, O)$. A set $\mathcal{S}_{\rm off}\subseteq O^c$ is the minimal superset of $E_{O^c}$ with respect to $\Sigma_O$ and $\mathcal{D}_{\rm off}(\Sigma,O)$ if it satisfies the following properties:
\begin{enumerate}
\item $\forall\Sigma'\in\mathcal{A}(\Sigma,O,Q)$ we have $E_{O^c}'\subseteq \mathcal{S}_{\rm off}$;
\item $\forall\mathcal{S}' \subsetneq \mathcal{S}_{\rm off}$, $\exists\Sigma'\in\mathcal{A}_{\rm off}(\Sigma,O)$ such that $E'_{O^c} \cap (\mathcal{S}_{\rm off}\setminus \mathcal{S}') \neq \emptyset$.   
\end{enumerate}
\end{definition}

We also define the following quantities that will be useful in our theory.
Let $\nu := \min_{(i,j)\in E_O}|\Theta_{i,j}|$ be the minimum signal in $E_O$, $\delta := \max_{(i,j)\in O,i\neq j}|\Theta_{i,j}-\til{\Theta}_{i,j}|$ be the maximum distortion induced by constraining $\til{\Theta}_{O^c}=0$. Also let $\psi:=\min_{(i,j,k):0<|\tilde\Theta_{ij}^{(k)}|<\delta} \min\{|\tilde\Theta_{i,j}^{(k)}|,\delta
    -|\tilde\Theta_{i,j}^{(k)}|\}$, $d = \max_i\|\Theta_{i,:}\|_0$, $\til{d} = \max_i\|\til{\Theta}_{i,:}\|_0$, $\til{s} = \|\til{\Theta}\|_{0,\off}$ be the number of non-zero off-diagonal elements of $\til{\Theta}$. Some other technical quantities include $\til{\kappa}=\frac{\lambda_{\max}(\til{\Theta})}{\lambda_{\min}(\til{\Theta})}$, $\kappa_{\til{\Sigma}} = \vertiii{\til{\Sigma}}_{\infty}=\max_j\sum_{k=1}^p|\til{\Sigma}_{j,k}|$, $\kappa_{\til{\Gamma}} = \vertiii{(\til{\Gamma}_{S,S})^{-1}}_{\infty}$, where $S$ is the support set of $\til{\Theta}$. Let $H_i = \{j: (i,j)\in O^c\}$, $N_{H_i}(i) = N(i)\cap H_i$ be the neighborhood of $i$ in $H_i$. We say that two nodes $i$ and $j$ are $V$-connected if there is a path in $V$ connecting $i$ and $j$. 
    We also require the following assumptions:
\begin{assump}[Weak distortion compared to signal]\label{assump:distortion_signal}
    We assume that the maximum off-diagonal distortion of the MAD$_{\GQ}$ solution is smaller than half the signal strength in the original precision matrix: $\delta<\frac{\nu}{2}$.
\end{assump}
\noindent As has been proven in Theorem 3.1 in \citep{vinci2019graph}, Assumption \ref{assump:distortion_signal} can be satisfied as long as $\|\Theta_{O^c}\|_{\infty}$ is bounded above by a function of the edge weights within $\Theta_{O}$; this assumption is likely to hold as long as the pairwise observation set includes all edges with strong signal strength. 
\begin{assump}\label{assump:Schur_distortion1}
     For every node $i\in V$ with $N_{H_i}(i)\neq\emptyset$, we have that for every $k$ such that $i\in V_k$, there exists at least one node $j\in V_k\setminus \{i\}$ that is $(H_i\cup\{j\})$-connected to some node in $N_{H_i}(i)$.
\end{assump}
\noindent Assumption \ref{assump:Schur_distortion1} requires that if any node $i$ has an edge in $O^c$, then for any block it belongs to, there exists a path that starts from $i$ and an edge in $O^c$, and eventually back to this block. This assumption ensures that any edge $(i,j)$ in $O^c$ would cause certain distortions in the blocks $i$ and $j$ belong to. Therefore, given the distortion set $\cD_{\off}(\Sigma,O)$, it is possible to identify the node set with edges in $O^c$. However, we can't directly compute the distortion $\delta_{i,j}^{(k)}$, a discrepancy between the unknown $\Theta_{i,j}$ and $\til{\Theta}_{i,j}$. Instead, we can only assume that such discrepancy leads to a small non-zero $\til{\Theta}_{i,j}$. This motivates our following assumption: 
\begin{assump}\label{assump:Schur_distortion2}
    If $\delta_{i,\backslash i}^{(k)}\neq 0$, then there exists $j\neq i$ such that $0<|\tilde\Theta_{ij}^{(k)}|<\delta$.
\end{assump}

\begin{assump}[Incoherence condition]\label{assump:incoh}
    Let $\Gamma = \widetilde{\Sigma}\otimes \widetilde{\Sigma}$, $S = \{(j,l): \widetilde{\Theta}_{j,l}\neq 0\}$. We assume $\max_{e\in O\cap S^c}\|\Gamma_{e,S}\Gamma_{S,S}^{-1}\|_1\leq 1-\alpha$ for some $0<\alpha\leq 1$.
\end{assump}
\noindent Similar incoherence conditions are commonly assumed in the literature of graphical models \citep{ravikumar2011high}.
\begin{assump}[Sufficient block measurements]\label{assump:V}
    The $K$ blocks cover all nodes: $\cup_{k=1}^KV_k = [p]$, and at least one off-diagonal element: $|O|>p$. 
\end{assump}
\noindent This is a mild assumption on the block measurements, which is typically satisfied by our motivating neuroscience applications.
\begin{assump}[Regularization parameter]\label{assump:lambda}
    $\Lambda_{j,l} = \frac{C_0}{\alpha}\sqrt{\frac{\log p}{\min_k n_k}}$ for all $(j,l)\in O$ and some universal constant $C_0>0$.
\end{assump}
\noindent Assumptions \ref{assump:distortion_signal}-\ref{assump:lambda} also appear in \citep{vinci2019graph} in the Gaussian graphical model setting. Note that we do not require any additional assumptions when extending the theoretical guarantees to the nonparanormal setting, while accommodating for strictly weaker assumptions on the joint distribution. 
\begin{theorem}\label{thm:main}
Suppose that Assumptions \ref{assump:distortion_signal}-\ref{assump:lambda} hold, and there exist at least one edge in the graph encoded by $\Theta$ and $\til{\Theta}$: $d,\til{d}>2$. 
Then we have the following guarantees for Algorithm \ref{alg:MADgq} with probability at least $1-\sum_kp_k^{-10}$:
\begin{itemize}
    \item \emph{\textbf{Exact recovery in $O$}}. If
\begin{equation}\label{eq:n_cond1}
    n_k> \left[\frac{C_0}{4}\kappa_{\til{\Gamma}}\left(1+\frac{8}{\alpha}\right)\left(\left(\frac{\nu}{2}-\delta\right)^{-1}+3\left(1+\frac{8}{\alpha}\right)(\kappa_{\til{\Sigma}}+\kappa^3_{\til{\Sigma}}\kappa_{\til{\Gamma}})\til{d}\right)\right]^2\log p_k,
\end{equation}
 $\delta+\varepsilon_1\leq \tau_1<\nu-\delta-\varepsilon_1$, where $\varepsilon_1 = \frac{C_0}{4}\kappa_{\til{\Gamma}}(1+\frac{8}{\alpha})\max_k\sqrt{\frac{\log p_k}{n_k}}$, then $\h{E}_O = E_O$.
    \item \emph{\textbf{Minimal superset recovery in $O^c$}}. If
\begin{equation}\label{eq:n_cond2}
n_k> C_0\kappa_{\til{\Gamma}}^2\left(1+\frac{8}{\alpha}\right)^2\left[\frac{9\til{\kappa}^4}{4\psi^2}+\frac{1}{4\lambda_{\min}(\til{\Theta})^2}\right]\min\{p+\til{s},\til{d}^2\}\log p_k,
\end{equation}
$\varepsilon_2\leq \tau_2< \psi - \varepsilon_2$, $\delta-\varepsilon_2< \tau_1\leq \nu-\varepsilon_2$, where $\varepsilon_2= \frac{3C_0}{4}\kappa_{\til{\Gamma}}(1+\frac{8}{\alpha})\til{\kappa}^2\min\{\sqrt{p+\til{s}},\til{d}\}\max_k\sqrt{\frac{\log p_k}{n_k}}$, then $\h{E}_{O^c} = \cS_{\off}$.
\end{itemize}
\end{theorem}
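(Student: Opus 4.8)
The plan is to reduce the finite-sample guarantees to the population-level recovery results already established in \citep{vinci2019graph} for $\til{\Theta}$ and $\til{\Theta}^{(k)}$, by proving three concentration facts: (i) the rank-based correlation input $\h{\Sigma}_O$ concentrates around the latent Gaussian correlation $\Sigma_O$ at the parametric rate; (ii) this propagates to an elementwise bound $\|\h{\til{\Theta}}-\til{\Theta}\|_{\infty}\leq\varepsilon_1$; and (iii) the Schur complements satisfy $\|\h{\til{\Theta}}^{(k)}-\til{\Theta}^{(k)}\|_{\infty}\leq\varepsilon_2$. The thresholds $\tau_1,\tau_2$ in Algorithm \ref{alg:MADgq} are placed with exactly enough slack ($\varepsilon_1$ around $\delta$ and $\nu$; $\varepsilon_2$ around $0$, $\delta$, $\psi$, $\nu$) that, on the $1-\sum_k p_k^{-10}$-probability event where (i)--(iii) hold, the population thresholding statements of \citep{vinci2019graph} transfer verbatim to the estimated quantities, yielding $\h{E}_O=E_O$ and $\h{E}_{O^c}=\cS_{\off}$.

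For (i), I would use the fact that Kendall's tau and Spearman's rho are invariant to the monotone marginal transforms $f_j$, so $\h{\tau}^{(k)}_{j,l}$ and $\h{\rho}^{(k)}_{j,l}$ estimate the same population quantities as in the Gaussian case, namely $\tfrac{2}{\pi}\arcsin\Sigma_{j,l}$ and $\tfrac{6}{\pi}\arcsin\tfrac{\Sigma_{j,l}}{2}$; since $\h{\tau}^{(k)}_{j,l}$ is a $U$-statistic with a bounded kernel, Hoeffding's inequality for $U$-statistics gives sub-Gaussian concentration at rate $n_k^{-1/2}$ (Spearman's rho admits an analogous decomposition into bounded $U$-statistics), the Lipschitz sine transforms in \eqref{eq:Sigma_rank} preserve the rate, and the block-averaging in \eqref{eq:rho_tau} only decreases the variance; a union bound over $|O|\leq p^2$ pairs then gives $\max_{(j,l)\in O}|\h{\Sigma}_{j,l}-\Sigma_{j,l}|\lesssim\sqrt{\log p/\min_k n_k}$ with the stated probability. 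This is the only place where the nonparanormal structure is used, and it requires no new assumptions. For (ii), note that $\til{\Theta}$ in \eqref{eq:Theta_tilde_def} is the population optimum of the constrained, unregularized problem, while $\h{\til{\Theta}}$ solves its $\ell_1$-regularized constrained finite-sample analogue \eqref{eq:MADgqLasso}. I would run a primal--dual witness argument in the style of \citep{ravikumar2011high}, but restricted to the subspace $\{\Theta:\Theta_{O^c}=0\}$: build a candidate agreeing with $\til{\Theta}$ on its support $S$, and verify the zero-subgradient and strict dual feasibility conditions using the incoherence Assumption \ref{assump:incoh} on $\til{\Gamma}=\til{\Sigma}\otimes\til{\Sigma}$ (stated precisely over $e\in O\cap S^c$, matching the constrained structure), the sparsity $\til{d}$, the choice of $\Lambda$ from Assumption \ref{assump:lambda}, and the bound from (i). This produces $\|\h{\til{\Theta}}-\til{\Theta}\|_{\infty}\leq\varepsilon_1$ together with recovery of $\mathrm{supp}(\til{\Theta})$, and the sample size requirement \eqref{eq:n_cond1} is exactly what makes the higher-order remainder terms in the witness argument --- those carrying $\kappa_{\til{\Gamma}}$, $\kappa_{\til{\Sigma}}+\kappa^3_{\til{\Sigma}}\kappa_{\til{\Gamma}}$, $\til{d}$, and the signal gap $(\tfrac{\nu}{2}-\delta)^{-1}$ --- negligible.

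For (iii), writing $\h{\til{\Theta}}=\til{\Theta}+E$ with $\|E\|_{\infty}\leq\varepsilon_1$, I would perform a matrix-perturbation analysis of the Schur complement map \eqref{eq:MADgqSchur}: expand $(\til{\Theta}_{V_k^c,V_k^c}+E_{V_k^c,V_k^c})^{-1}$ to first order, where $\lambda_{\min}(\til{\Theta})$ and $\til{\kappa}$ control the inverse in operator norm, then convert operator-norm bounds into elementwise bounds using the sparsity of $\til{\Theta}$, which yields the dimension factor $\min\{\sqrt{p+\til{s}},\til{d}\}$ rather than a crude $p$; this gives $\|\h{\til{\Theta}}^{(k)}-\til{\Theta}^{(k)}\|_{\infty}\leq\varepsilon_2$ under \eqref{eq:n_cond2}. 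Finally I assemble: on the good event, for $O$ the interval $\delta+\varepsilon_1\leq\tau_1<\nu-\delta-\varepsilon_1$ is nonempty because Assumption \ref{assump:distortion_signal} gives $\delta<\tfrac{\nu}{2}$ and \eqref{eq:n_cond1} makes $\varepsilon_1$ small, so hard thresholding $\h{\til{\Theta}}_O$ at $\tau_1$ returns the same set as hard thresholding $\til{\Theta}_O$, which \citep{vinci2019graph} identifies with $E_O$; for $O^c$, the population characterization of $\cS_{\off}$ (Definition \ref{def:minimal_superset}, with Assumptions \ref{assump:Schur_distortion1}--\ref{assump:Schur_distortion2}) is via the node set $\{i\in V:\forall k\text{ s.t. }i\in V_k,\ \exists j\neq i,\ 0<|\til{\Theta}^{(k)}_{ij}|<\delta\}$, and with $\varepsilon_2\leq\tau_2<\psi-\varepsilon_2$, $\delta-\varepsilon_2<\tau_1\leq\nu-\varepsilon_2$, and $\psi$ being the separation of the relevant $|\til{\Theta}^{(k)}_{ij}|$ from both $0$ and $\delta$, the estimated node set $\h{W}_{\tau_1,\tau_2}$ in \eqref{eq:superset_node} coincides with the population node set, hence $\h{E}_{O^c}=\cS_{\off}$.

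The main obstacle I anticipate is step (iii): obtaining \emph{elementwise} rather than merely spectral control of $\h{\til{\Theta}}^{(k)}$ requires carefully tracking how an $\ell_{\infty}$ perturbation of $\til{\Theta}$ passes through the inversion of the $V_k^c$-block and then through two matrix products, and extracting the sharp dimension dependence $\min\{\sqrt{p+\til{s}},\til{d}\}$ forces one to exploit the sparsity of $\til{\Theta}$ at every conversion between $\ell_{\infty}$, $\ell_2\to\ell_2$, and $\ell_{\infty}\to\ell_{\infty}$ norms. A secondary subtlety is that the primal--dual witness in step (ii) lives on the constrained subspace $\{\Theta_{O^c}=0\}$ rather than on all symmetric matrices, so the irrepresentability bookkeeping must be carried out with the Fisher information restricted to $O$-indexed coordinates; this is precisely what the form of Assumption \ref{assump:incoh} accommodates, and it parallels the Gaussian analysis of \citep{vinci2019graph}, so I do not expect fundamentally new difficulties there.
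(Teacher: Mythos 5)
Your proposal is correct and mirrors the paper's proof almost exactly: reduce to the population recovery results of \citep{vinci2019graph}, prove a $\max_k\sqrt{\log p_k/n_k}$ elementwise concentration bound for the block-averaged, sine-transformed rank correlations via Hoeffding's inequality for U-statistics (following \citep{liu2012high}), push this through the constrained primal--dual witness argument to get $\|\h{\til{\Theta}}-\til{\Theta}\|_{\infty}\leq\varepsilon_1$ together with support containment, and then control the Schur complements before matching thresholds to the population characterizations of $E_O$ and $\cS_{\off}$. The only (immaterial) deviation is in your step (iii): where you anticipate a delicate first-order expansion of the inverted $V_k^c$-block, the paper instead invokes Lemma C.2 of \citep{vinci2019graph}, which bounds $\|X/X_{A^cA^c}-Y/Y_{A^cA^c}\|_{\infty}$ directly by $\til{\kappa}^2\|X-Y\|_2$, and then converts to the elementwise bound with the factor $\min\{\sqrt{p+\til{s}},\til{d}\}$ exactly as you describe, using the support containment from step (ii).
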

\noindent The proof of Theorem \ref{thm:main} can be found in the Appendix. Theorem \ref{thm:main} suggests that under the same population assumptions as the Gaussian graph quilting setting in \citep{vinci2019graph}, as long as the sample size for each block is sufficiently large: $n_k = \Omega(\til{d}^2\log p_k)$, and the thresholding parameters are appropriately chosen: $\tau_1\in [\delta+C\max_k\sqrt{\frac{\log p_k}{n_k}}, \nu - \delta-C\max_k\sqrt{\frac{\log p_k}{n_k}})$, $\tau_2\asymp \til{d}\max_k\sqrt{\frac{\log p_k}{n_k}}$, we can achieve exact recovery for edges in $O$ and construct a minimal superset for edges in $O^c$. This result is comparable to the main theory in \citep{vinci2019graph} for Gaussian graphical model, although we are considering a strictly broader distribution family. To prove Theorem \ref{thm:main}, we first extend the existing error bounds for rank-based correlation matrices \citep{liu2012high} in the full data setting to our modified correlation estimates defined in \ref{sec:method}, which is computed from $K$ semi-overlapping blocks of measurements. We then utilize this error bound to show the proximity of $\h{\til{\Theta}}$ and its Schur complements $\h{\til{\Theta}}^{(k)}$ to their population counterparts, which eventually leads to Theorem \ref{thm:main} when combined with the population theory developed in \citep{vinci2019graph}.

\section{Simulation Studies} \label{sec:sim}

\begin{figure}[t]
\begin{center}
    \begin{subfigure}[t]{0.4\linewidth}
    \centering
        \includegraphics[width=\linewidth]{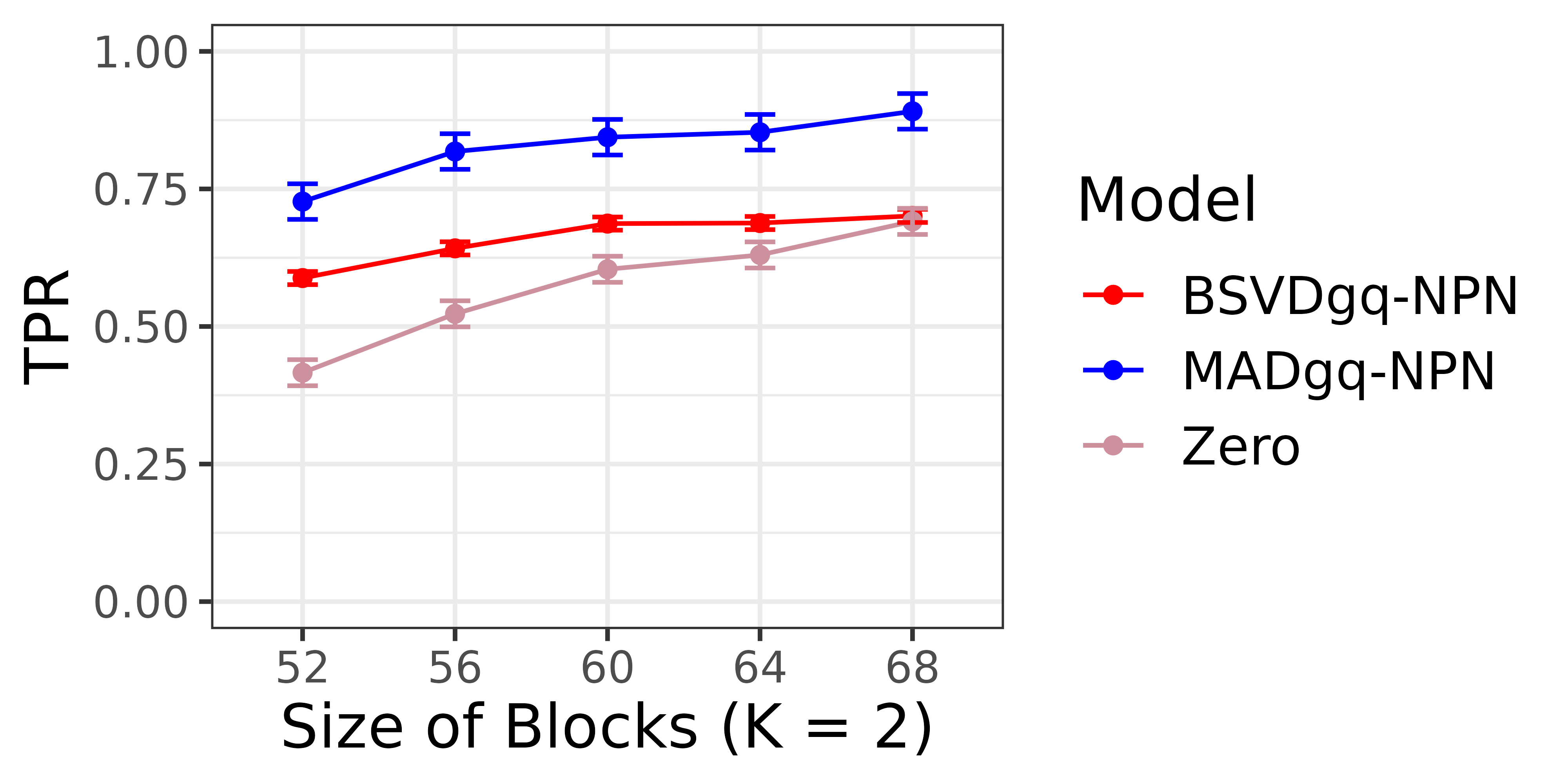}
        \caption{}
    \label{fig:gamma_o_tpr}
    \end{subfigure}%
    \begin{subfigure}[t]{0.4\linewidth}
    \centering
        \includegraphics[width=\linewidth]{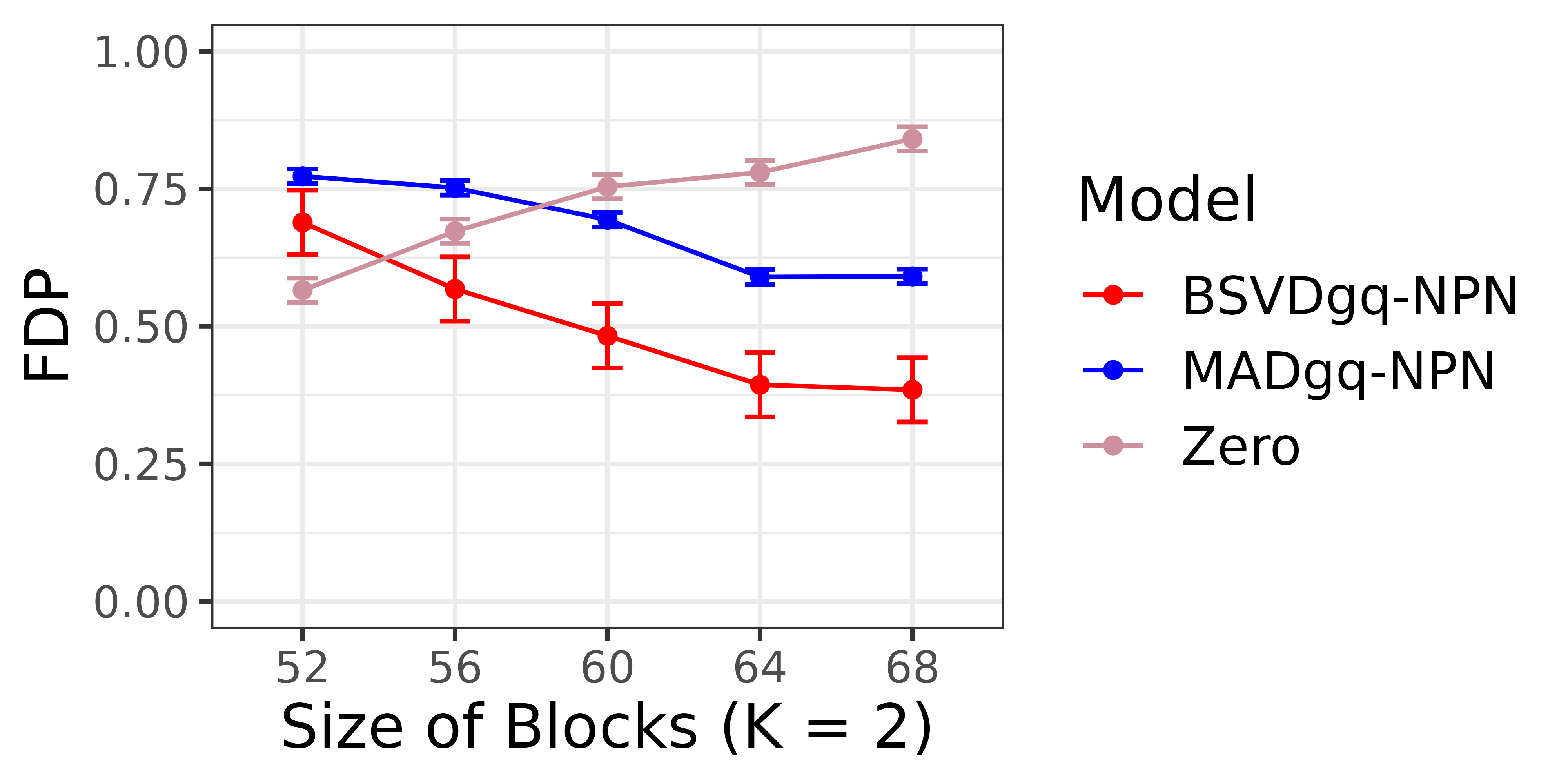}
        \caption{}
    \label{fig:gamma_o_fdp}
    \end{subfigure}
    \begin{subfigure}[t]{0.4\linewidth}
    \centering
        \includegraphics[width=\linewidth]{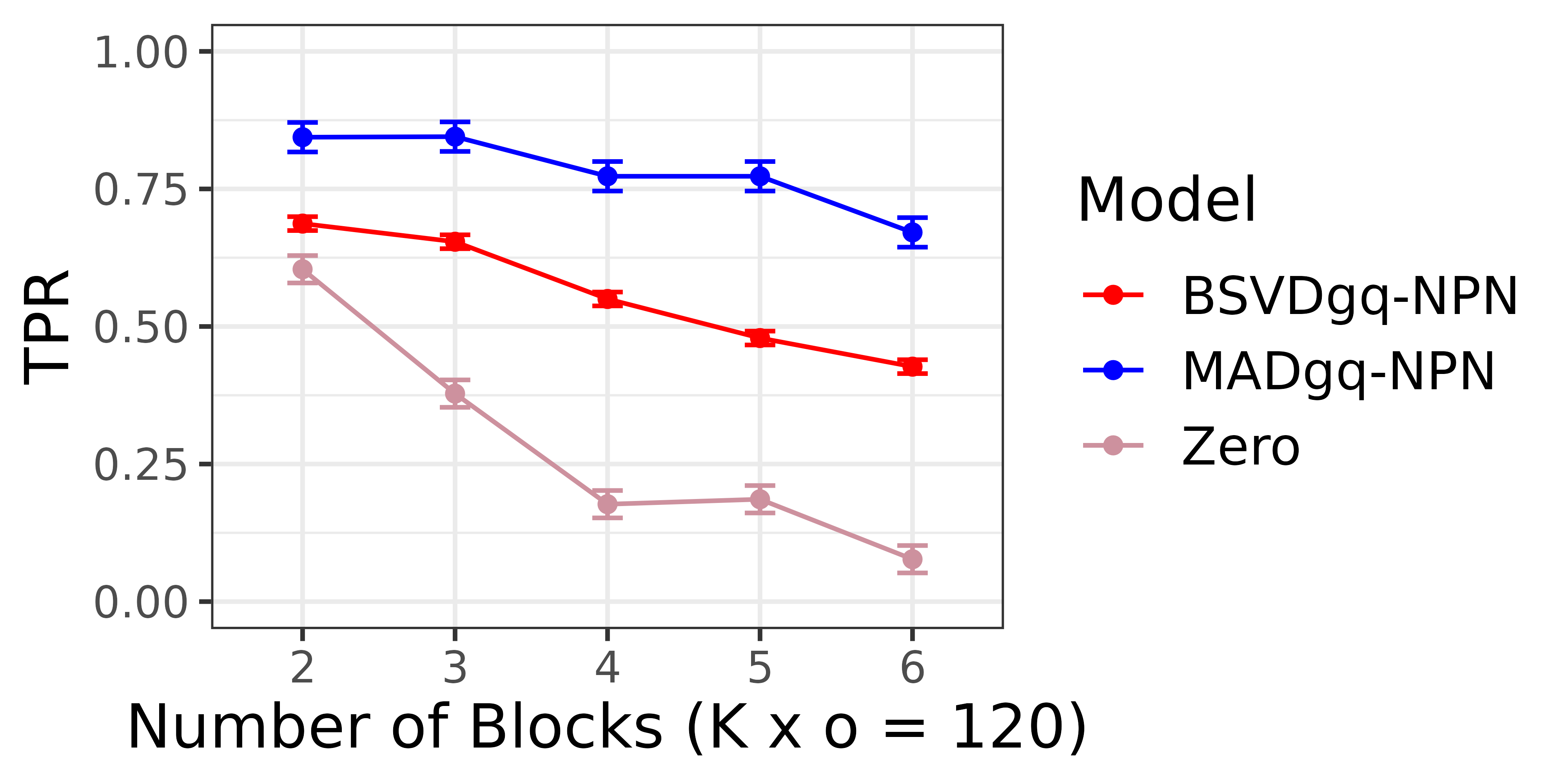}
        \caption{}
    \label{fig:gamma_k_tpr}
    \end{subfigure}
    \begin{subfigure}[t]{0.4\linewidth}
    \centering
        \includegraphics[width=\linewidth]{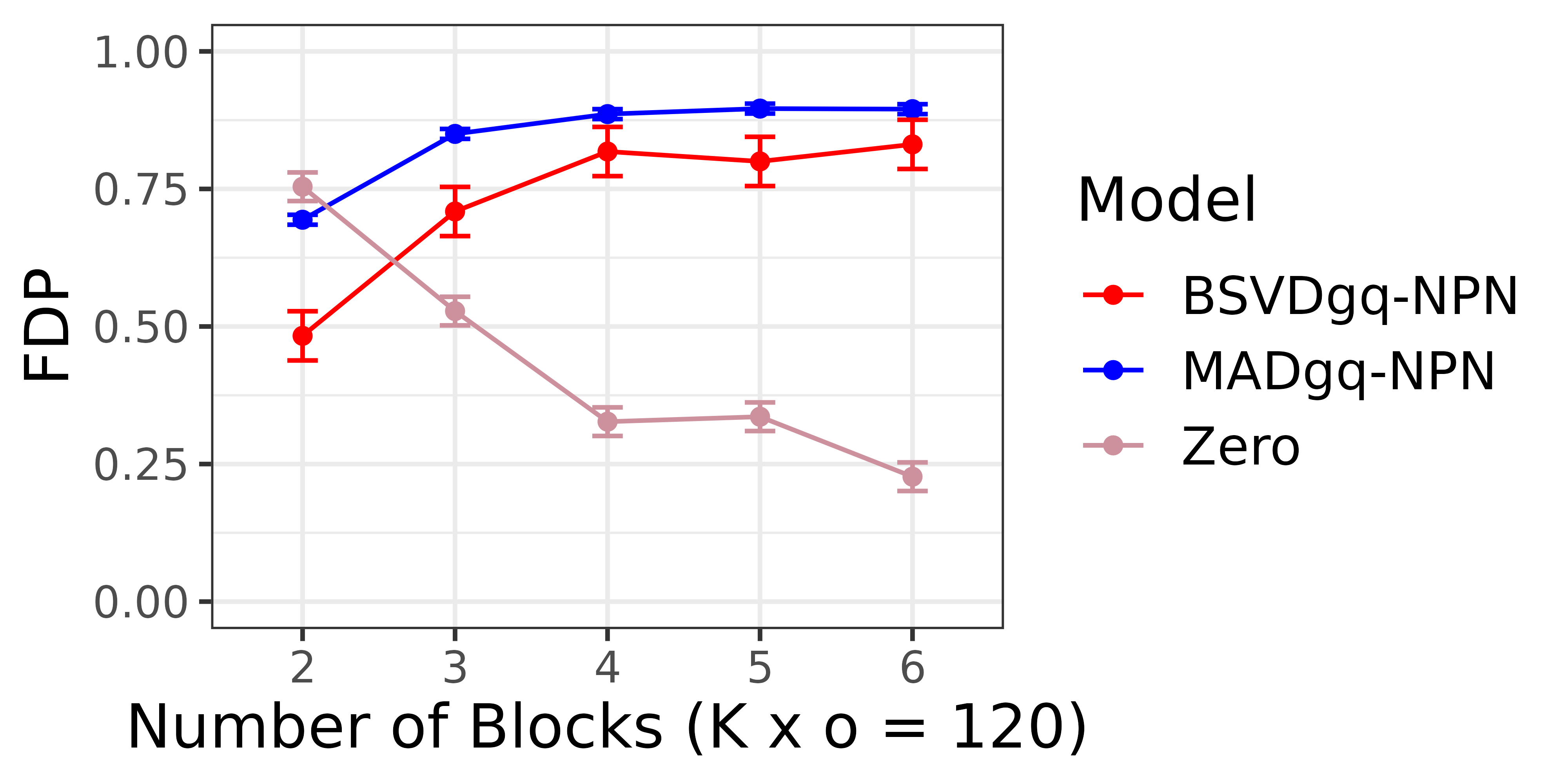}
        \caption{}
    \label{fig:gamma_k_fdp}
    \end{subfigure}%
\end{center}
    \caption{Performance of MAD$_{\GQ}$-NPN, BSVDgq-NPN, and zero imputation on simulated data from a Gamma distribution. \textbf{(a)} TPR, changing block size. \textbf{(b)} FDP, changing block size. \textbf{(c)} TPR, number of blocks. \textbf{(d)} FDP, changing number of blocks.}
    \label{fig:gamma}
\end{figure}

We now investigate the empirical performance of the nonparanormal Graph Quilting procedures outlined in Section \ref{sec:meth} on simulated data. For each of the simulation trials, we create synthetic block observation patterns by randomly ordering the features and assigning them to $K$ partially overlapping blocks of size $o$. Data is generated for each block from multivariate Gaussian data from a sparse inverse covariance matrix with a small-world structure and then perform copula transform on each column to a non-Gaussian distribution. Rank-based correlations are calculated entry-wise for each pair of features in the observation set, with correlations averaged for pairs observed in multiple blocks. Our goal is to recover the nonzero entries in the population sparse inverse covariance matrix of the unobserved Gaussian variable. Below, we compare the MAD$_{\GQ}$-NPN approach and one of the LRGQ-NPN procedures, namely the block SVD (BSVDgq-NPN) approach, along with a basic zero imputation procedure in which the missing entries of the correlation matrix are imputed as 0 before applying the graphical Lasso algorithm. The methods are evaluated using the the true positive rate (TPR) and false discovery proportion (FDP) of their respective resulting graph estimates as compared to the true underlying graph; in particular, we run 50 replications on each set of simulation parameters, with new random block assignment in each replication, and show the average and standard deviation of the TPR and FDP of edge selection for each method. Hyperparameter selection is performed via optimal F-1 score tuning with respect to the true graph.

We first test the nonparanormal Graph Quilting methods with Spearman correlation matrices calculated on data containing 100 features generated from a Gamma distribution with shape parameter 5 and scale parameter 1. For each block, 2000 observations are generated per feature. Figures \ref{fig:gamma_o_tpr} and \ref{fig:gamma_o_fdp} show the TPR and FDP of each method compared to the true underlying graph when the size of the block $o$ is 52, 56, 60, 64, and 68 while the number of blocks is held constant at 2, and Figures \ref{fig:gamma_k_tpr} and \ref{fig:gamma_k_fdp} show the TPR and FDP for $K = $ 2, 3, 4, 5 and 6 blocks while keeping the total number of observed node pairs across all blocks ($K \times o$) constant at 120. From these results, we see that both methods achieve a consistently high true positive rate for edge recovery, and generally outperforms the zero imputation method. Notably, even though the low-rank assumption inherent to the BSVDgq-NPN method is not met here, the method is still able to recover the true edges of the graph at a decently high rate. Comparing the two nonparanormal Graph Quilting methods, we see that the MAD$_{\GQ}$-NPN approach consistently has a higher TPR and FDP compared to BSVDgq-NPN; this result follows what we expect, as the former approach is designed to construct a superset of the true edges for the unobserved entries in the covariance matrix and thus will return an estimate with both more true positives and false positives compared to BSVDgq-NPN. Across different simulation parameters, we generally observe that TPR is higher and FDP is lower for the estimates from both methods when there are fewer blocks and when each block is larger, which we would expect to see as these conditions effectively provide more samples for estimation. Additionally, the difference in performance between the two nonparanormal Graph Quilting methods appear to be fairly consistent across the different block sizes and number of blocks.

We also evaluate the nonparanormal Graph Quilting methods on data simulated from a Cauchy distribution with mean 0 and scale parameter 3 with 2000 observations and 100 features, from which a Kendall correlation matrix is calculated and used as the input to the graphical Lasso. For this particular experiment, we consider the case where the underlying covariance matrix is approximately low-rank; this is generated via the spiked covariance model \citep{johnstone2001}, and we also enforce a small-world graph structure in its inverse. Figures \ref{fig:cauchy_o_tpr} and \ref{fig:cauchy_o_fdp} compare the MAD$_{\GQ}$-NPN and BSVDgq-NPN methods in terms of TPR and FDR for edge selection for $K = 2$ blocks and varying block size $o$ is 52, 56, 60, 64, and 68, and Figures \ref{fig:cauchy_k_tpr} and \ref{fig:cauchy_k_fdp} compare the same methods for $K = $ 2, 3, 4, 5 and 6 blocks with a constant total number of observations ($K \times o$) of 120. Both nonparanormal Graph Quilting methods perform well in terms of true positive rate for edge recovery here as well, and both considerably outperform the zero imputation approach. As opposed to the previous simulation study in which the MAD$_{\GQ}$-NPN method outperformed the BSVDgq-NPN method in terms of selecting true edges in the graph, we see in this case that the latter outperforms the former for both the TPR and FDP metrics. The relative performance of the two nonparanormal Graph Quilting methods matches what we would expect from this particular simulation setting, as the structure of the full true underlying covariance matrix more closely match the model set-up of the BSVDgq-NPN method, and also aligns with comparative results seen from \citep{chang2022low} comparing Graph Quilting methods under a Gaussian assumption. Even in this case, though, the MAD$_{\GQ}$-NPN method still recovers the true edges of the underlying graph reasonably well. These results show that the choice of which of the nonparanormal Graph Quilting approach to apply for a problem will depend whether a low-rank assumption makes sense in the particular scientific context. Additionally, as above, we observe that performance generally improves with larger blocks and with a fewer total number of blocks.

\begin{figure}[t]
\begin{center}
    \begin{subfigure}[t]{0.4\linewidth}
    \centering
        \includegraphics[width=\linewidth]{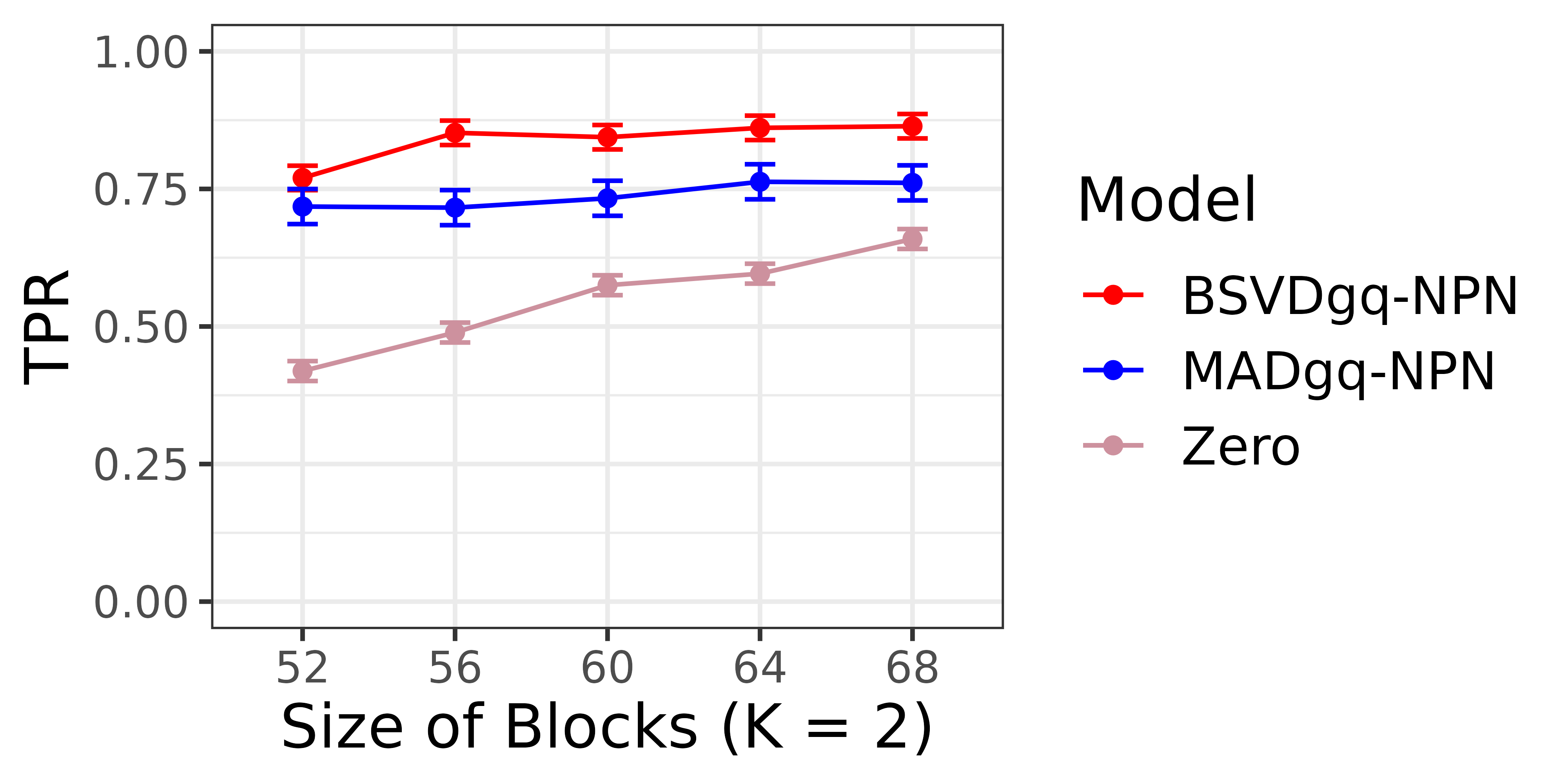}
        \caption{}
    \label{fig:cauchy_o_tpr}
    \end{subfigure}%
    \begin{subfigure}[t]{0.4\linewidth}
    \centering
        \includegraphics[width=\linewidth]{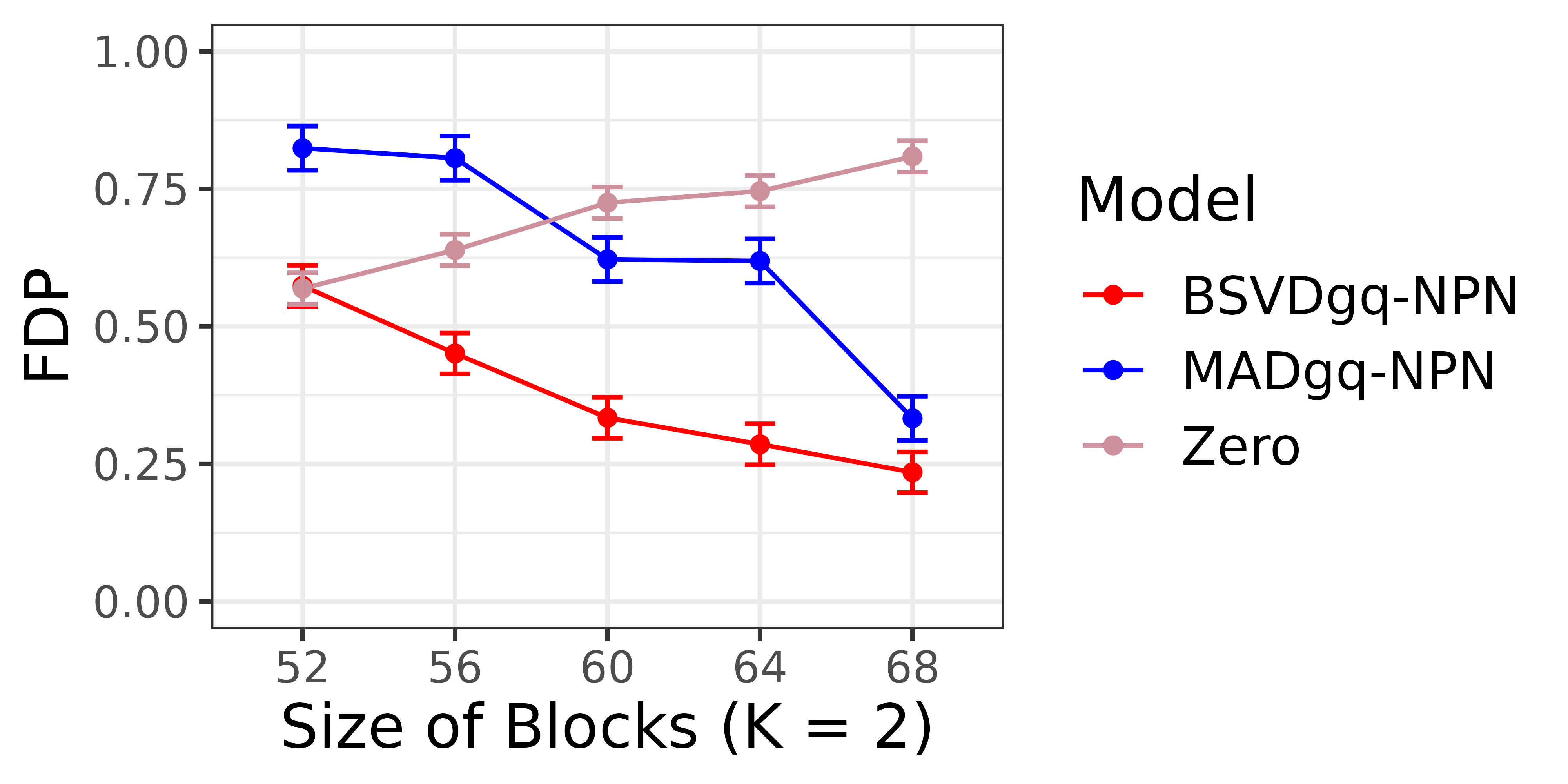}
        \caption{}
    \label{fig:cauchy_o_fdp}
    \end{subfigure}
    \begin{subfigure}[t]{0.4\linewidth}
    \centering
        \includegraphics[width=\linewidth]{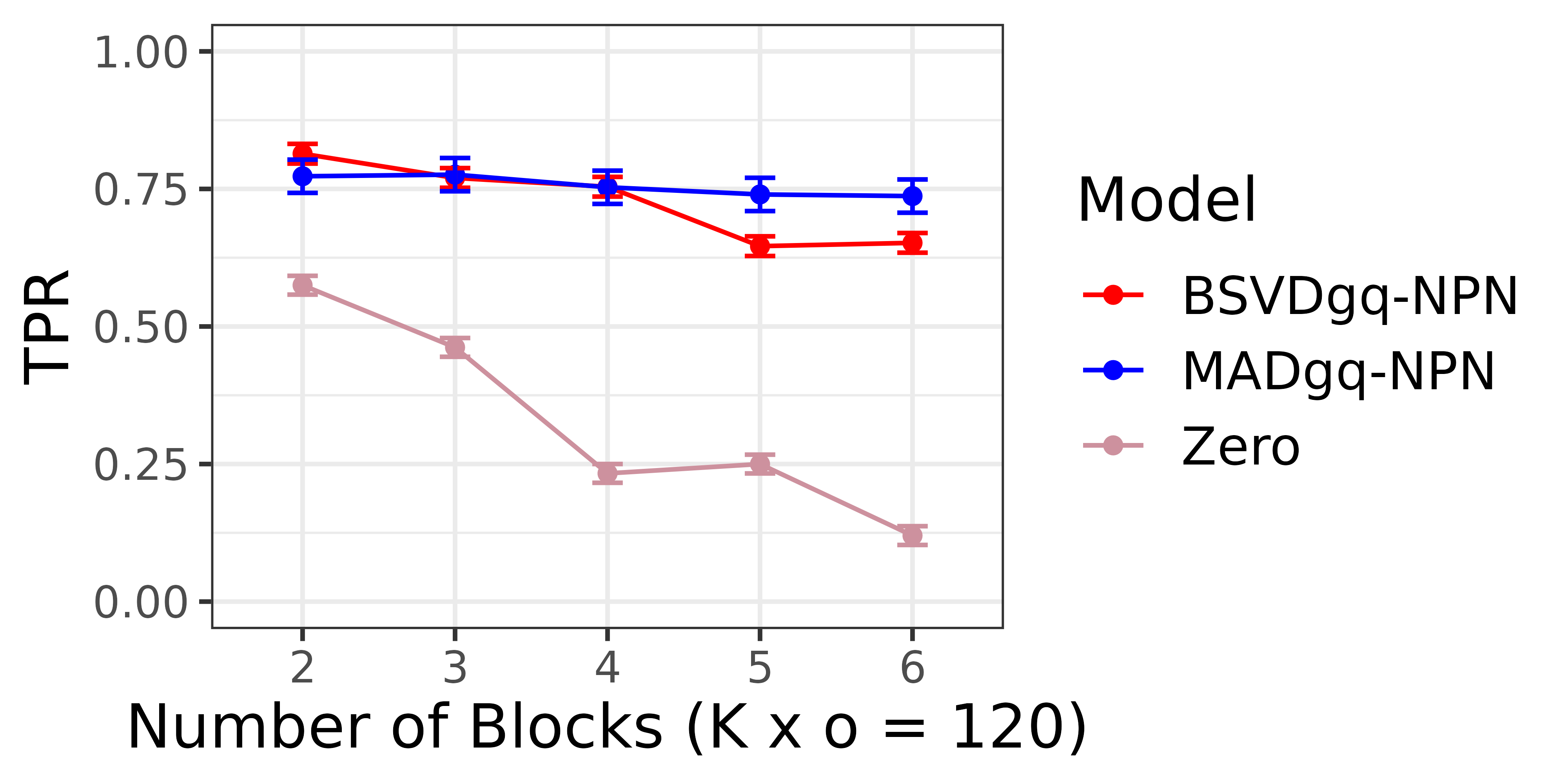}
        \caption{}
    \label{fig:cauchy_k_tpr}
    \end{subfigure}
    \begin{subfigure}[t]{0.4\linewidth}
    \centering
        \includegraphics[width=\linewidth]{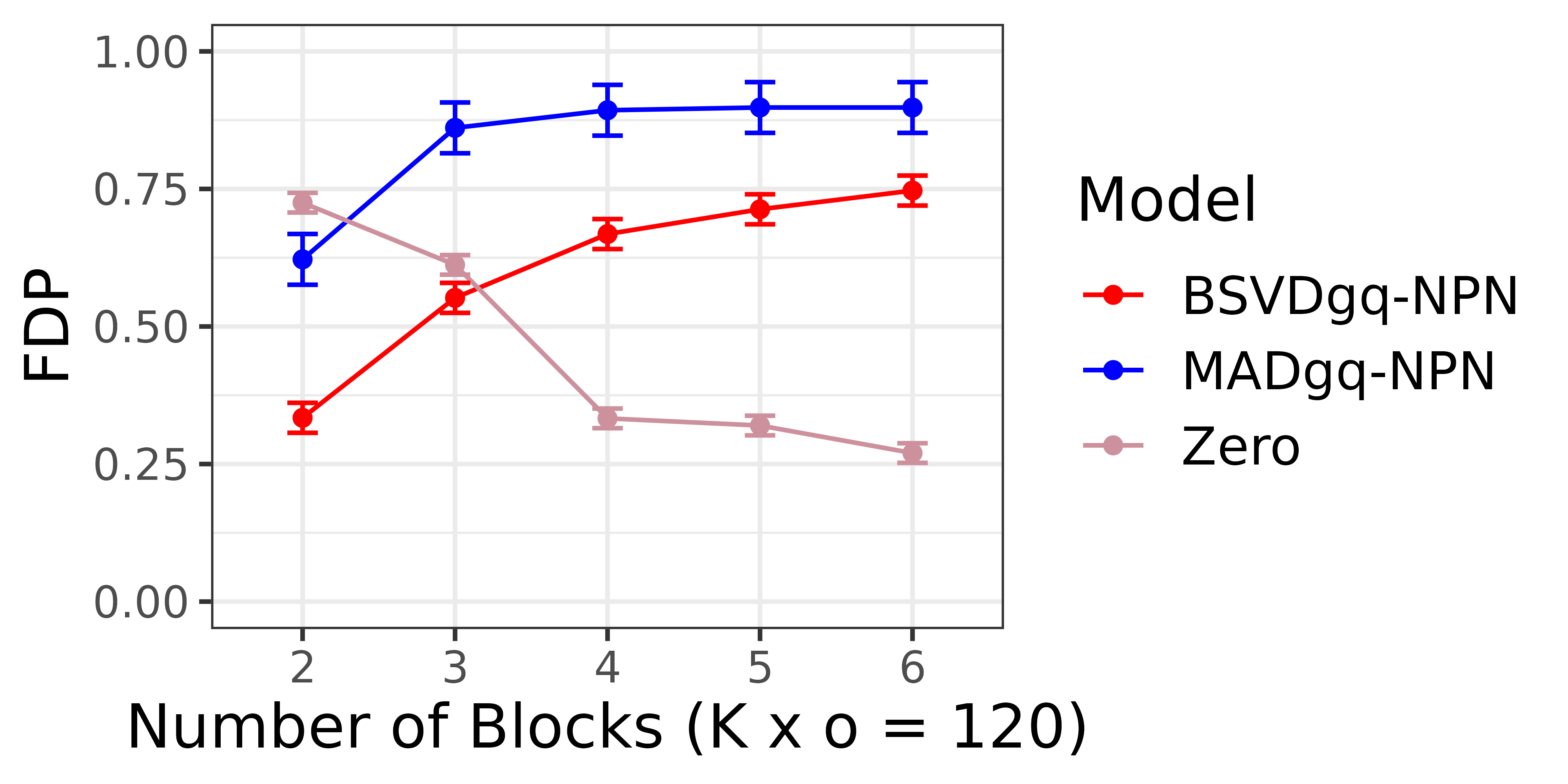}
        \caption{}
    \label{fig:cauchy_k_fdp}
    \end{subfigure}%
\end{center}
    \caption{Performance of MAD$_{\GQ}$-NPN, BSVDgq-NPN, and zero imputation on simulated data from a Cauchy distribution with low-rank covariance. \textbf{(a)} TPR, changing block size. \textbf{(b)} FDP, changing block size. \textbf{(c)} TPR, changing number of blocks. \textbf{(d)} FDP, changing number of blocks. }
    \label{fig:cauchy}
\end{figure}

\section{Calcium Imaging Example} \label{sec:cal}

We now study the nonparanormal Graph Quilting procedures on a real-world calcium imaging data set in order to assess the applicability of these methods for estimating functional neuronal connectivity. The data come from the Allen Institute \citep{abadata} and contains functional activity recordings for 227 neurons in a mouse V1 cortex during spontaneous activity across approximately 9000 time points. For this analysis, we compare the performances of the MAD$_{\GQ}$-NPN and BSVDgq-NPN nonparanormal Graph Quilting methods to each other, as well as to a zero imputation approach, in a similar fashion to the methodology in Section \ref{sec:sim}. We also compare the two nonparanormal Graph Quilting methods to their analogous Gaussian-based procedures. 

\begin{figure}[t]
\begin{center}
    \begin{subfigure}[t]{0.4\linewidth}
    \centering
        \includegraphics[width=\linewidth]{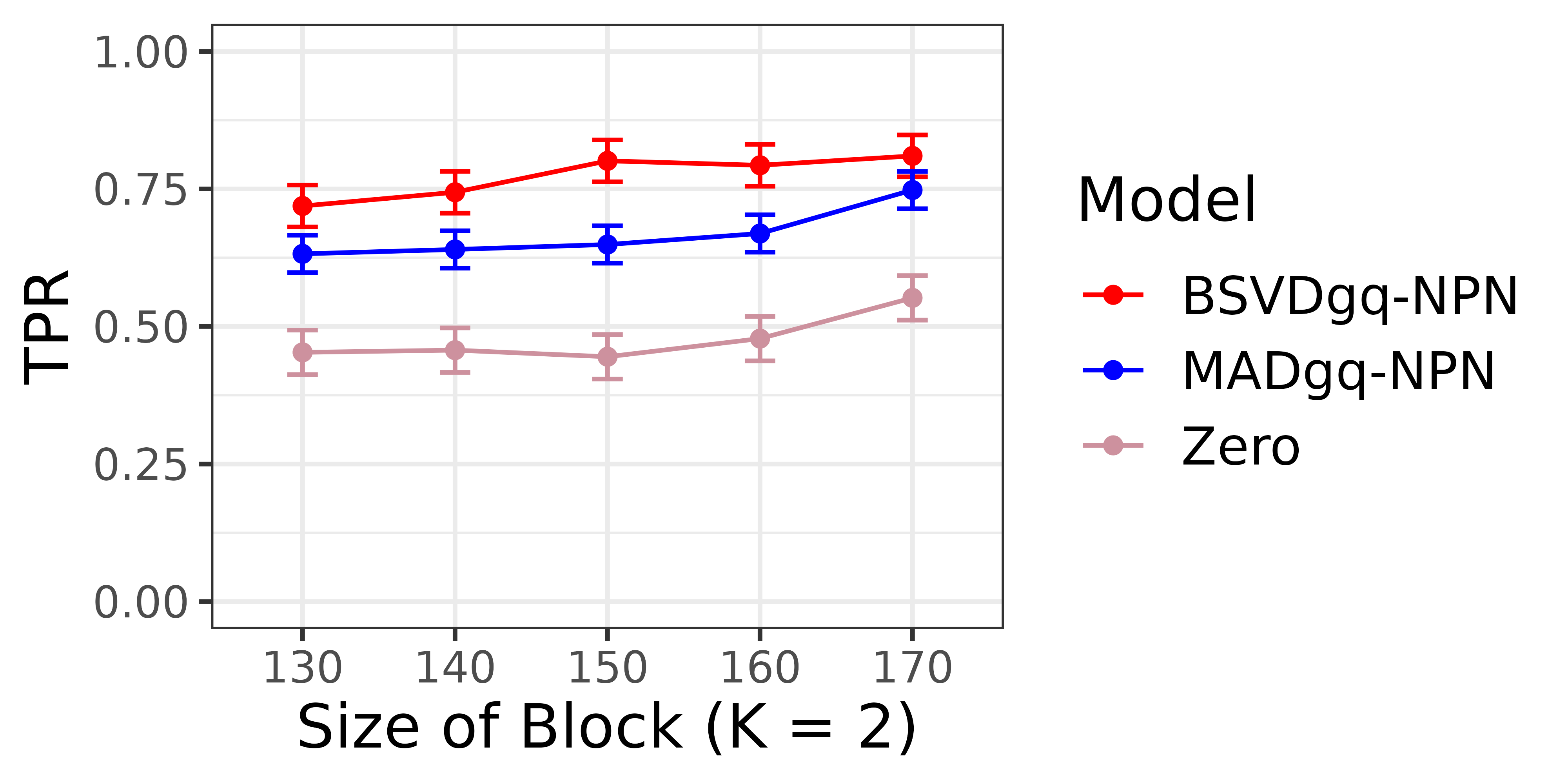}
        \caption{}
    \label{fig:aba_o_tpr}
    \end{subfigure}%
    \begin{subfigure}[t]{0.4\linewidth}
    \centering
        \includegraphics[width=\linewidth]{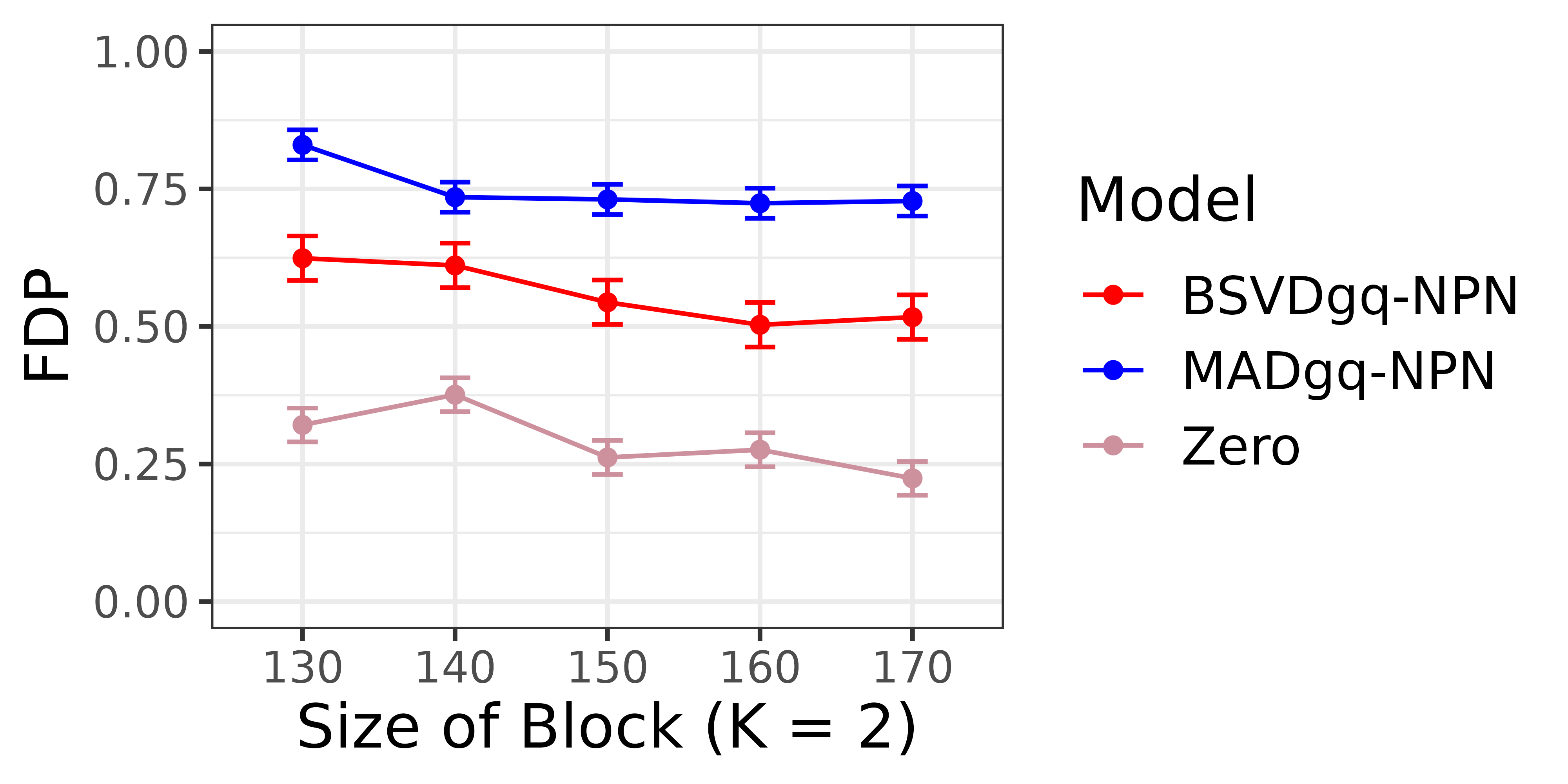}
        \caption{}
    \label{fig:aba_o_fdp}
    \end{subfigure}
    \par\medskip
    \begin{subfigure}[t]{0.4\linewidth}
    \centering
        \includegraphics[width=\linewidth]{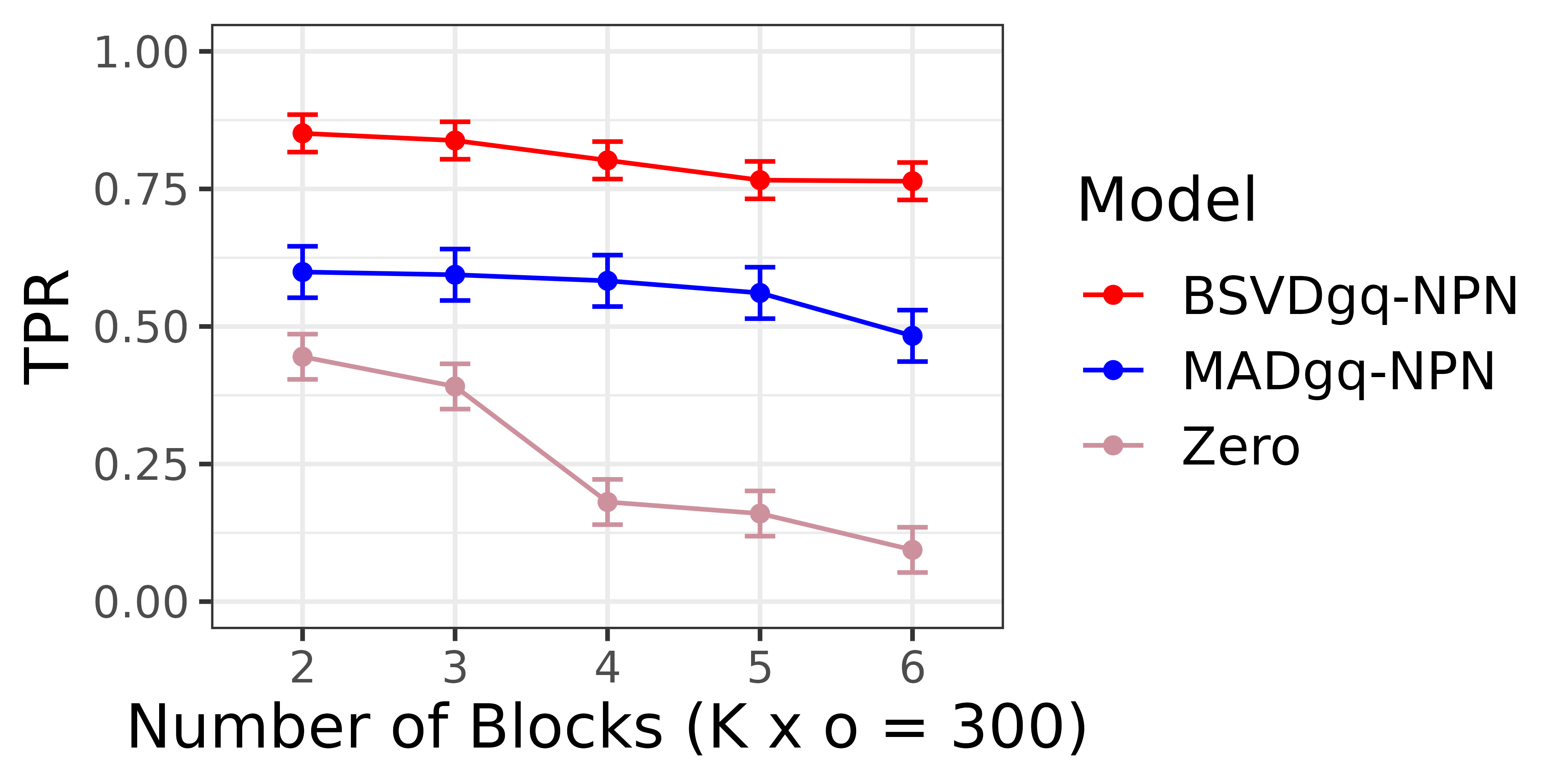}
        \caption{}
    \label{fig:aba_k_tpr}
    \end{subfigure}
    \begin{subfigure}[t]{0.4\linewidth}
    \centering
        \includegraphics[width=\linewidth]{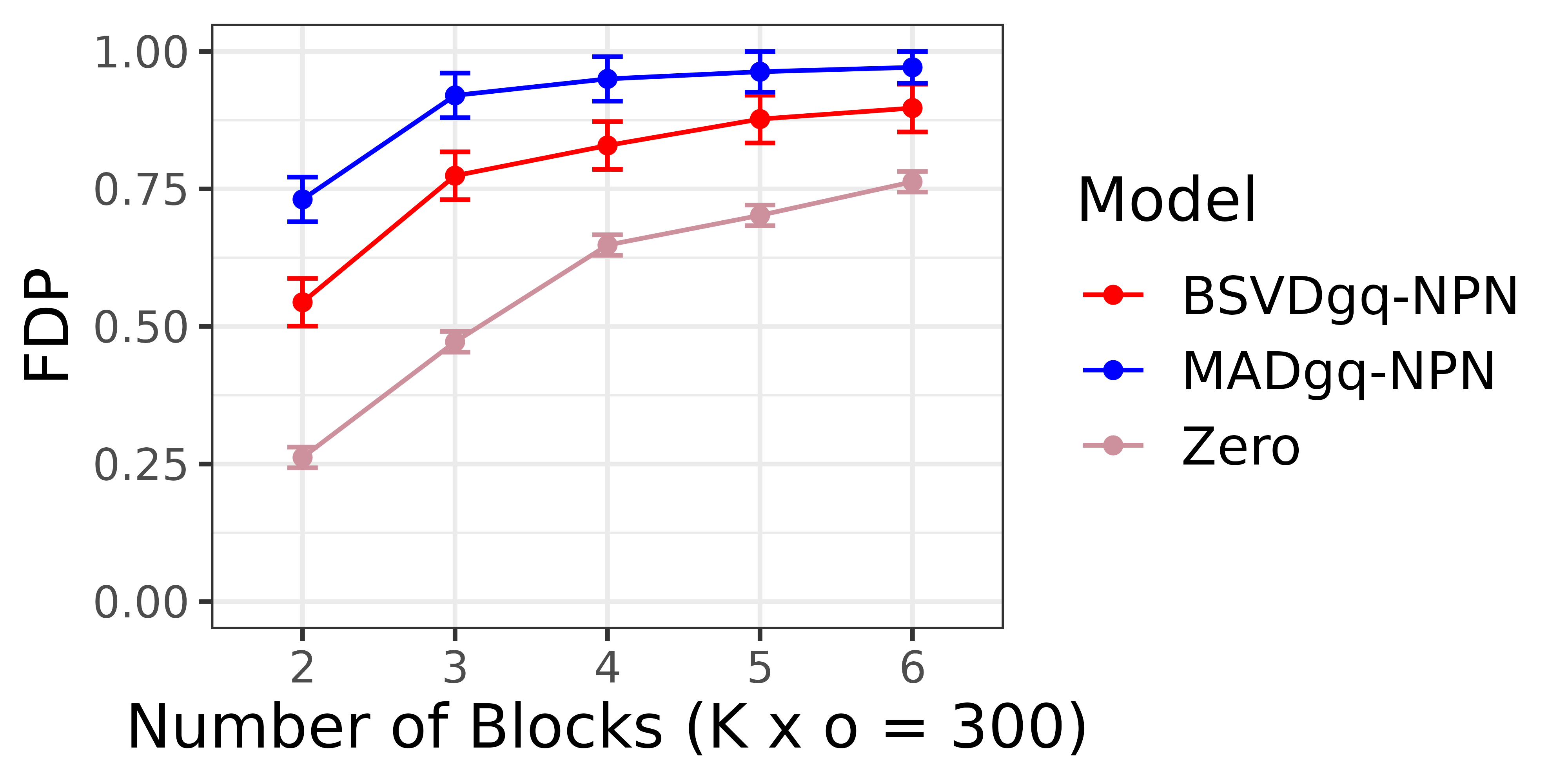}
        \caption{}
    \label{fig:aba_k_fdp}
    \end{subfigure}%
\end{center}
    \caption{Performance of MAD$_{\GQ}$-NPN, BSVDgq-NPN, and zero imputation on Allen Institute data. \textbf{(a)} TPR, changing block size. \textbf{(b)} FDP, changing block size. \textbf{(c)} TPR, changing number of blocks. \textbf{(d)} FDP, changing number of blocks.}
    \label{fig:aba}
\end{figure}

To do the former, we measure the performances of the MAD$_{\GQ}$-NPN, BSVDgq-NPN, and zero imputation procedure by how well the graph estimates from each nonparanormal Graph Quilting method on rank-based correlation matrices with synthetic block-missingness recovers the edge structure of the graph estimated using the nonparanormal graphical model on the same rank-based correlation matrix with all pairwise entries observed. Specifically, we calculate the Spearman correlation matrix for all observed neurons using binned spike counts and apply the graphical Lasso to the fully observed covariance, the result of which we consider as the true underlying graph structure. We create artificial block-missingness in the empirical covariance matrix by randomly assigning features to $K$ partially overlapping blocks of size $o$ and masking all pairwise entries in the covariance which are not contained in any block. The masked Spearman correlation matrix is then used as input for the nonparanormal Graph Quilting methods. We show the average and standard deviation of the TPR and FDR for recovering the graph estimate on the full Spearman covariance matrix for each method across 50 replications on each set of parameters, with new random block assignments each replication. Hyperparameter selection is performed using optimal F-1 score with respect to the graph estimated from the graphical Lasso fit on the fully observed data with the rank-based correlation matrix.

Figures \ref{fig:aba_o_tpr} and \ref{fig:aba_o_fdp} show the TPR and FDP of each method compared to the true underlying graph when the size of the block $o$ is 130, 140, 150, 160, and 170 while the number of blocks is held constant at 2, and Figures \ref{fig:aba_k_tpr} and \ref{fig:aba_k_fdp} show the TPR and FDP for $K = $ 2, 3, 4, 5 and 6 blocks while keeping the total number of observations across all blocks ($K \times o$) constant at 300. In general, both nonparanormal Graph Quilting methods are able to recover most of edges as the graph estimated with the nonparanormal graphical model when all features are observed simultaneously, which shows that both methods can reliably recover the edges in functional neuronal connectivity networks derived from calcium imaging data that would be present if all neurons were observed simultaneously. However, we also observe a relatively high FDP for both methods, which seems to show that both methods tend to slightly overselect the total number of edges in the underlying full graph. The BSVDgq-NPN method outperforms MAD$_{\GQ}$-NPN for edge recovery in terms of both TPR and FDP; this can likely be attributed to the approximate low-rank structure often found in the empirical covariance matrices from calcium imaging data \citep{stringerlr}. Also, as seen in Section \ref{sec:sim}, all methods perform better with fewer blocks and with larger block sizes.

\begin{figure}[t]
\begin{center}
    \begin{subfigure}[t]{0.4\linewidth}
    \centering
        \includegraphics[width=\linewidth]{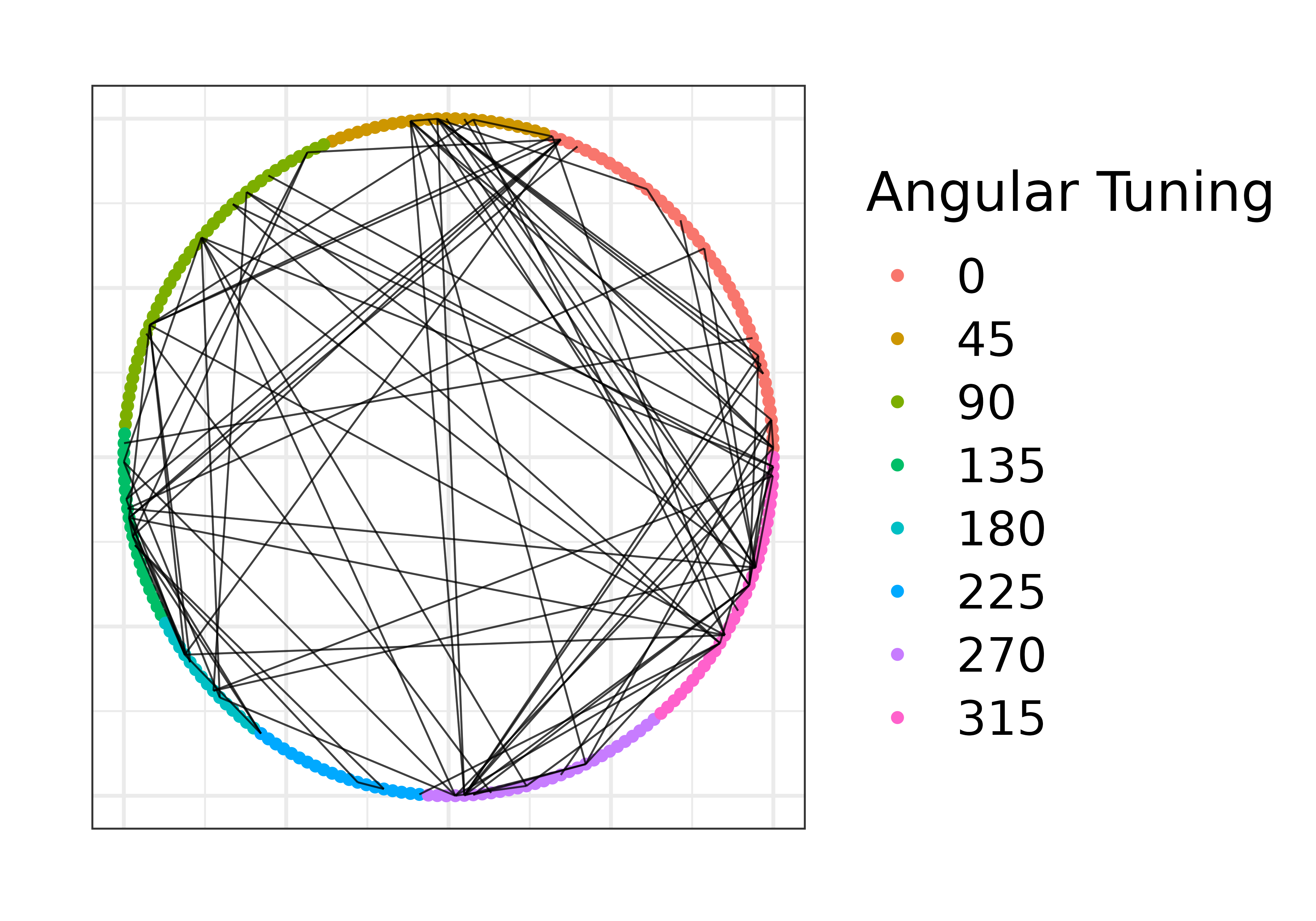}
        \caption{}
    \label{fig:aba_circ_gmm}
    \end{subfigure}%
    \begin{subfigure}[t]{0.4\linewidth}
    \centering
        \includegraphics[width=\linewidth]{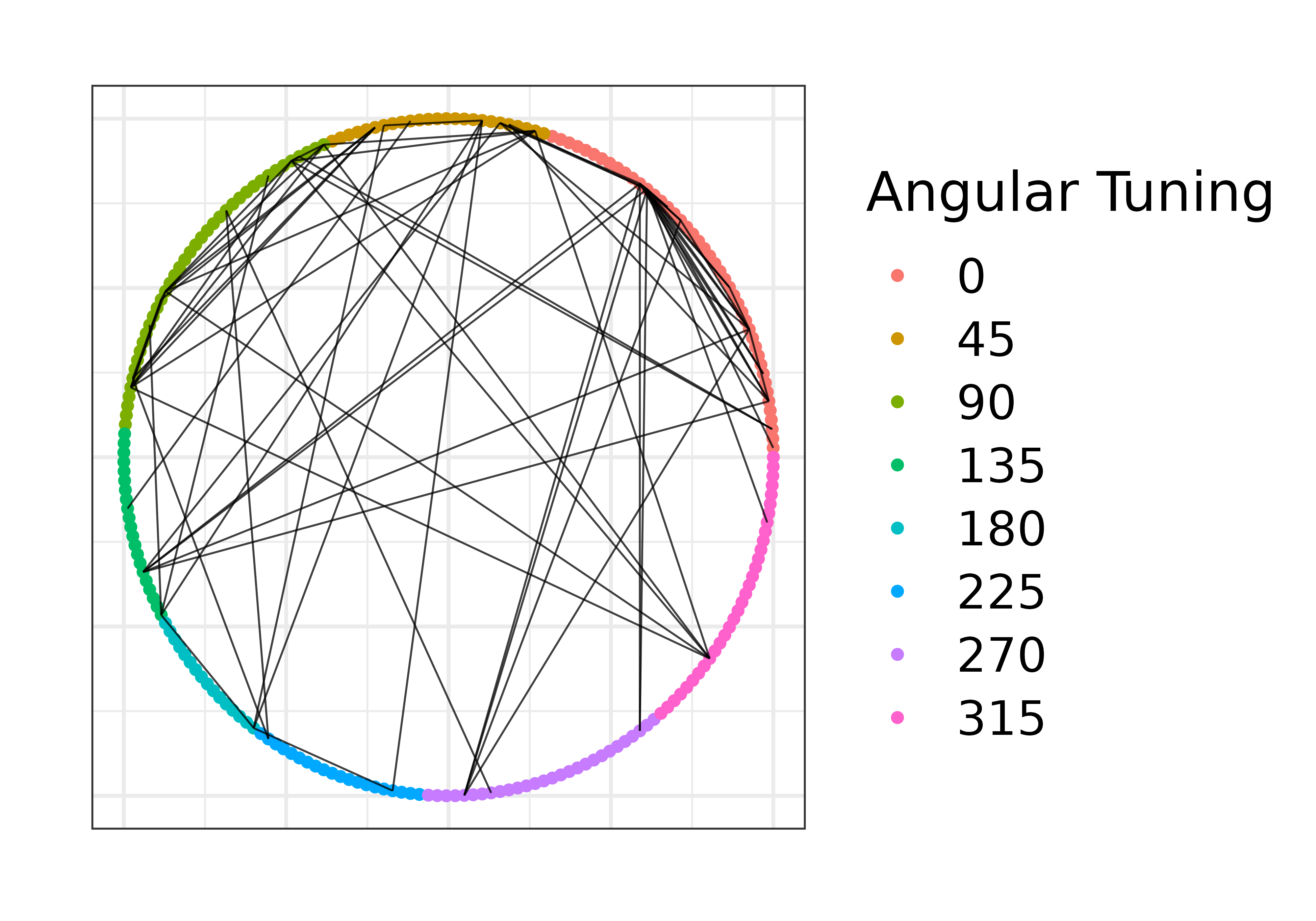}
        \caption{}
    \label{fig:aba_circ_npn}
    \end{subfigure}
    \begin{subfigure}[t]{0.4\linewidth}
    \centering
        \includegraphics[width=\linewidth]{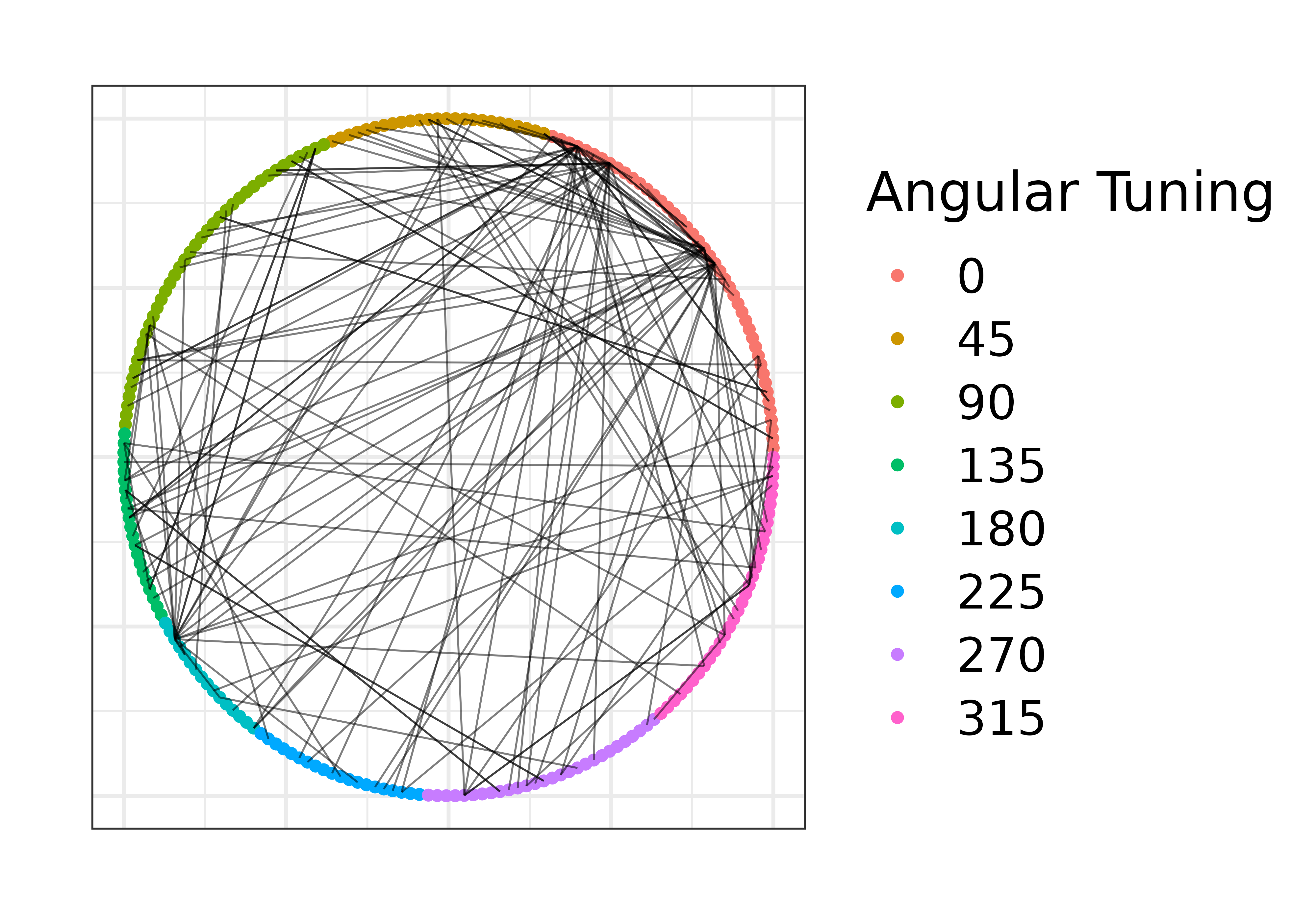}
        \caption{}
    \label{fig:aba_mad_circ_gmm}
    \end{subfigure}%
    \begin{subfigure}[t]{0.4\linewidth}
    \centering
        \includegraphics[width=\linewidth]{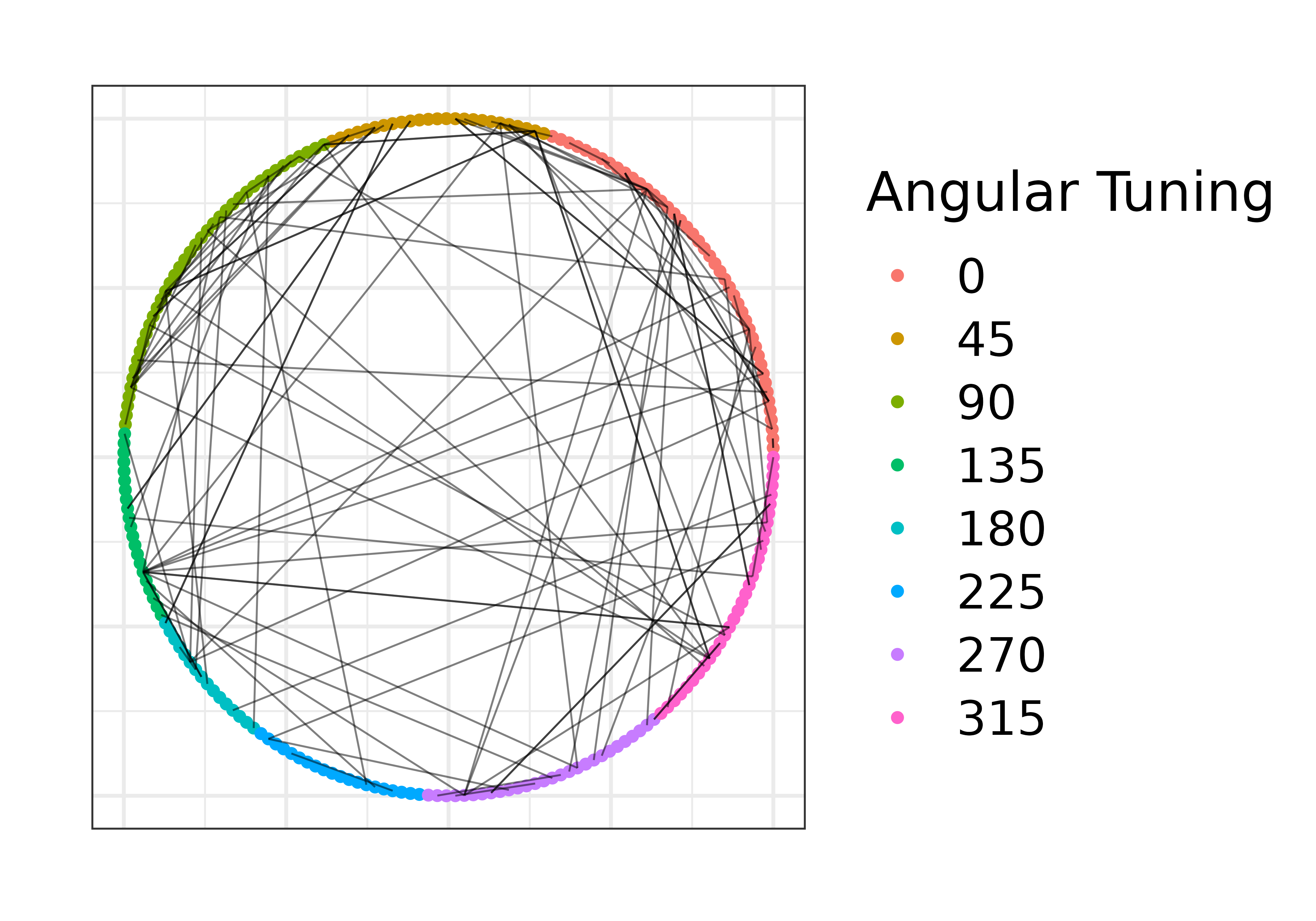}
        \caption{}
    \label{fig:aba_mad_circ_npn}
    \end{subfigure}
\end{center}
    \caption{Estimated functional connectivity networks from Graph Quilting methods on Allen Institute data. Nodes are colored by visual angular tuning category. \textbf{(a)} BSVDgq. \textbf{(b)} BSVDgq-NPN. \textbf{(c)} MAD$_{\GQ}$. \textbf{(d)} MAD$_{\GQ}$-NPN.}
    \label{fig:aba_circ}
\end{figure}

We then compare the functional neuronal connectivity graphs from the nonparanormal Graph Quilting methods to those found using Graph Quilting with a Gaussian assumption. Specifically, we use the MAD$_{\GQ}$ and BSVDgq Graph Quilting methods to obtain graph estimates on the same data set with the same artificial block-missingness pattern as above, and compare the selected edges to functional connectivity graphs estimated by MAD$_{\GQ}$-NPN and BSVDgq-NPN using neural properties from provided metadata. We first assess each of the estimated functional connectivity networks by the proportion of edges that connect pairs of neurons with the same visual angular tuning category. In the neuroscience literature, it has been hypothesized that neurons are tuned such that they fire in the presence of specific stimuli \citep{tune} and that neurons with the similar tunings are more likely to be functionally connected \citep{tune2}. Thus, we expect a sizable proportion of the edges in the estimated functional connectivity networks to link pairs of neurons within the same tuning category. For this particular calcium imaging data set, neural angular tuning is categorized into 8 different bins, each comprising 45 degree intervals. Hyperparameter tuning for the BSVDgq and BSVDgq-NPN methods is performed using the extended Bayesian information criterion with respect to the original data; for the MAD$_{\GQ}$ and MAD$_{\GQ}$-NPN methods, in order to crate a fair comparison between methods, we tune the thresholding parameters such that the number of edges is approximately similar to the estimates from BSVDgq and BSVDgq-NPN. 

Figures \ref{fig:aba_circ_gmm} and \ref{fig:aba_circ_npn} show examples of estimated functional connectivity networks from the BSVDgq and BSVDgq-NPN methods, and Figures \ref{fig:aba_mad_circ_gmm} and \ref{fig:aba_mad_circ_npn} show estimated functional connectivity networks from the MAD$_{\GQ}$ and MAD$_{\GQ}$-NPN methods, respectively. Structurally, we see that the graph estimates from the nonparanormal Graph Quilting methods are much more likely to be comprised of edges between neurons with the same angular tuning category. Specifically, across different replications with synthetic block-missingness, the BSVDgq-NPN graph estimates contain an average of 38.9\% of edges that link neurons with the same tuning category, compared to 21.4\% of edges in the BSVDgq estimates. Similarly, we see that 46.8\% of edges in the MAD$_{\GQ}$-NPN graph estimates are between pairs of neurons in the same tuning bin, as opposed to just 28.7\% in the MAD$_{\GQ}$ graph estimates.

We also compare the recorded firing activity of one particular example neuron and its selected edge neighbors in the the functional connectivity graphs estimated from the BSVDgq and BSVDgq-NPN methods in terms of how closely the neural firing patterns match one another. Specifically, for the problem of functional neuronal connectivity, our goal is to find neurons with consistent synchronous firing activity across time, which is represented by contemporaneous large positive spikes in the fluorescence traces \citep{smetters1996, turaga2013}. In Figures \ref{fig:aba_ggm_1} and \ref{fig:aba_ggm_2}, we visualize the fluorescence trace of the selected neuron, overlayed with the fluorescence traces neurons which are edge neighbors unique to the BSVDgq graph; we also do the same in Figures \ref{fig:aba_nn_1} and \ref{fig:aba_nn_2} for neurons that are edge neighbors unique to the BSVDgq-NPN graph. The top 10 periods of spiking activity of the example selected neuron are represented via blue dotted lines in the plots. From the results, we see that the firing activity of the selected neuron seems to match relatively closely with the edge neighbors selected only in the graph estimate from the BSVDgq-NPN algorithm. On the other hand, the edge neighbors selected only by the BSVDgq graph do not appear to have the same firing pattern. Quantitatively, the top 10 firing times for the selected neuron and its edge neighbors match 72.4\% of the time for the BSVDgq-NPN functional connectivity graph estimates, as opposed to just 24.5\% for the BSVDgq functional connectivity graphs. Overall, from this real-world calcium imaging data study, we see that nonparanormal Graph Quilting provides more logical functional connectivity estimates in the nueroscience context compared to the ordinary Graph Quilting procedures.

\begin{figure}[t]
\begin{center}
    \begin{subfigure}[t]{0.4\linewidth}
    \centering
        \includegraphics[width=\linewidth]{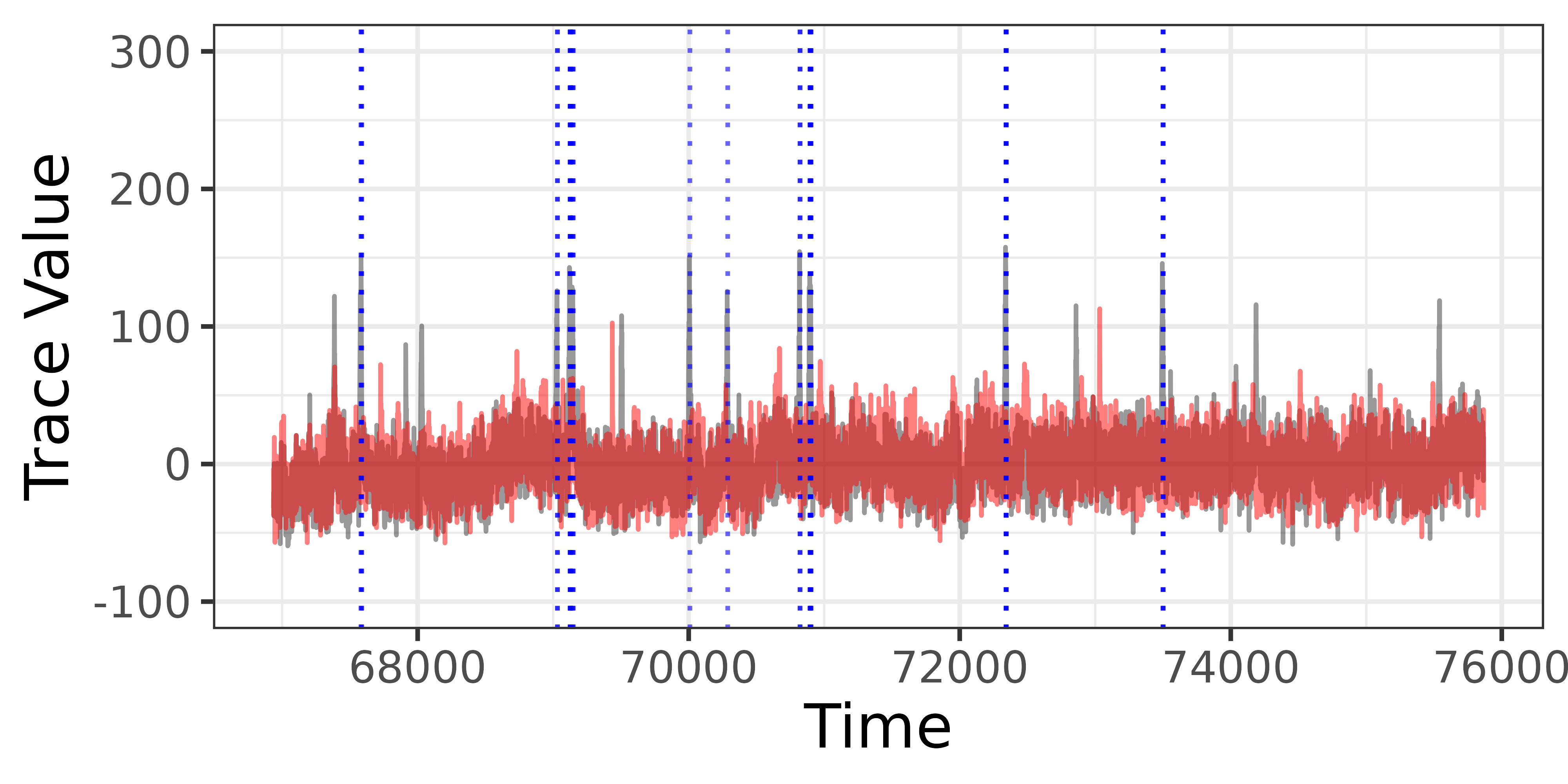}
        \caption{}
    \label{fig:aba_ggm_1}
    \end{subfigure}%
    \begin{subfigure}[t]{0.4\linewidth}
    \centering
        \includegraphics[width=\linewidth]{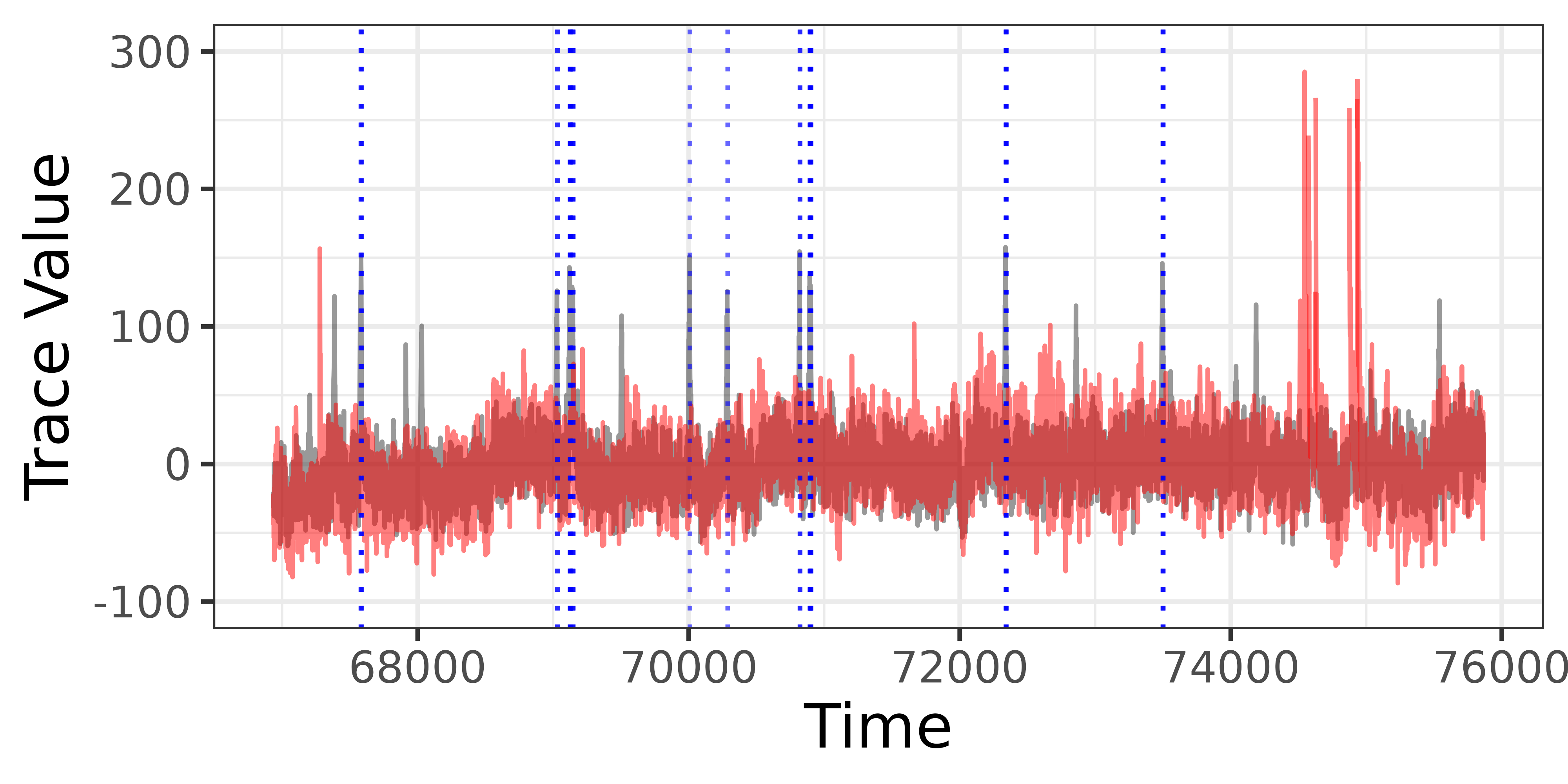}
        \caption{}
    \label{fig:aba_ggm_2}
    \end{subfigure}
    \par\medskip
    \begin{subfigure}[t]{0.4\linewidth}
    \centering
        \includegraphics[width=\linewidth]{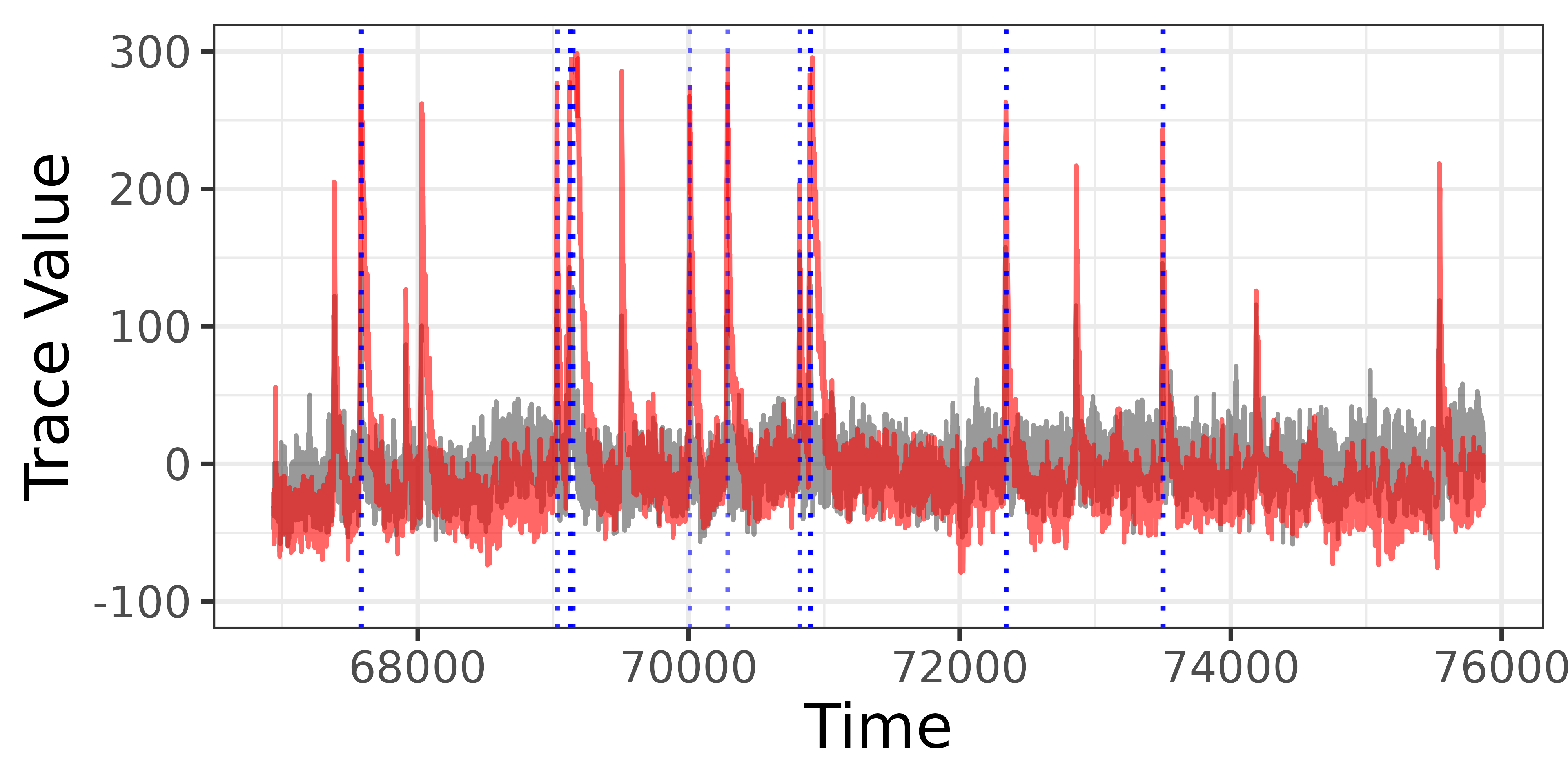}
        \caption{}
    \label{fig:aba_nn_1}
    \end{subfigure}
    \begin{subfigure}[t]{0.4\linewidth}
    \centering
        \includegraphics[width=\linewidth]{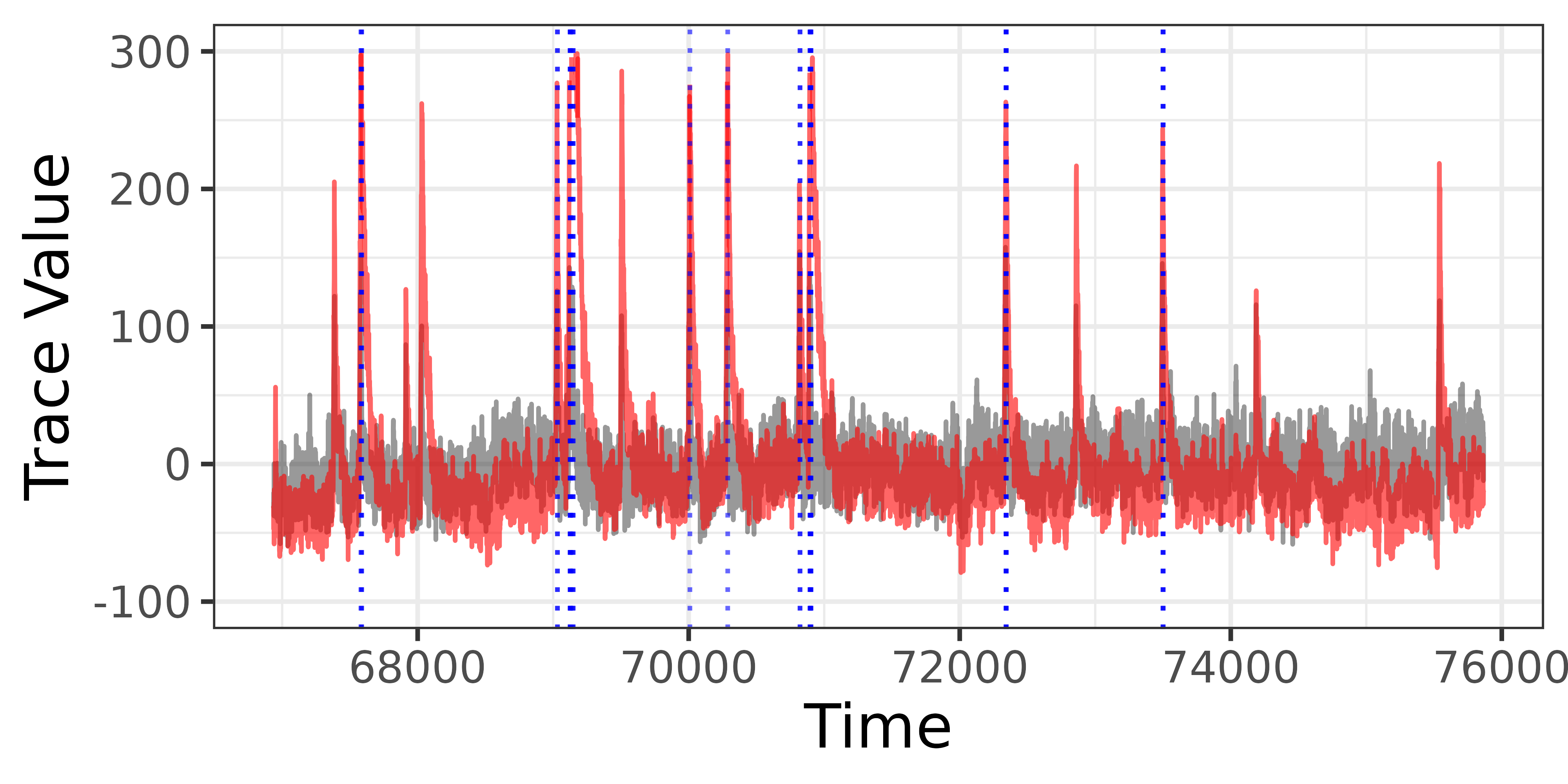}
        \caption{}
    \label{fig:aba_nn_2}
    \end{subfigure}%
\end{center}
    \caption{Fluorescence traces of one particular neuron (in grey) and edge neighbors (in red) from functional connectivity graphs estimated via BSVDgq (\textbf{(a, b)}) and BSVDgq-NPN (\textbf{(c, d)}).}
    \label{fig:flaba}
\end{figure}

\section{Discussion}

In this work, we have presented two potential approaches to nonparanormal Graph Quilting, MAD$_{\GQ}$-NPN and DSVDgq-NPN, which broaden the scope of Graph Quilting procedures to be applicable in the nonparanormal graphical model setting. We demonstrate theoretical properties of the MAD$_{\GQ}$-NPN method, showing criterion for exact edge recovery in the observed portion and minimum superset recovery in the missing portion of the graph. Through our empirical studies, we demonstrate that both nonparanormal Graph Quilting methods can be effective for edge selection for non-Gaussian data depending on the structure of the underlying covariance matrix. Through our real-world calcium imaging data example, we show that the nonparanormal Graph Quilting methods can be used to recover the same functional neuronal connectivity network edges from calcium imaging data in the presence of non-simultaneous observations for the full population of neurons as would be found if all neurons are observed concurrently, and that these methods can be applied to estimate more appropriate functional neuronal connectivity networks from calcium imaging data compared to Gaussian methods.

There are many potential directions for future research that can be taken from our work. While we have characterized the theoretical performance of the MAD$_{\GQ}$-NPN approach, we do not currently have any guarantees for the LRGQ-NPN methods. Methodologically, extensions to the nonparanormal Graph Quilting procedure to account for potential other data effects such as latent variables, autocorrelations, or covariates could improve graph estimation and edge selection accuracy. Also, the nonparanormal graphical models could be applied to research problems in other fields where joint observations may be missing, such as RNA-seq in genomics and signal processing in power systems. In conclusion, our work has helped to extend graph inference for nonparanormal graphical models in the presence of block-missingness in the observed covariance matrix, with theoretical guarantees for performance and promising empirical results for calcium imaging data.

\section*{Acknowledgements}

The authors gratefully acknowledge support by NSF NeuroNex-1707400, NIH 1R01GM140468, and NSF DMS-2210837.

\bibliographystyle{apalike}
\bibliography{reference}

\newpage

\appendix
\setcounter{theorem}{0}
\setcounter{assump}{0}

\section{Algorithmic Details}
In our LRGQ-NPN approach, one key step is to impute the covariance matrix $\h{\Sigma}^{\LR}$ from partially computed $\h{\Sigma}_O = \h{\Sigma}_O^{(\tau)}$ or $\h{\Sigma}_O^{(\rho)}$. We consider the following BSVDgq algorithm (Algorithm \ref{alg:sbsvdgq}) proposed in \citep{chang2022low} for this step.
\begin{algorithm}
  \caption{Spiked Block Singular Value Decomposition (Spiked BSVD\textsubscript{gq}) \citep{chang2022low}} \label{alg:sbsvdgq}
   \textbf{Input:} $\{V_k, k \in 1, \hdots K\}$, $\widehat{\Sigma}_O \in \mathbb{R}^{p\times p}$, $r > 0$.\\
  \textbf{Initialize:} $\widetilde{H} = \boldsymbol{0}_{p \times r}$, $A = V_1$, $\widehat{q} = \text{median}(\{\widehat{\Sigma}_{ii}, 1 \leq i \leq p\})$.\\
    Find low-rank solution for first patch: 
    \vspace{-2mm}
    \begin{enumerate}
        \item Calculate SVD of $(\widehat{\Sigma} - \widehat{q}\boldsymbol{I})_{V_1,\, V_1} = \mathbf{W}\boldsymbol{\Lambda}\mathbf{U}^T.$ 
        \vspace{-2mm}
        \item Set $\mathbf{h}$ as indices of the largest $r$ diagonal elements of $\boldsymbol{\Lambda}$.
        \vspace{-2mm}
        \item Set $\mathbf{C}_{V_1,\, :} = \mathbf{U}_{:, \mathbf{h}} \boldsymbol{\Lambda}_{\mathbf{h}, \mathbf{h}}^{1/2}.$
    \end{enumerate}
    \vspace{-2mm}
    For $s \in 2, \hdots K$
    \vspace{-2mm}
   \begin{enumerate}
       \item Find low-rank solution for $s$-th patch: 
       \vspace{-2mm}
       \begin{enumerate}
           \item Calculate SVD of $(\widehat{\Sigma} - \widehat{q}\boldsymbol{I})_{V_s,\, V_s} = \mathbf{W}\boldsymbol{\Lambda}\mathbf{U}^T.$
           \vspace{-2mm}
           \item Set $\mathbf{h}$ as indices of the largest $r$ diagonal elements of $\boldsymbol{\Lambda}$.
           \vspace{-2mm}
           \item Calculate $\mathbf{D} = \mathbf{U}_{:, \mathbf{h}} \boldsymbol{\Lambda}_{\mathbf{h}, \mathbf{h}}^{1/2}.$
       \end{enumerate}
       \vspace{-5mm}
       \item Merge with previous patches:
       \vspace{-2mm}
       \begin{enumerate}
           \item Find overlaps $E = \{A: a \in V_s\}, \, J = \{j: V_s[j] \in A\}$.
           \vspace{-2mm}
           \item Calculate SVD of $(\mathbf{D}_{J}^T \mathbf{C}_{E,\, :}) = \mathbf{W}\boldsymbol{\Lambda}\mathbf{U}^T.$
           \vspace{-2mm}
           \item Set $\mathbf{M} = \mathbf{C}_{E,\, :}, \mathbf{C}_{V_s,\, :} = \mathbf{D}\mathbf{W}\mathbf{U}^T$
           \vspace{-2mm}
           \item Set $\mathbf{C}_{E,\, :} = \mathbf{M}$\\
       \end{enumerate}
       \vspace{-8mm}
       \item Update $A = \bigcup_{k = 1}^s V_s $
   \end{enumerate}
   \vspace{-2mm}
    \Return{$\widetilde{\Sigma} = \mathbf{C}\mathbf{C}^\top.$}
\end{algorithm}

\section{Proofs}
For completeness, here we restate our MAD$_{\GQ}$-NPN algorithm, its theory and assumptions presented in the main paper.
The rank-based correlation is defined as follows:
\begin{equation}\label{eq:app_Sigma_rank}
    \widehat{\Sigma}^{(\rho)}_{j,l} = 2\sin\left(\frac{\pi}{6}\widehat{\rho}_{j,l}\right),\quad \widehat{\Sigma}^{(\tau)}_{j,l} = \sin\left(\frac{\pi}{2}\widehat{\tau}_{j,l}\right),
\end{equation}
where
\begin{equation}\label{eq:app_rho_tau}
    \widehat{\rho}_{j,l} = \frac{\sum_{k=1}^K\ind{j,l\in V_k}\widehat{\rho}^{(k)}_{j_k,l_k}}{\sum_{k=1}^K\ind{j,l\in V_k}}, \quad \widehat{\tau}_{j,l} = \frac{\sum_{k=1}^K\ind{j,l\in V_k}\widehat{\tau}^{(k)}_{j_k,l_k}}{\sum_{k=1}^K\ind{j,l\in V_k}},
\end{equation}
and 
\begin{equation}\label{eq:app_rho_tau_block}
\begin{split}
    \widehat{\rho}^{(k)}_{j,l} &= \frac{\sum_{i=1}^{n_k}(r^{(k)}_{i,j} - \bar{r}^{(k)}_{j})(r^{(k)}_{i,l} - \bar{r}^{(k)}_{l})}{\sqrt{\sum_{i=1}^{n_k}(r^{(k)}_{i,j} - \bar{r}^{(k)}_{j})^2\sum_{i=1}^{n_k}(r^{(k)}_{i,l} - \bar{r}^{(k)}_{l})^2}},\\
    \widehat{\tau}^{(k)}_{j,l} &= \frac{2}{n_k(n_k-1)}\sum_{1\leq i<i'\leq n_k}\mathrm{sign}((r^{(k)}_{i,j} - r^{(k)}_{i',j})(r^{(k)}_{i,l} - r^{(k)}_{i',l})).
\end{split}
\end{equation}

\begin{assump}[Weak distortion compared to signal]\label{assump:app_distortion_signal}
    We assume that the maximum off-diagonal distortion of the MAD$_{\GQ}$ solution is smaller than half the signal strength in the original precision matrix: $\delta<\frac{\nu}{2}$.
\end{assump}

\begin{assump}\label{assump:app_Schur_distortion1}
     For every node $i\in V$ with $N_{H_i}(i)\neq\emptyset$, we have that for every $k$ such that $i\in V_k$, there exists at least one node $j\in V_k\setminus \{i\}$ that is $(H_i\cup\{j\})$-connected to some node in $N_{H_i}(i)$.
\end{assump}

\begin{assump}\label{assump:app_Schur_distortion2}
    If $\delta_{i,\backslash i}^{(k)}\neq 0$, then there exists $j\neq i$ such that $0<|\tilde\Theta_{ij}^{(k)}|<\delta$.
\end{assump}

\begin{assump}[Incoherence condition]\label{assump:app_incoh}
    Let $\Gamma = \widetilde{\Sigma}\otimes \widetilde{\Sigma}$, $S = \{(j,l): \widetilde{\Theta}_{j,l}\neq 0\}$. We assume $\max_{e\in O\cap S^c}\|\Gamma_{e,S}\Gamma_{S,S}^{-1}\|_1\leq 1-\alpha$ for some $0<\alpha\leq 1$.
\end{assump}

\begin{assump}[Sufficient block measurements]\label{assump:app_V}
    The $K$ blocks cover all nodes: $\cup_{k=1}^KV_k = [p]$, and at least one off-diagonal element: $|O|>p$. 
\end{assump}

\begin{assump}[Regularization parameter]\label{assump:app_lambda}
    $\Lambda_{j,l} = \frac{C_0}{\alpha}\sqrt{\frac{\log p}{\min_k n_k}}$ for all $(j,l)\in O$ and some universal constant $C_0>0$.
\end{assump}

\begin{theorem}\label{thm:app_main}
Suppose that Assumptions \ref{assump:app_distortion_signal}-\ref{assump:app_lambda} hold, and there exist at least one edge in the graph encoded by $\Theta$ and $\til{\Theta}$: $d,\til{d}>2$. 
Then we have the following guarantees for Algorithm \ref{alg:MADgq}, with probability at least $1-\sum_kp_k^{-10}$:
\begin{itemize}
    \item \emph{\textbf{Exact recovery in $O$}}. If
\begin{equation}\label{eq:app_n_cond1}
    n_k> \left[\frac{C_0}{4}\kappa_{\til{\Gamma}}\left(1+\frac{8}{\alpha}\right)\left(\left(\frac{\nu}{2}-\delta\right)^{-1}+3\left(1+\frac{8}{\alpha}\right)(\kappa_{\til{\Sigma}}+\kappa^3_{\til{\Sigma}}\kappa_{\til{\Gamma}})\til{d}\right)\right]^2\log p_k,
\end{equation}
 $\delta+\varepsilon_1\leq \tau_1<\nu-\delta-\varepsilon_1$, where $\varepsilon_1 = \frac{C_0}{4}\kappa_{\til{\Gamma}}(1+\frac{8}{\alpha})\max_k\sqrt{\frac{\log p_k}{n_k}}$, then $\h{E}_O = E_O$.
    \item \emph{\textbf{Minimal superset recovery in $O^c$}}. If
\begin{equation}\label{eq:app_n_cond2}
n_k> C_0\kappa_{\til{\Gamma}}^2\left(1+\frac{8}{\alpha}\right)^2\left[\frac{9\til{\kappa}^4}{4\psi^2}+\frac{1}{4\lambda_{\min}(\til{\Theta})^2}\right]\min\{p+\til{s},\til{d}^2\}\log p_k,
\end{equation}
$\varepsilon_2\leq \tau_2< \psi - \varepsilon_2$, $\delta-\varepsilon_2< \tau_1\leq \nu-\varepsilon_2$, where $\varepsilon_2= \frac{3C_0}{4}\kappa_{\til{\Gamma}}(1+\frac{8}{\alpha})\til{\kappa}^2\min\{\sqrt{p+\til{s}},\til{d}\}\max_k\sqrt{\frac{\log p_k}{n_k}}$, then $\h{E}_{O^c} = \cS_{\off}$.
\end{itemize}
\end{theorem}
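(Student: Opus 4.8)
The plan is to decompose the argument into two layers, following the structure already advertised in the text: a \emph{concentration layer} that controls the deviation of the finite-sample quantities $\widehat{\Sigma}_O$, $\widehat{\til\Theta}$, and $\widehat{\til\Theta}^{(k)}$ from their population counterparts $\Sigma_O$, $\til\Theta$, $\til\Theta^{(k)}$, and a \emph{population layer} that is essentially imported wholesale from \citep{vinci2019graph}. The only genuinely new work is the concentration layer, and even there the novelty is modest: the rank-based estimators are $U$-statistics (Kendall) or functions of ranks (Spearman), so the starting point is the classical elementwise bound $\bP(|\widehat{\Sigma}^{(\tau)}_{j,l} - \Sigma_{j,l}| > t) \lesssim \exp(-c n_k t^2)$ from \citep{liu2012high} applied \emph{within each block} $k$. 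First I would state this per-block bound, then push it through the averaging step \eqref{eq:app_rho_tau}: since $\widehat\Sigma_{j,l}$ for $(j,l)\in O$ is a convex combination of at most $K$ independent per-block estimates, its deviation is controlled by the \emph{worst} block, giving $\|\widehat\Sigma_O - \Sigma_O\|_\infty \lesssim \sqrt{\log p \,/\, \min_k n_k}$ with probability at least $1 - \sum_k p_k^{-10}$ after a union bound over the $O(p^2)$ pairs (the $p_k^{-10}$ form and the constant $C_0$ in Assumption \ref{assump:app_lambda} are chosen precisely to absorb this union bound). The sine transform in \eqref{eq:app_Sigma_rank} is $1$-Lipschitz up to the fixed constants $\pi/6$, $\pi/2$, so it does not change the rate.

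Next I would convert the entrywise covariance error into an operator-type error on the precision estimate $\widehat{\til\Theta}$. This is exactly the setting of the graphical-Lasso analysis of \citep{ravikumar2011high}, adapted to the $O^c$-constrained program \eqref{eq:MADgqLasso}: under the incoherence Assumption \ref{assump:app_incoh} (stated for $\til\Gamma = \til\Sigma\otimes\til\Sigma$ restricted to the support $S$ of $\til\Theta$) and the choice of $\Lambda$ in Assumption \ref{assump:app_lambda}, the primal-dual witness construction yields $\|\widehat{\til\Theta} - \til\Theta\|_\infty \lesssim \kappa_{\til\Gamma}(1+\tfrac{8}{\alpha})\sqrt{\log p\,/\,\min_k n_k} =: \varepsilon_1$, together with correct sign/support recovery of $\til\Theta$ on entries above the noise floor. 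Here the only subtlety versus the vanilla argument is that the loss uses $\widehat\Sigma_O$ only on $O$ and the feasible set forces $\Theta_{O^c}=0$; but since $\til\Theta$ itself is defined as the population maximizer over the same constrained set \eqref{eq:Theta_tilde_def}, the witness argument goes through verbatim with $O\cap S^c$ playing the role of the "non-edge" set, which is exactly why Assumption \ref{assump:app_incoh} quantifies incoherence over $e\in O\cap S^c$. Combining this with the population fact from \citep{vinci2019graph} that $\til\Theta_O$ thresholded at level between $\delta$ and $\nu-\delta$ recovers $E_O$ (here using Assumption \ref{assump:app_distortion_signal}, $\delta<\nu/2$), the triangle inequality $|\widehat{\til\Theta}_{i,j}| \in |\til\Theta_{i,j}| \pm \varepsilon_1$ gives $\widehat E_O = E_O$ whenever $\tau_1\in[\delta+\varepsilon_1,\ \nu-\delta-\varepsilon_1)$ and $n_k$ exceeds the stated threshold (so that this window is nonempty). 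This is the first bullet.

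For the second bullet I would propagate the error through the Schur complement \eqref{eq:MADgqSchur}. Writing $\widehat{\til\Theta}^{(k)} - \til\Theta^{(k)}$ as a sum of three matrix-product differences and bounding each via $\|\widehat{\til\Theta}-\til\Theta\|_\infty$, submultiplicativity of $\vertiii{\cdot}_\infty$, and $\vertiii{\til\Sigma}_\infty = \kappa_{\til\Sigma}$, $\lambda_{\min}(\til\Theta)$, one gets $\|\widehat{\til\Theta}^{(k)} - \til\Theta^{(k)}\|_\infty \lesssim \kappa_{\til\Gamma}(1+\tfrac8\alpha)\til\kappa^2 \min\{\sqrt{p+\til s},\til d\}\sqrt{\log p_k/n_k} =: \varepsilon_2$; the $\min\{p+\til s,\til d^2\}$ in \eqref{eq:app_n_cond2} is the square of this dimension factor, coming from either counting nonzeros of $\til\Theta$ (the $p+\til s$ branch) or a row-sparsity bound (the $\til d$ branch). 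Then, by the population theory of \citep{vinci2019graph} together with Assumptions \ref{assump:app_Schur_distortion1}--\ref{assump:app_Schur_distortion2}: a node $i$ has an $O^c$-edge iff for every block $k\ni i$ there is some $j$ with $0<|\til\Theta^{(k)}_{i,j}|<\delta$, and $\psi$ is precisely the margin by which such entries sit away from both $0$ and $\delta$. So if $\varepsilon_2<\psi$, the empirical test $\tau_2<|\widehat{\til\Theta}^{(k)}_{i,j}|<\tau_1$ with $\tau_2\in[\varepsilon_2,\psi-\varepsilon_2)$ and $\tau_1\in(\delta-\varepsilon_2,\nu-\varepsilon_2]$ recovers $\widehat W_{\tau_1,\tau_2}$ equal to the true node set, hence $\widehat E_{O^c} = \cS_{\off}$ by definition of the minimal superset. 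The main obstacle I anticipate is not any single step but the bookkeeping to verify that every constant in \eqref{eq:app_n_cond1}--\eqref{eq:app_n_cond2} is consistent with (i) the per-block concentration constant, (ii) the $\kappa_{\til\Gamma}(1+8/\alpha)$ factor from the witness argument, and (iii) the Schur-complement amplification — i.e. making the three error bars $\varepsilon_1$, $\varepsilon_2$, $\psi$, and the thresholding windows fit together so that all the stated intervals for $\tau_1,\tau_2$ are simultaneously nonempty; the conceptual content is light, but the constant-tracking is where an error would most plausibly creep in.
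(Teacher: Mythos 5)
Your proposal matches the paper's proof essentially step for step: per-block concentration of the rank correlations via Hoeffding's inequality for U-statistics from \citep{liu2012high} plus a union bound over blocks and the Lipschitz sine transform, a primal-dual-witness bound $\|\widehat{\widetilde{\Theta}}-\widetilde{\Theta}\|_\infty\leq\varepsilon_1$ for the $O^c$-constrained program under the incoherence condition, a Schur-complement perturbation bound giving $\varepsilon_2$, and the population theory of \citep{vinci2019graph} combined with sandwich arguments on the thresholds to finish. The only cosmetic difference is that the paper controls the Schur complement in one shot via a perturbation lemma through the spectral norm and condition numbers (Lemmas C.2--C.3 of \citep{vinci2019graph}) rather than the elementwise three-term expansion you describe, but this yields exactly the $\til{\kappa}^2\min\{\sqrt{p+\til{s}},\til{d}\}$ amplification factor you anticipated, so the two routes coincide in substance.
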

\begin{proof}[Proof of Theorem \ref{thm:app_main}]
First of all, we note that a population version of Theorem \ref{thm:app_main} has been proved in \citep{vinci2019graph} (see Theorem 3.1 and Theorem 3.4), which shows the exact edge recovery in $O$ and minimal superset recovery in $O^c$ based on thresholding $\til{\Theta}$ and $\til{\Theta}^{(k)}$. Our proof of Theorem \ref{thm:app_main} mainly involves applying the population results in \citep{vinci2019graph} and some new estimation error bounds for $\|\h{\til{\Theta}}-\til{\Theta}\|_{\infty}$ and $\|\h{\til{\Theta}}^{(k)}-\til{\Theta}^{(k)}\|_{\infty}$.

In particular, when Assumptions \ref{assump:app_distortion_signal}-\ref{assump:app_Schur_distortion2} hold, \citep{vinci2019graph} shows that 
\begin{equation}\label{eq:app_popEO}
    E_O = \{(i,j)\in O: |\til{\Theta}_{i,j}|> t\}, \quad \forall \delta\leq t\leq \nu-\delta,
\end{equation}
and for any $\delta\leq t\leq \nu$,
\begin{equation}\label{eq:app_popEOc}
    \cS_{\off} = O^c\cap (W_t\times W_t), W_t = \big\{i\in V:~ \forall k ~s.t.~i\in V_k,~\exists j\neq i, 0<|\tilde\Theta^{(k)}_{ij}|< t\big\}.
\end{equation}
On the other hand, the following two theorems show the proximity between $\h{\til{\Theta}}$ and $\til{\Theta}$, $\h{\til{\Theta}}^{(k)}$ and $\til{\Theta}^{(k)}$:
\begin{theorem}\label{thm:app_Theta_tilde_est_err}
Suppose that Assumptions \ref{assump:app_incoh} and \ref{assump:app_V} hold, $\Lambda_{j,l} = \frac{C_0}{\alpha}\sqrt{\frac{\log p}{\min_k n_k}}$ for all $(j,l)\in O$ and some universal constant $C_0>0$, and there exist at least one edge in the graph encoded by $\Theta$ and $\til{\Theta}$: $d,\til{d}>2$. If for all $1\leq k\leq K$, $$n_k\geq \left[\frac{3C_0}{4}\left(1+\frac{8}{\alpha}\right)^2(\kappa_{\widetilde{\Sigma}}\kappa_{\til{\Gamma}}+ \kappa_{\widetilde{\Sigma}}^3\kappa_{\widetilde{\Gamma}}^2)\right]^2\widetilde{d}^2\log p_k,$$ then with probability at least $1-\sum_{k=1}^Kp_k^{-10}$, $\widehat{\widetilde{\Theta}}$ defined in \eqref{eq:MADgqLasso} satisfies
$$\|\widehat{\widetilde{\Theta}}-\widetilde{\Theta}\|_{\infty}\leq \frac{C_0}{4}\kappa_{\til{\Gamma}}(1+\frac{8}{\alpha})\max_k\sqrt{\frac{\log p_k}{n_k}}.$$
\end{theorem}
\begin{theorem}\label{thm:app_Theta_Schur_err}
    Suppose all conditions in Theorem \ref{thm:app_Theta_tilde_est_err} hold. In addition, if for $1\leq k\leq K$,
    $$
    n_k\geq \frac{C_0\kappa_{\til{\Gamma}}^2(1+\frac{8}{\alpha})^2}{4\lambda_{\min}(\til{\Theta})^2}\min\{p+\til{s},\til{d}^2\}\log p_k,
    $$
    then for any $1\leq k\leq K$,
    $$\|\h{\til{\Theta}}^{(k)} - \til{\Theta}^{(k)}\|_{\infty}\leq \frac{3C_0}{4}\til{\kappa}^2\kappa_{\til{\Gamma}}(1+\frac{8}{\alpha})\min\{\sqrt{p+\til{s}},\til{d}\}\max_k\sqrt{\frac{\log p_k}{n_k}}$$
    holds with probability at least $1-\sum_{k=1}^Kp_k^{-10}$.
\end{theorem}
When the sample size condition \eqref{eq:app_n_cond1} holds, the conditions in Theorem \ref{thm:app_Theta_tilde_est_err} are all satisfied, and hence $\|\widehat{\widetilde{\Theta}}-\widetilde{\Theta}\|_{\infty}\leq \varepsilon_1$ defined in Theorem \ref{thm:app_main}. Thus when $\delta+\varepsilon_1\leq \tau_1\leq \nu-\delta-\varepsilon_1$, $$\h{E}_O = \{(i,j)\in O: i\neq j,|\h{\widetilde{\Theta}}_{i,j}|>\tau_1\}\subset \{(i,j)\in O: i\neq j,|\widetilde{\Theta}_{i,j}|>\delta\} = E_O,$$ and $$\h{E}_O \supset \{(i,j)\in O: i\neq j,|\widetilde{\Theta}_{i,j}|>\tau_1+\varepsilon_1\}\supset \{(i,j)\in O: i\neq j,|\widetilde{\Theta}_{i,j}|>\nu-\delta\} = E_O.$$ The sample size condition \eqref{eq:app_n_cond1} also ensures that $\nu-\delta-\varepsilon_1> \delta+\varepsilon_1$ so that the required range for $\tau_1$ is not empty.

In addition, when the sample size condition \eqref{eq:app_n_cond2} holds, all conditions in Theorem \ref{thm:app_Theta_Schur_err} are satisfied, and the range for $\tau_2$ is not empty. Thus $\|\widehat{\widetilde{\Theta}}^{(k)}-\widetilde{\Theta}^{(k)}\|_{\infty}\leq \varepsilon_2$. When $\varepsilon_2\leq \tau_2<\psi-\varepsilon_2$, $\delta-\varepsilon_2<\tau_3\leq \nu-\varepsilon_2$, the set $\h{W}_{\tau_2,\tau_3}$ defined in \eqref{eq:superset_node} satisfies
\begin{equation*}
\begin{split}
    \h{W}_{\tau_2,\tau_3} &= \{i\in V: \forall k~s.t.~i\in V_k,~\exists j\neq i,~\tau_2<|\h{\widetilde{\Theta}}^{(k)}_{i,j}|<\tau_3\}\\
    &\subset\{i\in V: \forall k~s.t.~i\in V_k,~\exists j\neq i,~0<|\widetilde{\Theta}^{(k)}_{i,j}|<\nu\} \\
    &= W_\nu,
\end{split}
\end{equation*}
and 
\begin{equation*}
\begin{split}
    \h{W}_{\tau_2,\tau_3} &\supset\{i\in V: \forall k~s.t.~i\in V_k,~\exists j\neq i,~\psi\leq |\widetilde{\Theta}^{(k)}_{i,j}|\leq \delta-2\varepsilon\} \\
    &= \{i\in V: \forall k~s.t.~i\in V_k,~\exists j\neq i,~0<|\widetilde{\Theta}^{(k)}_{i,j}|< \delta\} \\
    &= W_\delta,
\end{split}
\end{equation*}
where the second line is due to Assumption \ref{assump:app_Schur_distortion2}. Therefore, combining these results with \eqref{eq:app_popEOc}, one has $\h{E}_{O^c} = O^c\cap(\h{W}_{\tau_2,\tau_3}\times \h{W}_{\tau_2,\tau_3}) = \cS_{\off}$.
\end{proof}

\begin{proof}[Proof of Theorem \ref{thm:app_Theta_tilde_est_err}]
Our proof follows similar ideas to the proof of Theorem 4.1 in \citep{vinci2019graph}, while the main difference lies in the correlation matrix estimation. In particular, note that the population MAD$_{\GQ}$ solution $\til{\Theta}$ is the same as the one in \citep{vinci2019graph}, computed from the population correlation matrix. While for the finite sample estimate $\h{\til{\Theta}}$ we defined in \eqref{eq:MADgqLasso}, it depends on the rank based correlation $\h{\Sigma}_O = \h{\Sigma}^{(\tau)}_O$ or $\h{\Sigma}^{(\rho)}_O$, instead of the sample covariance matrix.

However, one thing to note in the proof of Theorem 4.1 in \citep{vinci2019graph} is that we only care about the infinity-norm error bound $\|\h{\Sigma}_O-\Sigma_O\|_{\infty}$ instead of its specific form. The following lemma characterizes this error bound for both rank-based correlation estimates.
\begin{lemma}\label{lem:app_rank_corr_err}
There exists a universal constant $C>0$, such that as long as $n_k\geq C$ holds for all $1\leq k\leq K$, with probability at least $1-\sum_{k=1}^Kp_k^{-10}$, $\widehat{\Sigma}^{(\rho)}_O$ and $\widehat{\Sigma}^{(\tau)}_O$ defined in \eqref{eq:app_Sigma_rank} satisfy
\begin{equation*}
    \|\widehat{\Sigma}^{(\rho)}_O-\Sigma_O\|_{\infty}\leq C\max_k\sqrt{\frac{\log p_k}{n_k}},\quad \|\widehat{\Sigma}^{(\tau)}_O-\Sigma_O\|_{\infty}\leq C\max_k\sqrt{\frac{\log p_k}{n_k}}.
\end{equation*}
\end{lemma}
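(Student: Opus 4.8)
The plan is to reduce Lemma~\ref{lem:app_rank_corr_err} to a single-block, single-entry concentration statement for each of the two rank correlations --- which is essentially the content of \citep{liu2012high} in the full-data setting --- and then propagate it through the block-averaging in \eqref{eq:app_rho_tau} and a two-level union bound. Fix a block $k$ and a pair $(j,l)$ with $j,l\in V_k$, and let $\tau_{j,l},\rho_{j,l}$ denote the population Kendall's tau and Spearman's rho of the underlying Gaussian pair, which satisfy the classical identities $\tau_{j,l}=\tfrac{2}{\pi}\arcsin(\Sigma_{j,l})$ and $\rho_{j,l}=\tfrac{6}{\pi}\arcsin(\Sigma_{j,l}/2)$. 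The block-wise Kendall's tau $\h{\tau}^{(k)}_{j_k,l_k}$ is a U-statistic of order two with kernel taking values in $\{-1,1\}$, so Hoeffding's inequality for U-statistics (using only within-block i.i.d.-ness of the $n_k$ samples) yields $\bP(|\h{\tau}^{(k)}_{j_k,l_k}-\tau_{j,l}|>t)\leq 2\exp(-c\,n_k t^2)$ for a universal $c>0$. The block-wise Spearman's rho $\h{\rho}^{(k)}_{j_k,l_k}$ is not itself a U-statistic, but can be written as a bounded U-statistic of order three plus a lower-order remainder; the bound $\bP(|\h{\rho}^{(k)}_{j_k,l_k}-\rho_{j,l}|>t)\leq c_1\exp(-c_2\,n_k t^2)$ is established in \citep{liu2012high} once $n_k$ exceeds a universal constant.

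Next I would transfer these errors through the sine transforms. Since $x\mapsto\sin x$ is $1$-Lipschitz and $\Sigma_{j,l}=\sin(\tfrac{\pi}{2}\tau_{j,l})=2\sin(\tfrac{\pi}{6}\rho_{j,l})$, the definitions in \eqref{eq:app_Sigma_rank} give $|\h{\Sigma}^{(\tau)}_{j,l}-\Sigma_{j,l}|\leq \tfrac{\pi}{2}|\h{\tau}_{j,l}-\tau_{j,l}|$ and $|\h{\Sigma}^{(\rho)}_{j,l}-\Sigma_{j,l}|\leq \tfrac{\pi}{3}|\h{\rho}_{j,l}-\rho_{j,l}|$, where $\h{\tau}_{j,l},\h{\rho}_{j,l}$ are the averaged quantities in \eqref{eq:app_rho_tau}; it therefore suffices to control $|\h{\tau}_{j,l}-\tau_{j,l}|$ and $|\h{\rho}_{j,l}-\rho_{j,l}|$ over $(j,l)\in O$. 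The block-averaging is harmless: for $(j,l)\in O$ we may write $\h{\rho}_{j,l}-\rho_{j,l}=\big(\sum_k\ind{j,l\in V_k}\big)^{-1}\sum_k\ind{j,l\in V_k}\,(\h{\rho}^{(k)}_{j_k,l_k}-\rho_{j,l})$, a convex combination of per-block errors (using that $\rho_{j,l}$ is the \emph{same} population quantity for every block containing $(j,l)$), so $|\h{\rho}_{j,l}-\rho_{j,l}|\leq \max_{k:\,j,l\in V_k}|\h{\rho}^{(k)}_{j_k,l_k}-\rho_{j,l}|$, and likewise for $\tau$.

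To finish, I would apply the per-entry tail bounds with $t=C\max_k\sqrt{\log p_k/n_k}$: choosing the universal constant $C$ large enough, each per-block per-entry failure probability is at most $p_k^{-12}$; a union bound over the at most $p_k^2$ pairs in block $k$ makes the per-block failure probability at most $p_k^{-10}$, and a final union bound over $k=1,\dots,K$ gives total failure probability $\sum_k p_k^{-10}$. On the complementary event, all per-block per-entry errors are at most $C\sqrt{\log p_k/n_k}\leq C\max_k\sqrt{\log p_k/n_k}$, so by the convex-combination bound and the Lipschitz transfer, $\|\h{\Sigma}^{(\tau)}_O-\Sigma_O\|_\infty$ and $\|\h{\Sigma}^{(\rho)}_O-\Sigma_O\|_\infty$ are both at most a constant multiple of $\max_k\sqrt{\log p_k/n_k}$; absorbing $\pi/2$, $\pi/3$, $c$, $c_1$, $c_2$ into a single universal constant (still called $C$) gives the claim, with the requirement $n_k\geq C$ ensuring we are in the regime where the cited concentration bounds hold.

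The only genuinely non-routine ingredient is the exponential concentration of Spearman's rho around $\tfrac{6}{\pi}\arcsin(\Sigma_{j,l}/2)$: because $\h{\rho}^{(k)}_{j_k,l_k}$ is a ratio of rank sums rather than a bona fide U-statistic, its tail bound requires the more delicate Hoeffding-decomposition argument of \citep{liu2012high}, which I would simply invoke. Everything else --- the Lipschitz transfer through $\sin$, the observation that weighted averaging over overlapping blocks cannot inflate the worst-case error, and the two-level union bound with the constant tuned to produce the exponent $10$ --- is routine bookkeeping.
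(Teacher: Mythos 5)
Your proposal is correct and follows essentially the same route as the paper: per-block U-statistic concentration from \citep{liu2012high}, a union bound over blocks, the observation that the overlap-averaging in \eqref{eq:app_rho_tau} is a convex combination of per-block errors, and the Lipschitz transfer through the sine maps. The only cosmetic difference is that the paper concentrates $\widehat{\rho}^{(k)}$ around its expectation and then explicitly tracks the $O(1/n_k)$ bias term $a^{(k)}_{j,l}$ of Spearman's rho relative to $\frac{6}{\pi}\arcsin(\Sigma_{j,l}/2)$ before averaging, whereas you absorb that bias into the cited concentration bound; both treatments are valid since the bias is dominated by $\sqrt{\log p_k/n_k}$ in the relevant regime.
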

Lemma \ref{lem:app_rank_corr_err} is proved by applying Theorems 4.1 and 4.2 in \citep{liu2012high} on each block $k$ and a union bound over $K$ blocks.

Inspired by Lemma \ref{lem:app_rank_corr_err}, we specifically change the upper bound $\sigma(\bar{n},p^b)$ in \citep{vinci2019graph} to $C\max_k\sqrt{\frac{\log p_k}{n_k}}$, which is an upper bound for $\|\widehat{\Sigma}_O-\Sigma_O\|_{\infty}$ with probability at least $1-\sum_{k=1}^Kp_k^{-10}$. Similar to \citep{vinci2019graph}, we let $\lambda_{j,l} = \frac{C_0}{\alpha}\max_k\sqrt{\frac{\log p_k}{n_k}}$ with $C_0= 8C$. Recognizing the fact that our sample size condition in Theorem \ref{thm:app_Theta_tilde_est_err} suggests 
$$
C\max_k\sqrt{\frac{\log p_k}{n_k}}\leq \left[6\left(1+\frac{8}{\alpha}\right)^2\til{d}\max\{\kappa_{\til{\Sigma}}\kappa_{\til{\Gamma}}, \kappa_{\til{\Sigma}}^3\kappa_{\til{\Gamma}}^2\}\right]^{-1},
$$
we can follow the same arguments as in \citep{vinci2019graph}, and eventually show that 
\begin{equation}\label{eq:app_MADgqLasso_supp}
    \{(j,l):\h{\til{\Theta}}_{j,l} \neq 0\}\subset \{(j,l):\til{\Theta}_{j,l} \neq 0\},
\end{equation}
and $$\|\h{\til{\Theta}}-\til{\Theta}\|_{\infty}\leq \frac{C_0}{4}\kappa_{\til{\Gamma}}(1+\frac{8}{\alpha})\max_k\sqrt{\frac{\log p_k}{n_k}},$$ with probability at least $1-\sum_{k=1}^Kp_k^{-10}$.
\end{proof}
    
\begin{proof}[Proof of Theorem \ref{thm:app_Theta_Schur_err}]
    To prove Theorem \ref{thm:app_Theta_Schur_err}, we make use of Lemma C.2 and Lemma C.3 in \citep{vinci2019graph}, which bounds the entrywise error bounds of Schur complements as a function of the error of full matrices. For completeness, here we restate these two lemmas.
    \begin{lemma}[Lemma C.2 in \citep{vinci2019graph}]\label{lem:app_schurbnd}
        Let $X,Y\in\mathbb{R}^{p\times p}$ be positive definite matrices. Then, for any nonempty set $A\subset [p]$,
        \begin{equation}
        \left\| X/X_{A^cA^c} -Y/Y_{A^cA^c}\right\|_\infty \le \frac{\lambda_{\rm max}(X)}{\lambda_{\rm min}(X)}\frac{\lambda_{\rm max}(Y)}{\lambda_{\rm min}(Y)}\left\| X-Y\right\|_2
        \end{equation}
        where $X/X_{A^cA^c}=X_{AA}-X_{AA^c}(X_{A^cA^c})^{-1}X_{A^cA}$ is the Schur Complement of the block $A^c\times A^c$ of the matrix $X$, $\lambda_{\rm min}(X)$ and $\lambda_{\rm max}(X)$ are the smallest and the largest eigenvalues of $X$.
    \end{lemma}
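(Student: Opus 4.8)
The plan is to convert this Schur-complement perturbation bound into an elementary perturbation bound for matrix inverses. The key observation is the classical block-inverse identity: when $X\succ 0$ is partitioned according to $A$ and $A^c$, the block $X_{A^cA^c}$ is itself positive definite, hence invertible, and the $A\times A$ block of $X^{-1}$ equals the inverse of the Schur complement, $(X^{-1})_{AA}=\big(X/X_{A^cA^c}\big)^{-1}$; consequently $X/X_{A^cA^c}=\big[(X^{-1})_{AA}\big]^{-1}$, and likewise for $Y$. Writing $P:=(X^{-1})_{AA}$ and $Q:=(Y^{-1})_{AA}$, the quantity to be bounded is $\big\|P^{-1}-Q^{-1}\big\|_\infty$, where $\|\cdot\|_\infty$ is the entrywise maximum norm.

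From here I would proceed in three short steps. First, pass from the entrywise norm to the spectral norm via $\|M\|_\infty=\max_{i,j}|e_i^\top M e_j|\le\|M\|_2$, then use the resolvent identity $P^{-1}-Q^{-1}=P^{-1}(Q-P)Q^{-1}$ together with submultiplicativity of $\|\cdot\|_2$ to obtain $\|P^{-1}-Q^{-1}\|_\infty\le\|P^{-1}\|_2\,\|Q-P\|_2\,\|Q^{-1}\|_2$. Second, control the two outer factors by Cauchy eigenvalue interlacing: since $P$ is a principal submatrix of $X^{-1}$, $\lambda_{\min}(P)\ge\lambda_{\min}(X^{-1})=1/\lambda_{\max}(X)$, so $\|P^{-1}\|_2\le\lambda_{\max}(X)$, and likewise $\|Q^{-1}\|_2\le\lambda_{\max}(Y)$. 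Third, control the middle factor using that a principal submatrix has spectral norm no larger than the whole matrix, so $\|Q-P\|_2=\|(Y^{-1}-X^{-1})_{AA}\|_2\le\|Y^{-1}-X^{-1}\|_2$, and then apply the resolvent identity a second time: $\|Y^{-1}-X^{-1}\|_2\le\|Y^{-1}\|_2\,\|X-Y\|_2\,\|X^{-1}\|_2=\|X-Y\|_2/(\lambda_{\min}(X)\lambda_{\min}(Y))$. Multiplying the three bounds yields exactly
\[
\big\|X/X_{A^cA^c}-Y/Y_{A^cA^c}\big\|_\infty\le\frac{\lambda_{\max}(X)}{\lambda_{\min}(X)}\,\frac{\lambda_{\max}(Y)}{\lambda_{\min}(Y)}\,\|X-Y\|_2 .
\]

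There is no genuine obstacle here, as the statement is a deterministic linear-algebra fact; the only thing requiring care is norm bookkeeping — keeping the entrywise $\|\cdot\|_\infty$ strictly separate from the operator $\|\cdot\|_2$, and correctly invoking the two submatrix facts (interlacing for the smallest eigenvalue, and $\|M_{AA}\|_2=\|S M S^\top\|_2\le\|M\|_2$ for the coordinate-selection matrix $S$ with $\|S\|_2=1$). It is also worth noting at the outset that the positive definiteness of $X$ and $Y$ is precisely what makes every Schur complement, block inverse, and reciprocal eigenvalue appearing in the argument well defined.
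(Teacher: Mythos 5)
Your proof is correct, and every step checks out: the block-inverse identity $X/X_{A^cA^c}=\bigl[(X^{-1})_{AA}\bigr]^{-1}$ is the standard one, the resolvent identity $P^{-1}-Q^{-1}=P^{-1}(Q-P)Q^{-1}$ is applied twice legitimately, Cauchy interlacing does give $\lambda_{\min}\bigl((X^{-1})_{AA}\bigr)\ge \lambda_{\min}(X^{-1})=1/\lambda_{\max}(X)$, and the principal-submatrix bound $\|M_{AA}\|_2\le\|M\|_2$ is valid; the constants multiply out to exactly the stated condition-number product. Note that this paper does not actually prove the lemma — it imports it verbatim as Lemma C.2 of the cited reference and restates it "for completeness" — so your argument is a self-contained substitute rather than a variant of anything in the text; the route you take (reducing the Schur-complement perturbation to an inverse perturbation via the block-inverse formula) is the natural one that the form of the bound suggests, and is almost certainly how the original reference proves it as well.
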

    \begin{lemma}[Lemma C.3 in \citep{vinci2019graph}]\label{lem:app_spectral_max}
        For any $p\times p$ symmetric matrix $X$ with max row-degree smaller than or equal to $d$, we have 
        \begin{equation}
        \Vert X\Vert_2\le\min(\sqrt{\Vert X\Vert_0},d)\Vert X\Vert_\infty
        \end{equation}
    where $\Vert X\Vert_2$ is the spectral norm, $\Vert X\Vert_0:=|\{(i,j):X_{ij}\neq 0\}|$,  and $\Vert X\Vert_\infty$ is the max norm.
\end{lemma}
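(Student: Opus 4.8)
The plan is to prove the two inequalities $\Vert X\Vert_2 \le \sqrt{\Vert X\Vert_0}\,\Vert X\Vert_\infty$ and $\Vert X\Vert_2 \le d\,\Vert X\Vert_\infty$ separately and then take the smaller of the two. Both are elementary consequences of standard comparisons between matrix norms, so the argument is short; the only point that deserves a little care is that the symmetry hypothesis on $X$ is exactly what makes the second bound clean, and I do not expect any real obstacle here.

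For the first inequality I would bound the spectral norm by the Frobenius norm, $\Vert X\Vert_2 \le \Vert X\Vert_F = \big(\sum_{i,j} X_{ij}^2\big)^{1/2}$, and then observe that at most $\Vert X\Vert_0$ of the summands are nonzero while each is at most $\Vert X\Vert_\infty^2$ in absolute value; hence $\Vert X\Vert_F \le \sqrt{\Vert X\Vert_0}\,\Vert X\Vert_\infty$, giving the first bound.

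For the second inequality, since $X$ is symmetric its spectral norm equals its spectral radius, so by the Gershgorin circle theorem every eigenvalue $\lambda$ of $X$ satisfies $|\lambda| \le \max_i \sum_j |X_{ij}|$; that is, $\Vert X\Vert_2$ is at most the maximum absolute row sum of $X$. (Equivalently, $\Vert X\Vert_2 \le \sqrt{\Vert X\Vert_{1\to1}\,\Vert X\Vert_{\infty\to\infty}}$, and symmetry forces $\Vert X\Vert_{1\to1} = \Vert X\Vert_{\infty\to\infty}$, the maximum absolute row sum; for a non-symmetric matrix one would instead also pick up column sums.) Since each row of $X$ has at most $d$ nonzero entries, each of magnitude at most $\Vert X\Vert_\infty$, the maximum absolute row sum is at most $d\,\Vert X\Vert_\infty$. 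Combining the two bounds yields $\Vert X\Vert_2 \le \min(\sqrt{\Vert X\Vert_0},\,d)\,\Vert X\Vert_\infty$, as claimed.
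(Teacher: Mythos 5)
Your proof is correct. Note that the paper does not actually prove this lemma itself --- it is imported verbatim as Lemma C.3 of \citep{vinci2019graph} --- and your two-part argument (spectral norm bounded by the Frobenius norm for the $\sqrt{\Vert X\Vert_0}$ factor, and spectral radius bounded by the maximum absolute row sum via Gershgorin, using symmetry, for the $d$ factor) is exactly the standard route one would expect for this bound. The only small point worth stating explicitly is that ``max row-degree $\le d$'' is being read as each row having at most $d$ nonzero entries, consistent with the paper's definition $d=\max_i\Vert\Theta_{i,:}\Vert_0$; with that reading, both inequalities and their combination are airtight.
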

    
 As has been shown in the proof of Theorem \ref{thm:app_Theta_tilde_est_err}, under appropriate conditions, with probability at least $1-\sum_{k=1}^Kp_k^{-10}$, the support set of $\h{\til{\Theta}}$ is a subset of the support set of $\til{\Theta}$; and $\|\h{\til{\Theta}}-\til{\Theta}\|_{\infty}\leq \frac{C_0}{4}\kappa_{\til{\Gamma}}(1+\frac{8}{\alpha})\max_k\sqrt{\frac{\log p_k}{n_k}}$. Therefore, by applying Lemma \ref{lem:app_schurbnd}, we have
    \begin{equation}\label{eq:app_Theta_Schur_err_step}
        \begin{split}
            \|\h{\til{\Theta}}^{(k)}-\til{\Theta}^{(k)}\|_{\infty}\leq \frac{\lambda_{\max}(\til{\Theta})}{\lambda_{\min}(\til{\Theta})}\frac{\lambda_{\max}(\h{\til{\Theta}})}{\lambda_{\min}(\h{\til{\Theta}})}\|\h{\til{\Theta}}-\til{\Theta}\|_2,
        \end{split}
    \end{equation}
    and an application of Lemma \ref{lem:app_spectral_max} leads to 
    \begin{equation*}
        \begin{split}
            \|\h{\til{\Theta}}-\til{\Theta}\|_2&\leq\min\{\sqrt{p+\til{s}}, \til{d}\}\|\h{\til{\Theta}}-\til{\Theta}\|_{\infty}\\
            &\leq \frac{C_0}{4}\kappa_{\til{\Gamma}}(1+\frac{8}{\alpha})\min\{\sqrt{p+\til{s}}, \til{d}\}\max_k\sqrt{\frac{\log p_k}{n_k}}\\
            &\leq \frac{1}{2}\lambda_{\min}(\til{\Theta}),
        \end{split}
    \end{equation*}
    where the last line is due to the sample size condition in Theorem \ref{thm:app_Theta_Schur_err}. Hence
    $\lambda_{\max}(\h{\til{\Theta}})\leq \lambda_{\max}(\til{\Theta}) +\|\h{\til{\Theta}}^{(k)}-\til{\Theta}^{(k)}\|_2\leq \frac{3}{2}\lambda_{\max}(\til{\Theta})$, and $\lambda_{\min}(\h{\til{\Theta}})\geq \lambda_{\min}(\til{\Theta}) -\|\h{\til{\Theta}}^{(k)}-\til{\Theta}^{(k)}\|_2\geq \frac{1}{2}\lambda_{\min}(\til{\Theta})$. Combining these bounds for $\lambda_{\max}(\h{\til{\Theta}})$, $\lambda_{\min}(\h{\til{\Theta}})$, $\|\h{\til{\Theta}}^{(k)}-\til{\Theta}^{(k)}\|_2$, with \eqref{eq:app_Theta_Schur_err_step}, we have
    \begin{equation*}
        \|\h{\til{\Theta}}^{(k)}-\til{\Theta}^{(k)}\|_2\leq \frac{3C_0}{4}\kappa_{\til{\Theta}}^2\kappa_{\til{\Gamma}}(1+\frac{8}{\alpha})\min\{\sqrt{p+\til{s}}, \til{d}\}\max_k\sqrt{\frac{\log p_k}{n_k}}.
    \end{equation*}
\end{proof}

\begin{proof}[Proof of Lemma \ref{lem:app_rank_corr_err}]
    Firstly, we note that our rank-based correlation matrices $\h{\Sigma}^{(\rho)}$ and $\h{\Sigma}^{(\tau)}$ are computed in three steps: we compute the rank statistics, Kendall's tau and Spearman's rho, for each block as in \eqref{eq:app_rho_tau_block}; we then combine the results of $K$ blocks by taking an average on the overlapping part; finally, two sine transformations are applied entrywise on the combined rank statistics as in \eqref{eq:app_Sigma_rank}. In the following, we will prove the error bounds following the same three steps, and leverage similar ideas from the proof of Theorems 4.1 and 4.2 in \citep{liu2012high}. 
    
    In the first step, we note that for $1\leq k\leq K$, $\widehat{\rho}^{(k)}_{j,l}$ and $\widehat{\tau}^{(k)}_{j,l}$ are computed as the standard Spearman's rho and Kendall's tau from the data $X^{(k)}$ of the $k$th block, and hence we can apply the Hoeffding's inequality for U-statistics as in the proof of Theorems 4.1 and 4.2 in \citep{liu2012high}:
    \begin{equation}\label{eq:app_rho_tau_err}
        |\h{\rho}^{(k)}_{j,l} - \bE\h{\rho}^{(k)}_{j,l}|\leq C\sqrt{\frac{\log p_k}{n_k}},\quad|\h{\tau}^{(k)}_{j,l} - \bE\h{\tau}^{(k)}_{j,l}|\leq C\sqrt{\frac{\log p_k}{n_k}},
    \end{equation}
    for $j,l\in[p_k]$ and a universal constant $C>0$, with probability at least $1-p_k^{-10}$. Here, the exponent $-10$ can also be changed by any negative constant $-c$, as long as $C>0$ in the upper bound in \eqref{eq:app_rho_tau_err} is chosen appropriately. By the definition of $\h{\rho}$ and $\h{\tau}$ in \eqref{eq:app_rho_tau} and a union bound over $1\leq k\leq K$, we know that with probability at least $1-\sum_{k=1}^Kp_k^{-10}$, 
    \begin{equation*}
        |\h{\rho}_{j,l}-\bE\h{\rho}_{j,l}|\leq C\max_k\sqrt{\frac{\log p_k}{n_k}},\quad |\h{\tau}_{j,l}-\bE\h{\tau}_{j,l}|\leq C\max_k\sqrt{\frac{\log p_k}{n_k}}.
    \end{equation*}
    
    To transform the bounds above to the bounds for the estimated correlation matrices, we note that $\Sigma_{j,l}=\sin\left(\frac{\pi}{2}\bE\h{\tau}_{j,l}\right)$, $\h{\Sigma}^{(\tau)}_{j,l}=\sin\left(\frac{\pi}{2}\h{\tau}_{j,l}\right)$, and the Lipschitz property of the sine function implies
    \begin{equation}\label{eq:app_Sigma_tau_err}
        |\h{\Sigma}^{(\tau)}_{j,l}-\Sigma^{(\tau)}_{j,l}|\leq \frac{\pi}{2}|\h{\tau}_{j,l}-\bE\h{\tau}_{j,l}|.
    \end{equation}
    While for $\h{\Sigma}^{(\rho)}$, as has been shown in the proof of Theorem 4.1 in \citep{liu2012high}, 
    \begin{equation*}
        \bE\h{\rho}^{(k)}_{j_k,l_k} = \frac{6}{\pi}\arcsin\left(\frac{\Sigma_{j,l}}{2}\right)+\frac{6}{\pi(n_k+1)}\left(\arcsin(\Sigma_{j,l}) - 3\arcsin\left(\frac{\Sigma_{j,l}}{2}\right)\right).
    \end{equation*}
    Denote by $a^{(k)}_{j,l}$ the excess error term:  $$a^{(k)}_{j,l}=\bE\h{\rho}^{(k)}_{j_k,l_k}-\frac{6}{\pi}\arcsin\left(\frac{\Sigma_{j,l}}{2}\right) = \frac{6}{\pi(n_k+1)}\left(\arcsin(\Sigma_{j,l}) - 3\arcsin\left(\frac{\Sigma_{j,l}}{2}\right)\right),$$ then we know that $|a^{(k)}_{j,l}| \leq \frac{12}{n_k+1}$. Thus, by the definition of $\widehat{\rho}$ in \eqref{eq:app_rho_tau}, 
    \begin{equation*}
         \bE\h{\rho}_{j,l} = \frac{6}{\pi}\arcsin\left(\frac{\Sigma_{j,l}}{2}\right)+\frac{\sum_{k=1}^K\ind{j,l\in V_k}a^{(k)}_{j,l}}{\sum_{k=1}^K\ind{j,l\in V_k}},
    \end{equation*}
    and 
    \begin{equation}\label{eq:app_Sigma_rho_err}
    \begin{split}
        |\h{\Sigma}^{(\rho)}_{j,l}-\Sigma^{(\rho)}_{j,l}| &= \left|2\sin\left(\frac{\pi}{6}\h{\rho}_{j,l}\right)-2\sin\left(\frac{\pi}{6}\bE\h{\rho}_{j,l}-\frac{\sum_k\ind{j,l\in V_k}a^{(k)}_{j,l}}{\sum_k\ind{j,l\in V_k}}\right)\right|\\
        &\leq \frac{\pi}{3}\left(|\h{\rho}_{j,l}-\bE\h{\rho}_{j,l}|+\frac{12}{\min_kn_k+1}\right).
    \end{split}
    \end{equation}
    Therefore, combining \eqref{eq:app_Sigma_rho_err}, \eqref{eq:app_Sigma_tau_err}, \eqref{eq:app_rho_tau_err}, the proof of Lemma \ref{lem:app_rank_corr_err} is completed.
\end{proof}

\end{document}